\begin{document} 
\author{Kristina Sojakova and Patricia Johann}

\title{A General Framework for Relational Parametricity}
\date{January 2018}

\maketitle
\newcommand{\Ccal}{\C}
\newcommand{\R}{\mathcal{R}}
\newcommand{\C}{\mathcal{C}}
\newcommand{\X}{\mathcal{X}}
\newcommand{\Y}{\mathcal{Y}}
\newcommand{\Z}{\mathcal{Z}}
\newcommand{\F}{\mathcal{F}}
\newcommand{\G}{\mathcal{G}}
\newcommand{\Gcal}{\G}
\newcommand{\E}{\mathcal{E}}
\newcommand{\B}{\mathcal{B}}
\newcommand{\Ucal}{\U}
\newcommand{\U}{\mathbb{U}}
\newcommand{\M}{\mathcal{M}}
\newcommand{\I}{\mathcal{I}}
\newcommand{\termobj}{\mathbf{1}}
\newcommand{\source}{\mathsf{s}}
\newcommand{\target}{\mathsf{t}}
\newcommand{\fm}{\mathbf{f}}
\newcommand{\dm}{\mathbf{d}}
\newcommand{\cm}{\mathbf{C}}
\newcommand{\nat}{\mathbb{N}}
\newcommand{\two}{\mathbf{Bool}}
\newcommand{\fst}{\mathsf{fst}}
\newcommand{\snd}{\mathsf{snd}}
\newcommand{\eval}{\mathsf{eval}}
\newcommand{\cat}{\mathsf{Cat}}
\newcommand{\rel}{\mathcal{R}}
\newcommand{\set}{\mathsf{Set}}
\newcommand{\inj}{\mathsf{i}}
\newcommand{\Ob}{\mathbf{Ob}}
\newcommand{\Mor}{\mathbf{Mor}}
\newcommand{\ctx}{\mathsf{Ctx}}
\newcommand{\Set}{\mathsf{Set}}
\newcommand{\Prop}{\mathsf{Prop}}
\newcommand{\sem}[1]{\ensuremath{[\![ #1 ]\!]}}
\newcommand{\grph}[1]{\ensuremath{\langle #1 \rangle}}
\newcommand{\cart}[1]{\ensuremath{#1^{\S}}}
\newcommand{\opcart}[1]{\ensuremath{#1_{\S}}}
\newcommand{\foldd}[3][]{\ensuremath{\mathit{fold}_{#1}[#2, #3]}}
\newcommand{\inn}[1][]{\ensuremath{\mathit{in}_{#1}}}
\newcommand{\id}{\mathsf{id}}
\newcommand{\Eq}{\mathsf{Eq}}
\newcommand{\eq}{\mathsf{eq}}
\newcommand{\expo}[2]{\ensuremath{#1 \Rightarrow #2}}
\newcommand{\internal}[1]{\ensuremath{\underline{#1}}}
\newcommand{\Id}{\mathsf{Id}}

\newtheorem{theorem}{Theorem}						
\newtheorem{definition}[theorem]{Definition}
\newtheorem{proposition}[theorem]{Proposition}
\newtheorem{lemma}[theorem]{Lemma}
\newtheorem{remark}[theorem]{Remark}
\newtheorem{example}[theorem]{Example}
\newtheorem{notation}[theorem]{Notation}
\newtheorem{corollary}[theorem]{Corollary}


\begin{abstract}
Reynolds' original theory of {\em relational parametricity} was
intended to capture the idea that polymorphically typed System F
programs preserve all relations between inputs. But as Reynolds
himself later showed, his theory can only be formalized in a
meta-theory with an impredicative universe, such as the Calculus of
Inductive Constructions. Abstracting from Reynolds' ideas, Dunphy and
Reddy developed their well-known framework for parametricity that
uses parametric limits in reflexive graph categories and aims to
subsume a variety of parametric models. As we observe, however, their
theory is not sufficiently general to subsume the very model that
inspired parametricity, namely Reynolds' original model, expressed inside type theory.

To correct this, we develop an abstract framework for relational
parametricity that generalizes the notion of a reflexive graph
categories and delivers Reynolds' model as a direct instance in a
natural way. This framework is uniform with respect to a choice of
meta-theory, which allows us to obtain the well-known PER model of
Longo and Moggi as a direct instance in a natural way as well. In
addition, we offer two novel relationally parametric models of System
F: \emph{i)} \emph{a categorical version of Reynolds' model}, where
types are functorial on isomorphisms and all polymorphic functions
respect the functorial action, and \emph{ii)} \emph{a proof-relevant
  categorical version of Reynolds' model} (after Orsanigo), where,
additionally, witnesses of relatedness are themselves suitably
related. We show that, unlike previously existing frameworks for
parametricity, ours recognizes both of these new models in a natural
way. Our framework is thus \emph{descriptive}, in that it accounts for
well-known models, as well as \emph{prescriptive}, in that it
identifies abstract properties that good models of relational
parametricity should satisfy and suggests new constructions of such
models.
\end{abstract}

\section{Introduction}

Reynolds~\cite{param_reynolds} introduced the notion of {\em
  relational parametricity} to model the extensional behavior of
programs in System F~\cite{systemf}, the formal calculus at the core
of all polymorphic functional languages. His goal was to give a type
$\alpha \vdash T(\alpha)$ an \emph{object interpretation} $T_0$ and a
\emph{relational interpretation} $T_1$, where $T_0$ takes sets to sets
and $T_1$ takes relations $R \subseteq A \times B$ to relations
$T_1(R) \subseteq T_0(A) \times T_0(B)$. A term $\alpha; x : S(\alpha)
\vdash t(\alpha,x) : T(\alpha)$ was to be interpreted as a map $t_0$
associating to each set $A$ a function $t_0(A) : S_0(A) \to
T_0(A)$. The interpretations were to be given inductively on the
structure of $T$ and $t$ in such a way that they implied two key
theorems: the \emph{Identity Extension Lemma}, stating that if $R$ is
the equality relation on $A$ then $T_1(R)$ is the equality relation on
$T_0(A)$; and the \emph{Abstraction Theorem}, stating that, for any
relation $R \subseteq A \times B$, $t_0(A)$ and $t_0(B)$ map arguments
related by $S_1(R)$ to results related by $T_1(R)$. A similar result
holds for types and terms with any number of free variables.

In Reynolds' treatment of relational parametricity, if $U(\alpha)$ is
the type $\alpha \vdash S(\alpha) \to T(\alpha)$, for example, then
$U_0(A)$ is the set of functions $f : S_0(A) \to T_0(A)$ and, for $R
\subseteq A \times B$, $U_1(R)$ relates $f : S_0(A) \to T_0(A)$ to $g
: S_0(B) \to T_0(B)$ iff $f$ and $g$ map arguments related by $S_r(R)$
to results related by $T_1(R)$. Similarly, if $V$ is the type $\cdot
\vdash \forall \alpha. S(\alpha)$, then $V_0$ consists of those
polymorphic functions $f$ that take a set $A$ and return an element of
$S_0(A)$, and also have the property that for any relation $R
\subseteq A \times B$, $f(A)$ and $f(B)$ are related by $S_1(R)$. Two
such polymorphic functions $f$ and $g$ are then related by $V_1$ iff
for any relation $R \subseteq A \times B$, $f(A)$ and $g(B)$ are
related by $S_1(R)$. These definitions allow us to deduce interesting
properties of (interpretations of) terms solely from their types. For
example, for any term $t : \forall \alpha. \alpha \to \alpha$, the
Abstraction Theorem guarantees that the interpretation $t_0$ of $t$ is
related to itself by the relational interpretation of $\forall
\alpha. \alpha \to \alpha$. So if we fix a set $A$, fix $a\in A$, and
define a relation on $A$ by $R \coloneqq \{(a,a)\}$, then $t_0(A)$
must be related to itself by the relational interpretation of $\alpha
\vdash \alpha \to \alpha$ applied to $R$. This means that $t_0(A)$
must carry arguments related by $R$ to results related by $R$. Since
$a$ is related to itself by $R$, $t_0(A) \,a$ must be related to
itself by $R$, so that $t_0(A) \, a$ must be $a$. That is, $t_0$ must
be the polymorphic identity function. Such applications of relational
parametricity are useful in many different scenarios, {\em e.g.}, when
proving invariance of polymorphic functions under changes of data
representation, equivalences of programs, and ``free
theorems''~\cite{wad87}.

The well-known problem with Reynolds' treatment of relational
parametricity (see \cite{not-set-theoretic}) is that the universe of
sets is not impredicative, and hence the aforementioned ``set'' $V_0$
cannot be formed. This issue can be resolved if we instead work in a
meta-theory that has an impredicative universe; a natural choice is an
extensional version of the Calculus of Inductive Constructions (CIC),
{\em i.e.}, a dependent type theory with a cumulative Russell-style
hierarchy of universes $\Ucal_0 : \Ucal_1 : \ldots$, where $\Ucal_0$
is impredicative, and extensional identity types. With this
adjustment, we have two
canonical relationally parametric models of System F: \emph{i)} the
PER model of Longo and Moggi~\cite{longo_moggi}, internal to the
theory of $\omega$-sets and realizable functions, and \emph{ii)}
Reynolds' original model\footnote{Since there are no set-theoretic
  models of System F, by the phrase ``Reynolds' original model'' we
  will always mean the version of his model that is internal to
  extensional CIC as described above. The need for impredicativity is
  inherited from Reynolds' original construction, and is not a new
  requirement.}, internal to CIC.

After Reynolds' original paper, more abstract treatments of his ideas
were given by, \emph{e.g.}, Robinson and Rosolini \cite{rr94}, O'Hearn
and Tennent \cite{oht95}, Dunphy and Reddy \cite{dr04}, and Ghani {\em
  et al.} \cite{param_johann}. The approach is to use a categorical
structure --- reflexive graph categories for \cite{rr94,oht95,dr04}
and fibrations for \cite{param_johann} --- to represent sets and
relations, and to interpret types as appropriate functors and terms as
natural transformations. In particular,~\cite{dr04} aims to
``[address] parametricity in all its incarnations'', and similarly for
\cite{param_johann}. Surprisingly and significantly, however,
Reynolds' original model does \emph{not} arise as a direct instance of
either framework. This leads us to ask:

\vspace*{0.005in}
\begin{center}
{\em What constitutes a \emph{good} framework for relational parametricity?}
\end{center}
\vspace*{0.005in}

\noindent Our answer is that such a framework should:

\vspace*{0.005in}
\begin{center}
\begin{enumerate}
\em\item Deliver a relationally parametric model for each
instantiation of its parameters, from which it uniformly produces such
models. In particular, it should allow a choice of a suitable
meta-theory (the Calculus of Inductive Constructions, the theory of
$\omega$-sets, etc.).
\item Admit the two canonical relationally parametric models mentioned
  above as direct instances in a natural, uniform way.
\item Abstractly formulate properties that good models of parametricity for System F
  should be expected to satisfy.
\end{enumerate}
\end{center}

Criterion 1 ensures that we indeed get a true {\em framework} rather
than just a reusable blueprint for constructing models of
parametricity. Criterion 2 remains unsatisfied for the frameworks of
Dunphy and Reddy and of Ghani {\em et al.} because Reynolds' original
model formulated syntactically does not satisfy certain strictness
conditions imposed by \cite{dr04,param_johann}. For example, let
$\alpha \vdash S(\alpha)$ and $\alpha \vdash T(\alpha)$ be two types,
with object interpretations $S_0$ and $T_0$ and relational
interpretations $S_1$ and $T_1$. The interpretation of the product
$\alpha \vdash S(\alpha) \times T(\alpha)$ should be an appropriate
product of interpretations; that is, the object interpretation should
map a set $A$ to $S_0(A) \times T_0(A)$ and the relational
interpretation should map a relation $R$ to $S_1(R) \times T_1(R)$,
with the product of two relations defined in the obvious way. For the
Identity Extension Lemma to hold, we need $S_1(\Eq(A)) \times
T_1(\Eq(A))$ to be the same as $\Eq(S_0(A) \times T_0(A))$. Here, the
equality relation $\Eq(A)$ on a set $A$ maps $(a,b) : A \times A$ to
the type $\Id(a,b)$ of proofs of equality between $a$ and $b$, so that
$a$ and $b$ are related iff $\Id(a,b)$ is inhabited, {\em i.e.}, iff
$a$ is identical to $b$. By the induction hypothesis, $S_1(\Eq(A))$ is
$\Eq(S_0(A))$, and similarly for $T$, so we need to show that
$\Eq(S_0(A)) \times \Eq(T_0(A))$ is $\Eq(S_0(A) \times T_0(A))$. But
this is not necessarily the case since the identity type on a product
is in general not \emph{identical} to the product of identity types,
but rather just suitably \emph{isomorphic}. So the interpretation of
$\alpha \vdash S(\alpha) \times T(\alpha)$ is not necessarily an
indexed or fibered functor (in the settings of \cite{dr04} and
\cite{param_johann}, respectively).

Three ways to fix this problem come to mind. Firstly, we can attempt
to change the meta-theory, by, \emph{e.g.}, imposing an additional
axiom asserting that two logically equivalent propositions are
definitionally equal. We do not pursue this approach here: the goal of
our framework is to \emph{directly subsume the important models in
  their natural meta-theories}, as per criteria 1 and 2 above, rather
than require the user to augment the meta-theory with {\em ad hoc}
axioms to make the shoe fit. The second possibility is to use the
syntactic analogue of strictification, pursued in, \emph{e.g.},
\cite{AGJ14}. The idea is that instead of interpreting a closed type
as a set $A$ (on the object level), we interpret it as a set $A$
endowed with a relation $R_A$ that is \emph{isomorphic}, but not
necessarily identical, to the canonical discrete relation $\Eq_A$. The
chosen equality relation on the set $A$ --- more precisely, on the
entire structure $(A,(R_A,i : R_A \simeq \Eq_A))$ --- will then be
$R_A$ rather than $\Eq_A$. This allows us to construct $R_A$ in a way
that respects all type constructors on the nose, so that the
aforementioned issue with $\Eq(S_0(A)) \times \Eq(T_0(A))$ not being
identical to $\Eq(S_0(A) \times T_0(A))$ is avoided. The problem,
however, is that there can be many different ways to endow $A$ with a
discrete relation $(R_A,i)$; in other words, the type of discrete
relations on $A$ is not contractible. It is thus unclear whether and
how this ``discretized'' version of Reynolds' model is equivalent to
the original, intended one.

Here we suggest a third approach: we record the isomorphisms
witnessing the preservation of the Identity Extension Lemma for each
type constructor, and propagate them through the construction. This
means, however, that we can no longer interpret a type $\alpha \vdash
T(\alpha)$ as a pair of maps $T_0 : |\Set| \to \Set$ and $T_1 : |\rel|
\to \rel$; indeed, since the domain of $T_1$ is the discrete category
$|\rel|$, $T_1$ is not required to preserve isomorphisms in
$\rel$. As a result, even if we know that the pair $(T_0,T_1)$
satisfies the Identity Extension Lemma, its reindexing --- defined by
precomposition --- might not. The upshot is that the obvious
``$\lambda$2-fibration'' corresponding to Reynolds' original model is
not necessarily a fibration at all.

We solve this problem by specifying subcategories $\M(0) \subseteq
\Set$ and $\M(1) \subseteq \rel$ of \emph{relevant isomorphisms} that
form a \emph{reflexive graph category with isomorphisms}. Abstractly,
this structure gives us two \emph{face maps} (called $\partial_0$ and
$\partial_1$ in \cite{dr04}), which represent the domain and codomain
projections, and a \emph{degeneracy} (called $I$ in \cite{dr04}),
which represents the equality functor. We interpret a type $\alpha
\vdash T(\alpha)$ as a pair of functors $T_0 : \M(0) \to \M(0)$ and
$T_1 : \M(1) \to \M(1)$ that together comprise a \emph{face map- and
  degeneracy-preserving reflexive graph functor}, and interpret each
term as a \emph{face map- and degeneracy-preserving reflexive graph
  natural transformation}.

Since the domain of $T_1$ is $\M(1)$, $T_1$ preserves all relevant
isomorphisms between relations, so the reindexing of $(T_0,T_1)$ is
now well-defined. Choosing $\M(1)$ to contain the isomorphism between
the two relations $\Eq(S_0(A)) \times \Eq(T_0(A))$ and $\Eq(S_0(A)
\times T_0(A))$ yields the satisfaction of the Identity Extension
Lemma for products; other type constructors follow the same
pattern. We note that although the preservation of isomorphisms on the
\emph{relation} level is sufficient to carry out the model
construction, we formally require the preservation of relevant
isomorphisms on the \emph{object} level, too. This makes the framework
more uniform and, moreover, leads to the novel notion of a
\emph{categorical Reynolds' model}, in which interpretations of types
are endowed with a functorial action on isomorphisms and all
polymorphic functions respect this action. Furthermore, we go one
level higher and use the ideas of Orsanigo \cite{orsanigo_thesis} (and
Ghani {\em et al.}~\cite{param_ghani}, which it supersedes) to define
a \emph{proof-relevant categorical Reynolds' model}, in which,
additionally, witnesses of relatedness are themselves suitably related
via a yet higher relation.

This ``2-parametric'' model of course does not arise as an instance of
our framework since it requires additional structure --- \emph{e.g.},
the concept of a \emph{2-relation} --- pertaining to the higher notion
of parametricity. Nevertheless, we would still like to be able to
recognize it as a model parametric in the ordinary sense. Various
definitions of parametricity for models of System F exist:
\cite{dr04,param_johann} are examples of ``internal'' approaches to
parametricity, where a model is considered parametric if it is
produced via a specified procedure that bakes in desired features of
parametricity such as the Identity Extension Lemma. On the other hand,
\cite{gfs16,mr92,rr94,jac99} are examples of ``external'' approaches
to parametricity, in which reflexive graphs of models are used to
endow models of interest with enough additional structure that they
can reasonably be considered parametric. Surprisingly though, the
proof-relevant model we give in Section~\ref{sec:proof-relevant} does
not appear to satisfy any of these definitions, and in particular does
not satisfy any of the external ones. The ability to construct a
classifying reflexive graph seems to rely on an implicit assumption of
proof-irrelevance, which we elaborate on in
Section~\ref{sec:proof-relevant}. However, we propose a new definition
of a \emph{relationally parametric model of System F} in
Section~\ref{sec:main} and show that it subsumes not only the two
canonical parametric models of System F, but also the two novel ones
we give in this paper. In particular, it subsumes the proof-relevant
model given in Section~\ref{sec:proof-relevant}. \smallskip

\noindent The main contributions of this paper are as follows:
\emph{\begin{itemize}
\item We demonstrate that existing frameworks for the functorial
  semantics of relational parametricity for System F fail to directly
  subsume both canonical models of relational parametricity for System
  F.
\item We solve this problem by developing a {\em{good}} abstract
  framework for relational parametricity that allows a choice of
  meta-theory, delivers both canonical relationally parametric models
  of System F as direct instances in a uniform way, and exposes
  properties that good models of System F parametricity should be
  expected to satisfy, \emph{e.g.}, guaranteeing that interpretations
  of terms, not just types, suitably commute with the degeneracy.
 \item We give a novel definition of a \emph{parametric model of
   System F}, which is a hybrid of the external and internal
   approaches, and show that it subsumes both canonical models
   (expressed as instances of our framework).
 \item We give two novel relationally parametric models of System F
   --- one of which is proof-relevant and can be seen as parametric in
   a higher sense (``2-parametric'') --- and show that our definition
   recognizes both of these in a natural way, with the
   proof-irrelevant model arising as a direct instance of our
   framework.
\end{itemize}}

\paragraph{Fibrational Preliminaries}\label{sec:toolbox}
We give a brief introduction to fibrations, mainly to settle
notation. More details can be found in, {\em e.g.},~\cite{jac99}.

\begin{definition}
Let $U:\mathcal{E}\rightarrow \mathcal{B}$ be a functor.  A morphism
$g:Q\rightarrow P$ in $\mathcal{E}$ is \emph{cartesian} over
$f:X\rightarrow Y$ in $\mathcal{B}$ if $Ug=f$ and, for every
$g':Q'\rightarrow P$ in $\mathcal{E}$ with $Ug' = f \circ v$ for some
$v:UQ'\rightarrow X$, there is a unique $h:Q'\rightarrow Q$ with
$Uh=v$ and $g' = g \circ h$.  A functor $U:\mathcal{E}\rightarrow
\mathcal{B}$ is a \emph{fibration} if, for every object $P$ of
$\mathcal{E}$ and morphism $f:X\rightarrow UP$ of $\mathcal{B}$, there
is a cartesian morphism in $\mathcal{E}$ with codomain $P$ over $f$.
\end{definition}

If $U:\mathcal{E}\rightarrow \mathcal{B}$ is a fibration then
$\mathcal{E}$ is its \emph{total category} and $\mathcal{B}$ is its
\emph{base category}. An object $P$ in $\mathcal{E}$ is \emph{over}
its image $UP$, and similarly for morphisms. A morphism is {\em
  vertical} if it is over an identity morphism.  We write
$\mathcal{E}_X$ for the \emph{fiber over} an object $X$ in
$\mathcal{B}$, {\em i.e.}, the subcategory of $\mathcal{E}$ of objects
over $X$ and morphisms over $\id_X$.

If $U : \E \to \B$ is a fibration, we call a cartesian morphism over
$f$ with codomain $P$ a {\em cartesian lifting} of $f$ with codomain
$P$ with respect to $U$. A cartesian lifting of $f$ with codomain $P$
with respect to $U$ need not be unique, but it is always unique up to
vertical isomorphism. We are interested in fibrations in which
representative cartesian liftings are specified, or chosen.

\begin{definition}\label{def:specified}
A fibration $U : \E \to \B$ {\em is cloven} if it comes with a {\em
  choice} of cartesian liftings, {\em i.e.}, with one cartesian
lifting of $f$ with codomain $P$ with respect to $U$ regarded as
primary amongst all such cartesian liftings for each morphism $f$ in
$\B$ and object $P$ in $\E$.
\end{definition}

We emphasize that the choice of cartesian liftings is part of the
structure that is given when a fibration is cloven. In this case one uses the phrase ``the cartesian lifting'' of $f$ with codomain $P$ to refer to the chosen such lifting, which we denote by
$\cart{f}_P$. Any time we consider categorical objects ({\em e.g.},
categories, functors, etc.)  with particular structures ({\em e.g.},
products, adjoints, etc.) in this paper, we intend that those
structures are chosen in this sense.

The function mapping each object $P$ of $\mathcal{E}$ to the domain
$f^*P$ of $\cart{f}_P$ then extends to a functor $f^*: \mathcal{E}_{Y}
\rightarrow \mathcal{E}_{X}$ mapping each morphism $k:P\rightarrow P'$
in $\mathcal{E}_{Y}$ to the unique morphism $f^*k$ such that $k \circ
\cart{f}_{P} = \cart{f}_{P'} \circ f^*k$.  The universal property of
$\cart{f}_{P'}$ ensures the existence and uniqueness of $f^*k$.  We
call $f^*$ the \emph{substitution functor along $f$}.  We will be
especially interested in cloven fibrations whose substitution functors
are well-behaved:

\begin{definition}
A cloven fibration $U : \E \to \B$ is {\em split} if its substitution
functors are such that $\id^*= \id$ and $(g \circ f)^* = f^* \circ
g^*$.
\end{definition}

\noindent We will later require even more structure of our split fibrations:

\begin{definition}
A split fibration $U : \E \to \B$ has a {\em split generic object} if
there is an object $\Omega$ in $\B$, together with a collection of
isomorphisms $\theta_X$ mapping each morphism from $X$ to $\Omega$ in
$\B$ to an object of the fiber $\E_X$ that is natural in $X$, {\em
  i.e.}, is such that $\theta_Y(fg) = g^*(\theta_X(f))$ for every $f :
X \to \Omega$ and $g : Y \to X$.
\end{definition}

Seely~\cite{see87} gave a sound categorical semantics of System F in
$\lambda 2$-fibrations (presented as PL-categories). We will make good
use of this result below.


\section{Reflexive Graph Categories}\label{sec:rg-categories}
Although Reynolds himself showed that his original approach to
relational parametricity does not work in set theory, we can still use
it as a guide for designing an abstract framework for
parametricity. Instead of sets and relations, we consider abstract
notions of ``sets'' and ``relations'', and require them to be
related as follows: \emph{i)} for any relation $R$, there are two
canonical ways of projecting an object out of $R$, corresponding to
the domain and codomain operations, \emph{ii)} for any object $A$,
there is a canonical way of turning it into a relation, corresponding
to the equality relation on $A$, and \emph{iii)} if we start with an
object $A$, turn it into a relation according to \emph{ii)}, and then
project out an object according to \emph{i)}, we get $A$ back. This
suggests that our abstract relations and the canonical operations on
them can be organized into a reflexive graph structure: categories
$\X_0$, $\X_1$ and functors $\fm_\top, \fm_\bot : \X_1 \to \X_0$, $\dm
: \X_0 \to \X_1$ such that $\fm_\top \circ \dm = \id = \fm_\bot \circ
\dm$, as is done in \cite{dr04}.

Since there are no set-theoretic models of System F
(\cite{not-set-theoretic}), all of the reflexive graph structure
identified above must to be internal to some ambient category
$\Ccal$. In particular, $\X_0$ and $\X_1$ must be categories internal
in $\Ccal$, and $\fm_\top$, $\fm_\bot$, and $\dm$ must be functors
internal in $\Ccal$. For Reynolds' original model, the ambient
category has types $A : \Ucal_1$ as objects and terms $f : \Sigma_{A,B
  : \Ucal_1}\,A \to B$ as morphisms. Here, $\Ucal_1$ is the universe
one level above the impredicative universe $\Ucal_0$; we will denote $\Ucal_0$ simply by $\Ucal$ below. This ensures that $\Ucal$ is
an object in $\Ccal$. To model relations, we introduce:
\begin{align*}
& \mathsf{isProp}(A) \coloneqq \Pi_{a,b:A} \, \Id(a,b) \\
& \mathsf{Prop} \coloneqq \Sigma_{A : \Ucal}\,\mathsf{isProp}(A)
\end{align*}
The type $\Prop$ of \emph{propositions} singles out those types in
$\Ucal$ with the property that any two inhabitants, if they exist, are
equal. Propositions can be used to model relations as follows: in
Reynolds' original model, $a : A$ is related to $b : B$ in at most one
way under any relation $R$ (either $(a,b) \in R$ or not), so the type
of proofs that $(a,b) \in R$ is a proposition. Conversely, given $R :
A \times B \to \mathsf{Prop}$, we consider $a$ and $b$ to be related
by $R$ iff $R(a,b)$ is inhabited.

To see the universe $\Ucal$ as a category $\Set$ internal to $\Ccal$
we take its object of objects $\Set_0$ to be $\Ucal$ and define its
object of morphisms by $\Set_1 \coloneqq \Sigma_{A,B : \Ucal}\,A \to
B$. We define the category $\mathsf{R}$ of relations by giving its
objects $\mathsf{R}_0$ and $\mathsf{R}_1$ of objects and morphisms,
respectively:
\begin{align*}
& \mathsf{R}_0 \coloneqq \Sigma_{A,B:\Set}~A \times B \to \Prop \\
& \mathsf{R}_1 \coloneqq \Sigma_{((A_1,A_2),R_A),((B_1,B_2),R_B) :
    \mathsf{R}_0} \Sigma_{(f,g) :  (A_1 \to B_1) \times (A_2 \to B_2)}
  \\  
& \;\;\;\;\;\;\;\;\; \Pi_{(a_1,a_2):A_1 \times A_2} R_A(a_1,a_2) \to
  R_B(f(a_1),g(a_2))   
\end{align*}

We clearly have two internal functors from $\mathsf{R}$ to $\Set$
corresponding to the domain and codomain projections, respectively. We also have an
internal functor $\Eq$ from $\Set$ to $\mathsf{R}$ that constructs an
equality relation with $\Eq \,A \coloneqq ((A,A),\Id_A)$ and $\Eq \,
((A,B),f) \coloneqq \big((\Eq \,A,\Eq\,B),(f,f),\mathsf{ap}_f\big)$.
Here, the term $\mathsf{ap}_f : \Id_A (a_1,a_2) \to \Id_B(f(a_1),
f(a_2))$ is defined as usual by $\Id$-induction and witnesses the
fact that $f$ respects equality.

These observations motivate the next two definitions, in which we
denote the category of categories and functors internal to $\Ccal$ by
$\mathsf{Cat}(\Ccal)$, and assume $\Ccal$ is locally small and has all
finite products. (A category is {\em locally small} if each of its
hom-sets is small, {\em i.e.}, is a set rather than a proper
class.) 
\begin{definition}\label{def:rg}
A \emph{reflexive graph structure} $\X$ on a category $\Ccal$ consists of:
\begin{itemize}
\item objects $\X(0)$ and $\X(1)$ of $\Ccal$
\item distinct arrows $\X(\fm_\star) : \X(1) \to \X(0)$ for $\star : \two$
\item an arrow $\X(\dm) : \X(0) \to \X(1)$
\end{itemize}
such that $\X(\fm_\star) \circ \X(\dm) = \id$.
\end{definition}
The requirement that the two face maps $\X(\fm_\top)$ and
$\X(\fm_\bot)$ are distinct is to ensure that there are enough
relations for the notion of relation-preservation to be
meaningful. Otherwise, as also observed in \cite{dr04}, we could see
\emph{any} category $C$ as supporting a trivial reflexive graph
structure whose only relations are the equality ones. For readers
familiar with \cite{jac99}, the condition $\X(\fm_\top) \neq
\X(\fm_\bot)$ serves a purpose similar to that of the requirement in
Definition 8.6.2 of \cite{jac99} that the fiber category
$\mathbb{F}_1$ over the terminal object in $\mathbb{C}$ is the
category of relations in the preorder fibration $\mathbb{D} \to
\mathbb{E}$ on the fiber category $\mathbb{E}_1$ over the terminal
object in $\mathbb{B}$. Both conditions imply that some relations must
be \emph{heterogeneous}. But while in \cite{jac99} relations are
obtained in a standard way as predicates (given by a preorder
fibration) over a product, we do not assume that relations are
constructed in any specific way, but rather only that the abstract
operations on relations suitably interact. Moreover, since the two
face maps $\X(\fm_\top)$ and $\X(\fm_\bot)$ are distinct, any morphism
generated by the face maps and the degeneracy $\X(\dm)$ must be one of
the seven distinct maps $\id_{\X(0)},\id_{\X(1)}, \X(\fm_\star),
\X(\dm)$, and $\X(\dm) \circ \X(\fm_\star)$ for $\star : \two$. Every
such morphism thus has a canonical representation.
 
\begin{definition}\label{def:rg-category}
A \emph{reflexive graph category (on $\Ccal$)} is a reflexive graph
structure on $\mathsf{Cat}(\Ccal)$.
\end{definition}
\begin{example}[PER model]\label{ex:per}
We take the ambient category $\Ccal$ to be the category of
$\omega$-sets, given in~Definition~6.3 of~\cite{longo_moggi}. We
construct a reflexive graph category, which we call
$\mathcal{R}_\mathit{PER}$, as follows. The internal category
$\mathcal{R}_\mathit{PER}(0)$ of ``sets'' is the category
$\mathbf{M}'$ given in Definition 8.4
of~\cite{longo_moggi}. Informally, the objects of $\mathbf{M}'$ are
partial equivalence relations on $\nat$, and the morphisms are
realizable functions that respect such relations. To define the
internal category $\mathcal{R}_\mathit{PER}(1)$ of ``relations'', we
first construct its object of objects. The carrier of this
$\omega$-set is the set of pairs of the form $R \coloneqq
((A_\mathsf{d}, A_\mathsf{c}), R_A)$, where $A_\mathsf{d}$ and
$A_\mathsf{c}$ are partial equivalence relations and $R_A$ is a
saturated predicate on the product PER $A_\mathsf{d} \times
A_\mathsf{c}$. A saturated predicate on a PER $A$ is a predicate on
$\nat$ such that $a_1 \sim_A a_2$ and $R(a_1)$ imply $R(a_2)$. To
finish the construction of our object of objects for
$\mathcal{R}_\mathit{PER}(1)$ we take any pair $((A_\mathsf{d},
A_\mathsf{c}), R_A)$ as above to be realized by any natural number.

The carrier of the object of morphisms for
$\mathcal{R}_\mathit{PER}(1)$ comprises all pairs of the form
\[\big(\big(((A_\mathsf{d},A_\mathsf{c}),R_A), ((B_\mathsf{d},B_\mathsf{c}),R_B)\big),
\big(\{m_1\}_{A_\mathsf{d} \to B_\mathsf{d}}, \{m_2\}_{A_\mathsf{c} \to B_\mathsf{c}}\big)\big)\]
satisfying the condition that, for any $k$, $l$ such that $k \sim_{A_\mathsf{d}}\! k$, $l \sim_{A_\mathsf{c}}\! l$, and $R_A(\langle k,l \rangle)$ holds, $R_B\big(\langle m_1 \cdot k, m_2
\cdot l \rangle\big)$ holds as well. The first component records the
domain and codomain of the morphism and the second component is a pair
of equivalence classes under the specified exponential PERs. As in
\cite{longo_moggi}, we denote the application of the $n^{\mathit{th}}$
partial recursive function to a natural number $a$ in its domain by $n
\cdot a$. To finish the construction of the object of morphisms for
$\mathcal{R}_\mathit{PER}(1)$, we take a pair of pairs as above to be
realized by a natural number $k$ iff $\fst(k) \sim_{A_\mathsf{d} \to B_\mathsf{d}} m_1$
and $\snd(k) \sim_{A_\mathsf{c} \to B_\mathsf{c}} m_2$.

We again have two internal functors
$\mathcal{R}_\mathit{PER}(\fm_\top)$ and
$\mathcal{R}_\mathit{PER}(\fm_\bot)$ from
$\mathcal{R}_\mathit{PER}(1)$ to $\mathcal{R}_\mathit{PER}(0)$ corresponding to the two projections. We
also have an equality functor $\mathsf{Eq}$ from
$\mathcal{R}_\mathit{PER}(0)$ to $\mathcal{R}_\mathit{PER}(1)$ whose
action on objects is given by $\mathsf{Eq}\,A \coloneqq ((A,A),\Delta_A)$, where $\Delta_A(k)$ iff $\fst(k) \sim_A \snd(k)$,
and whose action on morphisms is given by
\[ \mathsf{Eq}\,((A,B),\{m\}_{A \to B}) \coloneqq \big((\Eq\,A,\Eq\,B), (\{m\}_{A \to B},\{m\}_{A \to B})\big) \] 
\end{example}
\begin{example}[Reynolds' model]\label{ex:reynolds}
We obtain a reflexive graph category $\mathcal{R}_\mathit{REY}$ by
taking $\mathcal{R}_\mathit{REY}(0) := \set$,
$\mathcal{R}_\mathit{REY}(1) := \mathsf{R}$, and
$\mathcal{R}_\mathit{REY}(\dm) \coloneqq \Eq$, and letting
$\mathcal{R}_\mathit{REY}(\fm_\top)$ and
$\mathcal{R}_\mathit{REY}(\fm_\bot)$ be the functors corresponding to
the domain and codomain projections, respectively.
\end{example}
If $\X$ is a reflexive graph category, then the discrete graph
category $|\X|$ and the product reflexive graph category $\X^n$ for $n
\in \nat$ are defined in the obvious ways: $|\X(l)|$ has the same
objects as $\X(l)$ but only the identity morphisms, and $(\X \times
\Y)(l) = \X(l) \times \Y(l)$ for $l \in \{0,1\}$. For the latter, the
product on the right-hand side is a product of internal categories,
which exists because $\Ccal$ has finite products by assumption.

If $C$ is an internal category, we denote by $C_0$ and $C_1$ the objects of $\C$ representing the objects and morphisms of $C$, respectively. If $F : C \to D$ is an internal functor, we denote by $F_0 : C_0 \to D_0$ and $F_1 : C_1 \to D_1$ the arrows of $\C$ representing the object and morphisms parts of $F$, respectively. Also:

\begin{notation}
We will use the following notation with respect to an internal category $C$ in $\C$:
\begin{itemize}
\item Given a ``generalized object" $a : J \to C_0$ (with $J$ arbitrary), we denote by $\id_C[a]$ the arrow $\mathsf{id}_C \circ a$, where $\mathsf{id}_C : C_0 \to C_1$ is the arrow representing identity morphisms in $C$. 

\item For a ``generalized morphism"  $f : J \to C_1$ (with $J$ arbitrary), we denote by $\source_C[f]$ and $\target_C[f]$ the arrows $\source_C \circ f$ and $\target_C \circ f$ respectively, where $\source_C, \target_C : C_1 \to C_0$ are the arrows representing the source and target operations in $C$.

\item For generalized morphisms $f,g : J \to C_1$ such that $\target_C[f] = \source_C[g]$, we denote by $g \circ_C f$ the arrow $\mathsf{comp}_C \circ \langle f, g\rangle$, where $\mathsf{comp}_C : \textit{pullback}(\target_C,\source_C)\to C_1$ is the arrow representing composition in $C$, its domain $\textit{pullback}(\target_C,\source_C)$ is the pullback of the two arrows $\target_C$ and $\source_C$, and $\langle f, g\rangle$ is the canonical morphism into this pullback.

\item We say that $f : J \to C_1$ is an isomorphism if there exists a $g : J \to C_1$ such that $\source_C[f] = \target_C[g]$, $\source_C[g] = \target_C[f]$ and $f \circ_C g = \id_C[\source_C[g]]$, $g \circ_C f = \id_C[\source_C[f]]$. If such a $g$ exists, it is necessarily unique and hence will be denoted by $f^{-1}$.   
\end{itemize}
\end{notation}

Given a reflexive graph category $\X$ axiomatizing the sets and
relations, an obvious first attempt at pushing Reynolds' original idea
through is to take the interpretation $\sem{T}$ of a type
$\overline{\alpha} \vdash T$ with $n$ free type variables to be a pair
$(\sem{T}(0), \sem{T}(1))$, where $\sem{T}(0) : |\X(0)|^n \to \X(0)$
and $\sem{T}(1) : |\X(1)|^n \to \X(1)$ are functions giving the
``set'' and ``relation'' interpretations of the type $T$. Although as
explained in the introduction, this approach will need some tweaking
--- we will need to endow $\sem{T}(0)$ and $\sem{T}(1)$ with actions
on \emph{some} morphisms --- it suggests:

\begin{definition}\label{def:rg-functor}
Let $\X$ and $\Y$ be reflexive graph categories. A {\em reflexive
  graph functor} $\F : \X \to \Y$ is a pair $(\F(0),\F(1))$ of
functors such that $\F(0): \X(0) \to \Y(0)$ and $\F(1):\X(1) \to
\Y(1)$.
\end{definition}
Writing $T_0$ for $\sem{T}(0)$ and $T_1$ for $\sem{T}(1)$, we recall
from the introduction that $T_0$ and $T_1$ should be appropriately
related via the domain and codomain projections and the equality
functor. Since the two face maps $\X(\fm_\star)$ now model the
projections, and the degeneracy $\X(\dm)$ models the equality functor,
we end up with the following conditions:
\begin{itemize}
\item[\emph{i)}] for each object $\overline{R}$ in $\X(1)^n$, we have $\X(\fm_\star) \,
T_1(\overline{R}) = T_0(\X(\fm_\star)^n \, \overline{R})$
\item[\emph{ii)}] for each object $\overline{A}$ in $\X(0)^n$, we have
$\X(\dm) \, T_0(\overline{A}) = T_1(\X(\dm)^n \, \overline{A})$
\end{itemize}

We examine what these conditions imply for Reynolds' model by considering
the product $\alpha \vdash S(\alpha) \times T(\alpha)$ of two types
$\alpha \vdash S(\alpha)$ and $\alpha \vdash T(\alpha)$.  By the
induction hypothesis, $S$ and $T$ are interpreted as pairs $(S_0,S_1)$
and $(T_0,T_1)$, where $S_0,T_0 : \Set_0 \to \Set_0$ and $S_1,T_1 :
\mathsf{R}_0 \to \mathsf{R}_0$ satisfy \emph{i)} and \emph{ii)}. The
interpretation of a product should be a product of interpretations,
{\em i.e.}, $(S \times T)_0\,A \coloneqq S_0(A) \times T_0(A)$ and $(S
\times T)_1 \,R \coloneqq S_1(R) \times T_1(R)$. It remains to be seen that this interpretation satisfies \emph{i)} and
\emph{ii)}. Fix a relation $R$ on $A$ and $B$. Condition \emph{i)} entails that
$S_1(R) \coloneqq ((S_0(A),S_0(B)),R_S)$ and $T_1(R) \coloneqq
((T_0(A),T_0(B)),R_T)$ for some $R_S$ and $R_T$. Thus $S_1(R) \times
T_1(R)$ has the form $\big((S_0(A) \times T_0(A), S_0(B) \times
T_0(B)), R_{S \times T}\big)$, where $R_{S \times T}$ maps a pair of
pairs $((a,b),(c,d))$ to $R_S(a,c) \times R_T(b,d)$. Thus
\emph{i)} is satisfied simply by construction, which leads us to define:\begin{definition}\label{def:rgfmp}
 A reflexive graph functor $\F : \X \to \Y$ is {\em face map-preserving} if the following diagram in $\cat(\C)$ commutes for all $\star \in \two$:
\begin{center}
\scalebox{0.9}{
\begin{tikzpicture}
\node (N0) at (0,1.5) {$\X(1)$};
\node (N1) at (0,0) {$\X(0)$};
\node (N2) at (3,1.5) {$\Y(1)$};
\node (N3) at (3,0) {$\Y(0)$};
\draw[->] (N0) -- node[left]{$\X(\fm_\star)$} (N1);
\draw[->] (N0) -- node[above]{$\F(1)$} (N2);
\draw[->] (N1) -- node[below]{$\F(0)$} (N3);
\draw[->] (N2) -- node[right]{$\Y(\fm_\star)$} (N3);
\end{tikzpicture}}
\end{center}
\end{definition}

In Reynolds' model, condition \emph{ii)} gives that $S_1(\Eq(A))$ is
$\Eq(S_0(A))$ for any set $A$, and similarly for $T$. We thus need to
show that $\Eq(S_0(A)) \times \Eq(T_0(A))$ is $\Eq(S_0(A) \times
T_0(A))$. But while the domains and codomains of these two relations
agree (all are $S_0(A) \times T_0(A)$), the former maps
$((a,b),(c,d))$ to $\Id(a,c) \times \Id(b,d)$, while the latter maps
it to $\Id((a,b),(c,d))$. These two types are not necessarily identical, but they
are \emph{isomorphic} (\emph{i.e.}, there are functions back and forth
that compose to identity on both sides).

We thus relax condition \emph{ii)} to allow an isomorphism
$\varepsilon_T(\overline{A}) : \X(\dm) \, T_0(\overline{A}) \cong
T_1(\X(\dm)^n \, \overline{A})$. In fact, we can require more: since
the domains and codomains of $\X(\dm) \, T_0(\overline{A})$ and
$T_1(\X(\dm)^n \, \overline{A})$ coincide by condition \emph{i)}, we
can insist that both projections map the isomorphism
$\varepsilon_T(\overline{A})$ to the identity morphism on
$T_0(\overline{A})$. This coherence condition is a natural counterpart
to the equation $\X(\fm_\star) \circ \X(\dm) = \mathsf{id}$, and turns
out to be not just a design choice but a necessary requirement: in
Reynolds' model, for instance, the proof that the interpretations of
$\forall$-types (as defined later) suitably commute with the functor
$\Eq$ depends precisely on the morphisms underlying the maps
$\varepsilon_T(\overline{A})$ being identities. This suggests:

\begin{definition}\label{def:rgdp}
A reflexive graph functor $\F : \X \to \Y$ is {\em
  degeneracy-preserving} if the following diagram in $\cat(\C)$ commutes up to a given natural isomorphism $\varepsilon_\F$ satisfying the coherence condition $\Y(\fm_\star)_1 \circ \varepsilon_\F
= \id_{\Y(0)}[\F(0)_0]$ for $\star \in \two$:
\begin{center}
\scalebox{0.9}{
\begin{tikzpicture}
\node (N0) at (0,1.5) {$\X(0)$};
\node (N1) at (0,0) {$\X(1)$};
\node (N2) at (3,1.5) {$\Y(0)$};
\node (N3) at (3,0) {$\Y(1)$};
\draw[->] (N0) -- node[left]{$\X(\dm)$} (N1);
\draw[->] (N0) -- node[above]{$\F(0)$} (N2);
\draw[->] (N1) -- node[below]{$\F(1)$} (N3);
\draw[->] (N2) -- node[right]{$\Y(\dm)$} (N3);
\end{tikzpicture}}
\end{center}
\end{definition}

As a first approximation, we can try to interpret a type
$\overline{\alpha} \vdash T$ with $n$ free type variables as a face
map- and degeneracy-preserving reflexive graph functor $(T_0,T_1) :
|\X|^n \to \X$. Reynolds' original idea for interpreting terms
suggests that the interpretation of a term $\overline{\alpha}; x : S
\vdash t : T$ should be a (vacuously) natural transformation $t_0 :
S_0 \to T_0$. As observed in \cite{param_johann}, the Abstraction
Theorem can then be formulated as follows: there is a (vacuously)
natural transformation $t_1 : S_1 \to T_1$ such that, for any object
$\overline{R}$ in $\X(1)^n$, we have $\X(\fm_\star) \,
t_1(\overline{R}) = t_0(\X(\fm_\star)^n \, \overline{R})$. To see that
this does indeed give what we want, we revisit Reynolds' model.
There, the face maps are the domain and codomain projections and an
object $\overline{R}$ in $\X(1)^n$ is an $n$-tuple of relations.
Denote $\X(\fm_\top)^n \ \overline{R}$ by $\overline{A}$ and
$\X(\fm_\bot)^n \ \overline{R}$ by $\overline{B}$. Then
$t_1(\overline{R})$ is a morphism of relations from
$S_1(\overline{R})$ to $T_1(\overline{R})$ and, since $S_1$ and $T_1$
are face map-preserving, $S_1(\overline{R}) \coloneqq
\big((S_0(\overline{A}), S_0(\overline{B})), R_S\big)$ and
$T_1(\overline{R}) \coloneqq \big((T_0(\overline{A}),
T_0(\overline{B})), R_T\big)$ for some $R_S$ and $R_T$.  By
definition, $t_1(\overline{R})$ gives maps $f : S_0(\overline{A}) \to T_0(\overline{A})$, $g : S_0(\overline{B}) \to T_0(\overline{B})\big)$, together with a map $h : \Pi_{(a_1,a_2):
  S_0(\overline{A}) \times S_0(\overline{B})} R_S(a_1,a_2) \to
R_T(f(a_1),g(a_2))$ stating precisely that $f$ and $g$ map related
inputs to related outputs. By definition, $\X(\fm_\top) \, t_1(\overline{R})$
is $\big((S_0(\overline{A}), T_0(\overline{A})), f\big)$ and $\X(\fm_\bot) \,
t_1(\overline{R})$ is $\big((S_0(\overline{B}), T_0(\overline{B}), g\big)$, so the
condition that $\X(\fm_\star) \, t_1(\overline{R})$ is
$t_0(\X(\fm_\star)^n \,\overline{R})$ implies that the maps underlying
$t_0(\overline{A})$ and $t_0(\overline{B})$ must be $f$ and $g$,
respectively, and so must indeed map related inputs to related
outputs, as witnessed by $h$. Pairing the natural transformations $t_0$ and $t_1$
motivates:

\begin{definition}\label{def:rg-nat-trans}
Let $\F, \Gcal : \X \to \Y$ be reflexive graph functors. A
\emph{reflexive graph natural transformation} $\eta : \F \to \Gcal$ is
a pair $(\eta(0),\eta(1))$ of natural transformations $\eta(0): \F(0)
\to \Gcal(0)$ and $\eta(1):\F(1)\to\Gcal(1)$.
\end{definition} 
\noindent The Abstraction Theorem then further suggests defining:
\begin{definition}
A reflexive graph natural transformation $\eta : \F \to \Gcal$ between two face map-preserving reflexive graph functors is {\em face map-preserving} if for any $\star \in \two$ we have
\[ \Y(\fm_\star)_1 \circ \eta(1) = \eta(0) \circ \X(\fm_\star)_0 \]
\end{definition}
\noindent
The interpretation of a term $\overline{\alpha}; x : S \vdash t : T$
should then be a face map-preserving natural transformation from
$(S_0,S_1)$ to $(T_0,T_1)$. We also have the dual notion:
\begin{definition}\label{def:deg-pres-nat-trans}
A reflexive graph natural transformation $\eta : \F \to \Gcal$ between two degeneracy-preserving reflexive graph functors $(\F,\varepsilon_\F)$ and $(\G,\varepsilon_\G)$ is {\em degeneracy-preserving} if for any $\star \in \two$, we have
\[ (\eta(1) \circ \X(\dm)_0) \circ_{\Y(1)} \varepsilon_\F = \varepsilon_\G \circ_{\Y(1)} (\Y(\dm)_1 \circ \eta(0)) \]
Intuitively, the above equation represents the commutativity of the following diagram in the internal category $\Y(1)$:
\begin{center}
\scalebox{0.9}{
\begin{tikzpicture}
\node (N0) at (0,2) {$\Y(\dm)_0 \circ \F(0)_0$};
\node (N1) at (0,0) {$\Y(\dm)_0 \circ \G(0)_0$};
\node (N2) at (5,2) {$\F(1)_0 \circ \X(\dm)_0$};
\node (N3) at (5,0) {$\G(1)_0 \circ \X(\dm)_0$};
\draw[->] (N0) -- node[left]{$\Y(\dm)_1 \circ \eta(0)$} (N1);
\draw[->] (N0) -- node[above]{$\varepsilon_\F$} (N2);
\draw[->] (N1) -- node[below]{$\varepsilon_\G$} (N3);
\draw[->] (N2) -- node[right]{$\eta(1) \circ \X(\dm)_0$} (N3);
\end{tikzpicture}}
\end{center}
\end{definition}
There is no explicit analogue of
Definition~\ref{def:deg-pres-nat-trans} in Reynolds' model for the
following reason: Reynolds' model (as well as the PER model) is
proof-irrelevant, in the precise sense that the functor $\langle
\X(\fm_\bot), \X(\fm_\top) \rangle$ is faithful, and this condition is
sufficient to guarantee that \emph{any} face map-preserving natural
transformation is automatically degeneracy-preserving as well. This
may or may not be the case in proof-relevant models (although in the
model from Section~\ref{sec:proof-relevant} it is), so we explicitly
restrict attention below only to those natural transformations that
are face map- \emph{and} degeneracy-preserving (as also done in
\cite{dr04}), and omit further mention of these properties.

We have the usual laws of identity and composition of reflexive graph functors and natural transformations:

\begin{definition}\label{def:id_func}
Given a reflexive graph category $\X$, the identity reflexive graph functor $\mathsf{1}_\X : \X \to \X$ is defined as follows:
\begin{itemize}
\item $\mathsf{1}_\X(l)$ is the identity functor on $\X(l)$ 
\item $\varepsilon_{\mathsf{1}_\X} \coloneqq \id_{\X(1)}[\X(\dm)_0]$
\end{itemize}
\end{definition}

\begin{definition}\label{def:comp_func}
Given two reflexive graph functors $\F : \X \to \Y$ and $\G : \Y \to \Z$, let $\G \circ \F : \X \to \Z$ be the reflexive graph functor defined as follows:
\begin{itemize}
\item $(\G \circ \F)(l) \coloneqq \G(l) \circ \F(l)$ 
\item $\varepsilon_{\G \circ \F} \coloneqq (\G(1)_1 \circ \varepsilon_\F) \circ_{\Z(1)} (\varepsilon_{\G} \circ \F(0)_0)$
\end{itemize}
\end{definition}

\begin{definition}\label{def:id_nat}
Given a reflexive graph functor $\F : \X \to \Y$, the identity reflexive graph natural transformation $\mathsf{1}_\F : \F \to \F$ is defined by $\mathsf{1}_\F(l) = \id_{\Y(l)}[\F(l)_0]$.
\end{definition}

\begin{definition}\label{def:comp_nat}
Given reflexive graph functors $\F,\G,\mathcal{H} : \X \to \Y$ and reflexive graph natural transformations $\eta_1 : \F \to \G$ and $\eta_2 : \G \to \mathcal{H}$, let $\eta_2 \circ \eta_1 : \F \to \mathcal{H}$ be the reflexive graph natural transformation defined by $(\eta_2 \circ \eta_1)(l) \coloneqq \eta_2(l) \circ_{\Y(l)} \eta_1(l)$.
\end{definition}

\begin{definition}\label{def:comp_func_nat}
Given reflexive graph functors $\F : \X \to \Y$ and $\G_1,\G_2 : \Y \to \Z$, and a reflexive graph natural transformation $\eta : \G_1 \to \G_2$, let $\eta \circ \F : \G_1 \circ \F \to G_2 \circ \F$ be the reflexive graph natural transformation defined by $(\eta \circ \F)(l) \coloneqq \eta(l) \circ \F(l)_0$.
\end{definition}

\begin{definition}\label{def:comp_nat_func}
Given reflexive graph functors $\F_1,\F_2 : \X \to \Y$ and $\G : \Y \to \Z$, and a reflexive graph natural transformation $\eta : \F_1 \to \F_2$, let $\G \circ \eta : \G \circ \F_1 \to \G \circ \F_2$ be the reflexive graph natural transformation defined by $(\G \circ \eta)(l) \coloneqq \G(l)_1 \circ \eta(l)$.
\end{definition}

One basic example of a reflexive graph functor which will be used often and will end up interpreting type variables is the projection:
\begin{definition}
Given a reflexive graph category $\X$ and $0 \leq i < n$, the ``$i$-th projection" reflexive graph functor $\mathsf{pr}^n_i : \X^n \to \X$ is defined as follows:
\begin{itemize}
\item $\mathsf{pr}^n_i(l)$ is the internal functor projecting out the $i$-th component
\item $\varepsilon_{\mathsf{pr}^n_i} \coloneqq \id_{\X(1)}\big[\X(\dm)_0 \circ \mathsf{pr}^n_i(0)_0\big]$
\end{itemize}
\end{definition}
\noindent Dually, we have the following:

\begin{definition}\label{def:tuple_func}
Given reflexive graph functors $\F_0, \ldots, \F_{m-1} : \X \to \Y$, let $\langle \F_0, \ldots,$ $\F_{m-1}\rangle$ be the reflexive graph functor from $\X$ to $\Y^m$ defined as follows:
\begin{itemize}
\item $\langle \F_0, \ldots, \F_{m-1}\rangle(l) \coloneqq \langle \F_0(l), \ldots, \F_{m-1}(l)\rangle$
\item $\varepsilon_{\langle \F_0, \ldots, \F_{m-1}\rangle} \coloneqq \langle \varepsilon_{\F_0}, \ldots, \varepsilon_{\F_{m-1}}\rangle$
\end{itemize}
\end{definition}

\begin{definition}\label{def:tuple_nat}
Given reflexive graph functors $\F_0, \ldots, \F_{m-1}, \G_0, \ldots, \G_{m-1} : \X \to \Y$, and reflexive graph natural transformations $\eta_0 : \F_0 \to \G_0, \ldots, \eta_{m-1} : \F_{m-1} \to \G_{m-1}$, let $\langle \eta_0, \ldots, \eta_{m-1}\rangle : \langle \F_0, \ldots, \F_{m-1}\rangle \to \langle \G_0, \ldots, \G_{m-1}\rangle$ be the reflexive graph natural transformation defined by $\langle \eta_0, \ldots, \eta_{m-1}\rangle(l) \coloneqq \langle \eta_0(l), \ldots, \eta_{m-1}(l)\rangle$.
\end{definition}

\begin{lemma}\label{lem:2catprop}
We have the following properties:
\begin{enumerate}
\item The identity reflexive graph functor serves as the identity for the composition of reflexive graph functors.
\item The composition of reflexive graph functors is associative.
\item The identity reflexive graph natural transformation serves as the identity for the composition of reflexive graph natural transformations.
\item The composition of reflexive graph natural transformations is associative.
\item The composition $(-) \circ \F$ of a reflexive graph functor and a reflexive graph natural transformation is functorial. 
\item The composition $\G \circ (-)$ of a reflexive graph natural transformation and a reflexive graph functor is functorial.
\end{enumerate}
\end{lemma}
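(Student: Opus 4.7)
The approach is to observe that internal categories, internal functors, and internal natural transformations in $\C$ already assemble into a 2-category $\cat(\C)$, and that each of the six claims decomposes into a level-wise assertion at $l \in \{0,1\}$, which transfers to this 2-category, together with (only when reflexive graph functors are involved) a claim about the degeneracy isomorphism $\varepsilon$. Thus the real work lies in verifying the coherence equations for $\varepsilon$, which are short but nontrivial.

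For claims 3, 4, 5, and 6, the definitions of identity and composition of reflexive graph natural transformations (Definitions \ref{def:id_nat}--\ref{def:comp_nat_func}) are simply pairings level-wise, so each property follows immediately from the corresponding fact in $\cat(\C)$ applied componentwise at each $l$.

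For claim 1, given $\F : \X \to \Y$, the level-wise parts are immediate since $\mathsf{1}_\Y(l)$ and $\mathsf{1}_\X(l)$ are identity internal functors. For the coherence data, Definition \ref{def:comp_func} gives
\[
\varepsilon_{\mathsf{1}_\Y \circ \F} = (\mathsf{1}_\Y(1)_1 \circ \varepsilon_\F) \circ_{\Y(1)} (\varepsilon_{\mathsf{1}_\Y} \circ \F(0)_0) = \varepsilon_\F \circ_{\Y(1)} \id_{\Y(1)}[\Y(\dm)_0 \circ \F(0)_0] = \varepsilon_\F
\]
using Definition \ref{def:id_func} and the unit law for composition in the internal category $\Y(1)$; the computation for $\F \circ \mathsf{1}_\X$ is symmetric, using additionally that internal functors preserve identities.

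The main obstacle is claim 2 (associativity of functor composition). Given $\F : \X \to \Y$, $\G : \Y \to \Z$, $\mathcal{H} : \Z \to \mathcal{W}$, the plan is to unfold both $\varepsilon_{(\mathcal{H} \circ \G) \circ \F}$ and $\varepsilon_{\mathcal{H} \circ (\G \circ \F)}$ by two nested applications of Definition \ref{def:comp_func}. Each reduces to an iterated composite in $\mathcal{W}(1)$ built from the three arrows $\mathcal{H}(1)_1 \circ \G(1)_1 \circ \varepsilon_\F$, $\mathcal{H}(1)_1 \circ \varepsilon_\G \circ \F(0)_0$, and $\varepsilon_\mathcal{H} \circ \G(0)_0 \circ \F(0)_0$, but grouped differently. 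Equating the two groupings requires (i) the associativity of $\circ_{\mathcal{W}(1)}$ in the internal category $\mathcal{W}(1)$, and (ii) the preservation of composition by the internal functor $\mathcal{H}(1)$, used to absorb terms of the form $\mathcal{H}(1)_1 \circ (x \circ_{\Z(1)} y)$ into $(\mathcal{H}(1)_1 \circ x) \circ_{\mathcal{W}(1)} (\mathcal{H}(1)_1 \circ y)$. Both are standard properties of internal categories and functors in $\C$, so the identification goes through.
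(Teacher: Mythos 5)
Your proof is correct, and since the paper omits the proof of this lemma as routine, your level-wise verification via the 2-categorical structure of $\cat(\C)$ together with the explicit $\varepsilon$-computations for claims 1 and 2 is exactly the intended argument. The only ingredient you leave implicit is that internal composition $\circ_{C}$ commutes with precomposition by an ambient arrow (\emph{i.e.}, $(x \circ_{C} y) \circ z = (x \circ z) \circ_{C} (y \circ z)$), which is needed to unfold $\varepsilon_{\mathcal{H}\circ\G} \circ \F(0)_0$ in claim 2 and to establish functoriality in claim 5; this is the same standard fact the paper invokes in the proof of Lemma~\ref{lem:subst_prop}, so nothing is missing in substance.
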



\section{Reflexive Graph Categories with Isomorphisms}
As noted above, if we try to interpret a type $\overline{\alpha}
\vdash T$ as a reflexive graph functor $\sem{T} : \X^n \to \X$ we
encounter a problem with contravariance. Specifically, if $\alpha
\vdash A$ and $\alpha \vdash B$ are types, then to interpret the
function type $\alpha \vdash A \to B$ as the exponential of $\sem{A}$
and $\sem{B}$, $\sem{A \to B}(0)$ must map each object $X$ to the
exponential $(\sem{A}(0) \, X) \Rightarrow (\sem{B}(0) \, X)$ and each
morphism $f : X \to Y$ to a morphism from $(\sem{A}(0) \, X)
\Rightarrow (\sem{B}(0) \, X)$ to $(\sem{A}(0) \, Y) \Rightarrow
(\sem{B}(0) \, Y)$. But there is no canonical way to construct a morphism of this type because $\sem{A}(0) \,f$ goes in the wrong
direction. This is a well-known problem that is unrelated to parametricity. 

The usual solution is to require the domains of the functors
interpreting types to be discrete, so that $\sem{T} : |\X|^n \to
\X$. However, as noted in the introduction, this will not work in our
setting. Consider types $\alpha \vdash S(\alpha)$ and $\cdot \vdash
T$. By the induction hypothesis, $\sem{S} : |\X| \to \X$ and $\sem{T} : \mathsf{1} \to \X$ are face map- and degeneracy-preserving reflexive graph
functors. The interpretation of the type $\cdot \vdash S[\alpha
  \coloneqq T]$ should be given by the composition $\sem{S} \circ \sem{T} : \mathsf{1} \to \X$, which should be a face map- and degeneracy-preserving
functor. While preservation of face maps is easy to prove,
preservation of degeneracies poses a problem: writing $S_0$ and $S_1$ for $\sem{S}(0)$ and $\sem{S}(1)$, and similarly for $T$, we need $S_1(T_1)$ to be
isomorphic to the degeneracy $\dm(S_0(T_0))$. By assumption, $T_1$ is
isomorphic to the degeneracy $\dm(T_0)$, and $S_1(\dm(T_0))$ is
isomorphic to $\dm(S_0(T_0))$, so if we knew that $S_1$ mapped
isomorphic relations to isomorphic relations we would be done.
But since the domain of $S_1$ is $|\X(1)|$, there is no reason
that it should preserve non-identity isomorphisms of $\X(1)$.

In this paper we solve this contravariance problem in a different way.
We first note that the issue does not arise if $\sem{A}(0) \,f$ is an
isomorphism, even if that isomorphism is not the identity. This leads
us to require, for each $l \in \{0,1\}$, a wide subcategory $\M(l)
\subseteq \X(l)$ such that every morphism in $\M(l)$ is in fact an
isomorphism.

\begin{definition}\label{def:subrg}
Given a reflexive graph category $\X$, a \emph{reflexive graph
  subcategory} of $\X$ is a reflexive graph category $\M$ together
with a reflexive graph ``inclusion" functor $\I : \M \to \X$ such that
\begin{itemize}
\item $\I(l)_0$ and $\I(l)_1$ are monic for $l \in \{0,1\}$
\item $\I(0) \circ \M(\fm_\star) = \X(\fm_\star) \circ \I(1)$ for
  $\star \in \two$
\item $\I(1) \circ \M(\dm) = \X(\dm) \circ \I(1)$
\end{itemize}
The subcategory $(\M, \I)$ is \emph{wide} if $\I(l)_0$ is an isomorphism for $l \in \{0,1\}$.
\end{definition}
The last two conditions in Definition~\ref{def:subrg} guarantee that $\I$ preserves face maps and degeneracies on the nose. To simplify the presentation, we treat $\M(l)$ as a subcategory of $\X(l)$ and avoid explicit mentions of $\I$ unless otherwise indicated.

\begin{definition}\label{def:rgwi}
A {\em reflexive graph category with isomorphisms} is a reflexive
graph category $\X$ together with a wide reflexive graph subcategory
$(\M, \I)$ such that every morphism in $\M(l)$, $l \in \{0,1\}$, is
an isomorphism.
\end{definition}

We view $\M(l)$ as selecting the \emph{relevant isomorphisms} of $\X(l)$, in
the sense that a morphism of $\X(l)$ is relevant iff it lies in the image
of $\I(l)$. Given a reflexive graph category with isomorphisms $(\X,
(\M, \I))$ we can now interpret a type $\overline{\alpha} \vdash T$
with $n$ free type variables as a reflexive graph functor $\sem{T} :
\M^n \to \M$. It is important that $\sem{T}$ carries (tuples of) relevant
isomorphisms to relevant isomorphisms: if $\sem{T}$ were instead a functor
from $\M^n$ to $\X$, then it would not be possible to define
substitution (see Definition~\ref{def:subst}).

A trivial choice is to take $\M \coloneqq |\X|$. Then $\sem{T} :
|\X|^n \to |\X|$ and $\varepsilon_{\sem{T}}$ is necessarily the
identity natural transformation, so $\sem{T}$ preserves degeneracies
on the nose. This instantiation shows that, despite being motivated by
Reynolds' model, for which the Identity Extension Lemma holds only
up to isomorphism, our framework can also uniformly subsume strict
models of parametricity, for which the Identity Extension Lemma
holds on the nose.

\begin{example}[PER model, continued]\label{ex:per1}
We take $\M \coloneqq |\mathcal{R}_\mathit{PER}|$.
\end{example}

\begin{example}[A categorical version of Reynolds' model]\label{ex:crey1}
For each $l$, we take the objects of $\M(l)$ to be the objects of
$\mathcal{R}_\mathit{REY}(l)$, and the morphisms of $\M(l)$ to be
\emph{all} isomorphisms of $\mathcal{R}_\mathit{REY}(l)$. For example, the morphisms of $\M(0)$ are
\[\begin{array}{l}
\{(i,j) : \Set_1 \times \Set_1 \; \&\\
\hspace*{0.55in}i_\mathsf{d} = j_\mathsf{c} \,\times \, i_\mathsf{c} = j_\mathsf{d} \, \times \, j \circ i = \mathsf{id} \, \times \, i \circ j = \mathsf{id}
 \}
\end{array}\]
Here and at several places below we write $a = b$ for $\Id(a,b)$ and
$\{x : A \; \& \; B(x)\}$ for $\Sigma_{x:A}B(x)$ to enhance
readability. Moreover, $\circ$ and $\mathsf{id}$ are composition and
identity in the category $\Set$, and we use the subscripts
$(\cdot)_\mathsf{d}$ and $(\cdot)_\mathsf{c}$ to denote the domain and
codomain of a morphism. The first (or second) projection gives the
required mono from $\M(0)$ to $\Set_1$. We denote the resulting
reflexive graph category with isomorphisms by
$\mathcal{R}_\mathit{CREY}$.
\end{example}

\begin{example}[Reynolds' model, continued]\label{ex:rey1}
As mentioned in the introduction, to push the constructions through it
is sufficient to require preservation of isomorphisms on the relation
level only. This means that on the set level, we can take the relevant
isomorphisms to be just the identities, \emph{i.e.}, $\M(0) \coloneqq
|\mathcal{R}_\mathit{REY}(0)|$. On the relation level, we take the
objects of $\M(1)$ to be the objects of $\mathcal{R}_\mathit{REY}(1)$
--- {\em i.e.}, we take all relations --- and the morphisms of $\M(1)$
to be those isomorphisms of $\mathcal{R}_\mathit{REY}(1)$ whose images
under the two face maps are identities (this last condition is
necessary since face maps must preserve relevant
isomorphisms). Specifically, the morphisms of $\M(1)$ are
\[\begin{array}{l}
\{(i,j) : \mathsf{R}_1 \times \mathsf{R}_1 \; \&\\
\hspace*{0.55in}i_\mathsf{d} = j_\mathsf{c}
\,\times \, i_\mathsf{c} = j_\mathsf{d} \, \times \, j \circ i = \mathsf{id} \, \times \, i \circ j = \mathsf{id} \, \times \, i_\top = \mathsf{id} \, \times \, i_\bot = \mathsf{id}  
 \}
\end{array}\]
Here, we use the subscripts $(\cdot)_\mathsf{\top}$ and
$(\cdot)_\mathsf{\bot}$ to denote the image of a morphism in
$\mathsf{R}_1$ under the corresponding face map.
\end{example}

With this infrastructure in place we can now interpret a term
$\overline{\alpha}; x:S \vdash t : T$ as a natural transformation from
$\I \circ \sem{S}$ to $\I \circ \sem{T}$. Importantly, the components
of such a natural transformation are drawn from $\X(l)$ (as witnessed
by post-composition with $\I$), rather than just $\M(l)$, as would be
the case if we interpreted $t$ as a natural transformation from
$\sem{S}$ to $\sem{T}$. In fact, this latter interpretation would not
even be sensible, since not every term gives rise to an isomorphism
(most do not).

\section{Cartesian Closed Reflexive Graph Categories With Isomorphisms}

We want to interpret a type context of length $n$ as the natural number $n$,
types with $n$ free type variables as reflexive graph functors from
$\M^n$ to $\M$, and terms with $n$ free type variables as natural
transformations between reflexive graph functors with codomain
$\X$. Following the standard procedure, we first define, for each $n$,
a category $\M^n \to \M$ to interpret expressions with $n$ free type
variables, and then combine these categories using the usual
Grothendieck construction. This gives a fibration whose fiber over $n$
is $\M^n \to \M$.

\begin{definition}\label{def:mn-to-m}
The category $\M^n \to \M$ is defined as follows:
\begin{itemize}
\item the objects are face map- and degeneracy-preserving reflexive graph functors from $\M^n$ to $\M$
\item the morphisms from $\F$ to $\G$ are the face map- and degeneracy-preserving reflexive graph natural transformations from $\I \circ \F$ to $\I \circ \G$
\end{itemize}
\end{definition}

If $\F$ and $\Gcal$ are degeneracy-preserving then $\I \circ \F$ and
$\I \circ \Gcal$ are as well, and it is therefore sensible to require
natural transformations between the latter two to be
degeneracy-preserving. To move between the fibers we need a notion of
substitution:

\begin{definition}\label{def:subst}
For any $m$-tuple $\mathbf{F}\coloneqq (\F_0,\ldots,\F_{m-1})$ of objects in $\M^n \to \M$, the functor $\mathbf{F}^*$ from $\M^m \to \M$ to $\M^n \to \M$ is defined by $\mathbf{F}^*(\G) \coloneqq \G \circ
\langle \F_0, \ldots, \F_{m-1}\rangle$ for objects and $\mathbf{F}^*(\eta) \coloneqq \eta \circ \langle \F_0, \ldots, \F_{m-1}\rangle$ for morphisms.
\end{definition}

When giving a categorical interpretation of System F, a category for interpreting type contexts is also required. Writing $\R$ for the tuple $(\X,(\M,\I))$, we define:

\begin{definition}\label{def:contexts}
The {\em category of contexts} $\ctx(\R)$ is given by:
\begin{itemize}
\item objects are natural numbers
\item morphisms from $n$ to $m$ are $m$-tuples of objects in $\M^n \to \M$
\item the identity $\id_n : n \to n$ has as its $i^\mathit{th}$   component the $i^\mathit{th}$ projection functor $\mathsf{pr}^n_i$
\item given morphisms $\mathbf{F} : n \to m$ and $\mathbf{G} = (\G_0,\ldots,\G_{k-1}) : m \to k$, the $i^\mathit{th}$ component of the composition $\mathbf{G} \circ \mathbf{F} : n \to k$ is $\mathbf{F}^*(G_i)$
\end{itemize}
\end{definition}
\noindent That this is indeed a category follows from the lemma below:

\begin{lemma}\label{lem:subst_prop}
We have the following:
\begin{enumerate}
\item[i)] For any morphism $\mathbf{F} = (\F_0, \ldots, \F_{m-1}) : n \to m$ in $\ctx(\rel)$ and $0 \leq i < m$, we have $\mathbf{F}^\star(\mathsf{pr}^m_i) = \F_i$.
\item[ii)] For any natural number $n$, $(\mathbf{1}_n)^\star$ is the identity functor on $|\rel|^n \to \rel$.
\item[iii)] For morphisms $\mathbf{F} : n \to m$, $\mathbf{G} : m \to k$ in $\ctx(\rel)$, we have $(\mathbf{G} \circ \mathbf{F})^\star = \mathbf{F}^\star \circ \mathbf{G}^\star$.
\end{enumerate}
\end{lemma}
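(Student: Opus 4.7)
The plan is to verify each clause by unfolding the definitions of tupling (Def.~\ref{def:tuple_func}), composition of reflexive graph functors (Def.~\ref{def:comp_func}), the identity reflexive graph functor (Def.~\ref{def:id_func}), and the projection reflexive graph functors, and then checking agreement at each of the three pieces of data that comprise a reflexive graph functor: the level-$0$ component, the level-$1$ component, and the degeneracy witness $\varepsilon$. Throughout, I would rely on the universal property of the product of internal categories (which exists since $\Ccal$ has finite products) to handle the object- and morphism-level components, and appeal to Lemma~\ref{lem:2catprop} for basic $2$-categorical manipulations.

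For clause \emph{i)}, I first observe that at each level $l \in \{0,1\}$, the internal functor $(\mathsf{pr}^m_i \circ \langle \F_0, \ldots, \F_{m-1}\rangle)(l) = \mathsf{pr}^m_i(l) \circ \langle \F_0(l), \ldots, \F_{m-1}(l)\rangle$ equals $\F_i(l)$ by the universal property of the product internal to $\Ccal$. It then remains to check the degeneracy witnesses agree. Unfolding Def.~\ref{def:comp_func} gives
\[ \varepsilon_{\mathsf{pr}^m_i \circ \langle \F_0, \ldots, \F_{m-1}\rangle} = \big(\mathsf{pr}^m_i(1)_1 \circ \varepsilon_{\langle \F_0, \ldots, \F_{m-1}\rangle}\big) \circ_{\M(1)} \big(\varepsilon_{\mathsf{pr}^m_i} \circ \langle \F_0, \ldots, \F_{m-1}\rangle(0)_0\big). \]
Since $\varepsilon_{\mathsf{pr}^m_i}$ is by definition the identity on $\M(\dm)_0 \circ \mathsf{pr}^m_i(0)_0$, the right factor is itself an identity and is absorbed by composition in $\M(1)$; the left factor, using Def.~\ref{def:tuple_func} together with the universal property of the product, reduces to $\varepsilon_{\F_i}$, as desired. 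The morphism part of the functor is handled by the same argument applied to a natural transformation $\eta$ in place of $\F_i$.

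For clause \emph{ii)}, every component of $\mathbf{1}_n$ is a projection $\mathsf{pr}^n_i$, and the tuple $\langle \mathsf{pr}^n_0, \ldots, \mathsf{pr}^n_{n-1}\rangle$ is the identity reflexive graph functor on $\M^n$: its $(l)$-components are identity internal functors by the universal property of the product, and its $\varepsilon$ is a tuple of identities, hence an identity by Def.~\ref{def:tuple_func}. Consequently, for any object $\G$ of $\M^n \to \M$, we have $(\mathbf{1}_n)^*(\G) = \G \circ \mathbf{1}_{\M^n} = \G$ by Lemma~\ref{lem:2catprop}(1), and analogously for morphisms by Lemma~\ref{lem:2catprop}(5).

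For clause \emph{iii)}, I prove the tupling equation
\[ \langle \G_0 \circ \langle \F_0, \ldots, \F_{m-1}\rangle, \ldots, \G_{k-1} \circ \langle \F_0, \ldots, \F_{m-1}\rangle \rangle = \langle \G_0, \ldots, \G_{k-1}\rangle \circ \langle \F_0, \ldots, \F_{m-1}\rangle \]
componentwise: the $i^\mathit{th}$ component of the right-hand side is $\mathsf{pr}^k_i \circ \langle \G_0, \ldots, \G_{k-1}\rangle \circ \langle \F_0, \ldots, \F_{m-1}\rangle$, which equals $\G_i \circ \langle \F_0, \ldots, \F_{m-1}\rangle$ by clause \emph{i)} applied to $\langle \G_0, \ldots, \G_{k-1}\rangle$. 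Given this identity of reflexive graph functors, and unfolding the definition of composition in $\ctx(\rel)$, we get $((\mathbf{G} \circ \mathbf{F})^*)(\mathcal{H}) = \mathcal{H} \circ \langle \G_0, \ldots, \G_{k-1}\rangle \circ \langle \F_0, \ldots, \F_{m-1}\rangle = (\mathbf{F}^* \circ \mathbf{G}^*)(\mathcal{H})$ by associativity of composition (Lemma~\ref{lem:2catprop}(2)), and similarly on morphisms via Lemma~\ref{lem:2catprop}(5). The main technical obstacle throughout is the bookkeeping for the degeneracy witnesses, whose composition formula mixes the categorical composition $\circ$ with the internal composition $\circ_{\M(1)}$; the key simplifying fact is that the $\varepsilon$'s of all projection functors are chosen to be identities, which collapses most of the needed calculations to the already-established universal properties of products and tupling.
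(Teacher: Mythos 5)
Your proof is correct, and for parts \emph{i)} and \emph{ii)} it matches the paper, which simply declares these parts easy to show. For part \emph{iii)}, however, you take a genuinely different route. The paper fixes $\mathcal{H}$, directly unfolds both $(\mathbf{G}\circ\mathbf{F})^\star(\mathcal{H})$ and $\mathbf{F}^\star(\mathbf{G}^\star(\mathcal{H}))$, disposes of the level-$l$ functor components quickly, and then verifies equality of the two degeneracy witnesses by an explicit three-step chain of equalities (componentwise composition in the product internal category, functoriality of $\mathcal{H}(1)$, and commutation of $\circ_{\M(1)}$ with precomposition in the ambient category). You instead factor the argument through the interchange law $\langle \G_0 \circ \langle \F_0,\ldots,\F_{m-1}\rangle, \ldots, \G_{k-1} \circ \langle \F_0,\ldots,\F_{m-1}\rangle\rangle = \langle \G_0,\ldots,\G_{k-1}\rangle \circ \langle \F_0,\ldots,\F_{m-1}\rangle$ and then invoke associativity of composition of reflexive graph functors via Lemma~\ref{lem:2catprop}(2). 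This is a modular repackaging of the paper's computation: step (1) of the paper's chain is your interchange law, and steps (2)--(3) are precisely the $\varepsilon$-content of associativity. Your decomposition buys reuse of an already-stated lemma, at the price of one observation you leave implicit: to conclude the interchange law ``componentwise'' from clause \emph{i)}, you need that a reflexive graph functor $H$ into $\M^k$ is determined, \emph{including its degeneracy witness} $\varepsilon_H$, by the composites $\mathsf{pr}^k_i \circ H$. This does hold --- since $\varepsilon_{\mathsf{pr}^k_i}$ is an identity one gets $\varepsilon_{\mathsf{pr}^k_i \circ H} = \mathsf{pr}^k_i(1)_1 \circ \varepsilon_H$, and the projections out of the morphism object of $\M(1)^k$ are jointly monic --- but it deserves a sentence; alternatively the interchange law can be checked by directly unfolding Definitions~\ref{def:tuple_func} and~\ref{def:comp_func}, which is what the paper's step (1) amounts to.
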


\begin{proof}
Parts $i)$, $ii)$ are easy to show. For part $iii)$ let $\mathbf{F} = (\F_0, \ldots, \F_{m-1})$ and $\mathbf{G} = (\G_0, \ldots, \G_{k-1})$. Fix an object $\mathcal{H}$ in $\M^k \to \M$. The first component of $(\mathbf{G} \circ \mathbf{F})^\star(\mathcal{H})$ is the reflexive graph functor whose component at level $l$ is
\begin{align*}
\mathcal{H}(l) \circ \Big\langle \G_0(l) \circ \langle \F_0(l), \ldots, \F_{m-1}(l) \rangle, \ldots, \G_{k-1}(l) \circ \langle \F_0(l), \ldots, \F_{m-1}(l)\rangle \Big\rangle
\end{align*}
On the other hand, the first component of $\mathbf{F}^\star(\mathbf{G}^\star(\mathcal{H}))$ is a reflexive graph functor whose component at level $l$ is
\begin{align*}
\mathcal{H}(l) \circ \langle \G_0(l), \ldots, \G_{k-1}(l) \rangle \circ \langle\F_0(l), \ldots, \F_{m-1}(l) \rangle
\end{align*}
which is clearly equal to the above. The second component of $(\mathbf{G} \circ \mathbf{F})^\star(\mathcal{H})$ is the morphism

\resizebox{.95\linewidth}{!}{
\begin{minipage}{\linewidth}
\begin{align*}
& \Big(\mathcal{H}(1)_1 \circ \Big\langle \big(\G_0(1)_1 \circ \langle \varepsilon_{\F_0}, \ldots, \varepsilon_{\F_{m-1}}\rangle\big) \circ_{\M(1)} \big(\varepsilon_{\G_0} \circ \langle \F_0(0)_0, \ldots, \F_{m-1}(0)_0\rangle\big), \ldots,  \\
& \;\;\;\;\;\;\;\;\; \big(\G_{k-1}(1)_1 \circ \langle \varepsilon_{\F_0}, \ldots, \varepsilon_{\F_{m-1}}\rangle\big) \circ_{\M(1)} \big(\varepsilon_{\G_{k-1}} \circ \langle \F_0(0)_0, \ldots, \F_{m-1}(0)_0\rangle\big)\Big\rangle\Big) \; \circ_{\M(1)} \\
& \;\;\; \Big(\varepsilon_\mathcal{H} \circ \Big\langle \G_0(0)_0 \circ \langle \F_0(0)_0, \ldots, \F_{m-1}(0)_0\rangle, \ldots, \\ 
& \;\;\;\;\;\;\;\;\;\;\;\;\;\;\;\;\;\;\;\;  \G_{k-1}(0)_0 \circ \langle \F_0(0)_0, \ldots, \F_{m-1}(0)_0\rangle\Big\rangle\Big)
\end{align*}
\end{minipage}}\bigskip

\noindent On the other hand, the second component of $\mathbf{F}^\star(\mathbf{G}^\star(\mathcal{H}))$ is the morphism

\resizebox{.95\linewidth}{!}{
  \begin{minipage}{\linewidth}
\begin{align*}
& \mathcal{H}(1)_1\Big(\Big\langle \G_0(1)_1(\langle \varepsilon_{\F_0}, \ldots, \varepsilon_{\F_{m-1}}\rangle), \ldots,
\G_{k-1}(1)_1(\langle \varepsilon_{\F_0}, \ldots, \varepsilon_{\F_{m-1}}\rangle)\Big\rangle\Big) \; \circ_{\M(1)} \\
& \;\;\; \bigg(\Big(\mathcal{H}(1)_1(\langle \varepsilon_{\G_0}, \ldots, \varepsilon_{\G_{k-1}}\rangle) \circ_{\M(1)} 
\big(\varepsilon_\mathcal{H} \circ \langle \G_0(0)_0, \ldots, \G_{k-1}(0)_0 \rangle\big)\Big) \; \circ \\ 
& \;\;\;\;\;\;\;\;\; \langle \F_0(0)_0, \ldots, \F_{m-1}(0)_0\rangle\bigg)
\end{align*}
\end{minipage}}\bigskip

We now have the chain of equalities in Figure~\ref{figEE}, where the first equality follows by definition of $\circ_{\M(1)^m}$; the second one follows by functoriality of $\mathcal{H}(1)$; and the third one follows since $\circ_{\M(1)}$ commutes with precomposition in the ambient category. We use the same color to denote rewriting of equal subexpressions. This finishes the proof that $(\mathbf{G} \circ \mathbf{F})^\star$ and $\mathbf{F}^\star \circ \mathbf{G}^\star$ agree on objects. The proof that they agree on morphisms is easy.
\end{proof}

\begin{figure*}[t!]
\resizebox{.95\linewidth}{!}{
\begin{minipage}{\linewidth}
\begin{align*}
& \Big(\mathcal{H}(1)_1 \circ {\color{red}\Big\langle \big(\G_0(1)_1 \circ \langle \varepsilon_{\F_0}, \ldots, \varepsilon_{\F_{m-1}}\rangle\big) \circ_{\M(1)} \big(\varepsilon_{\G_0} \circ \langle \F_0(0)_0, \ldots, \F_{m-1}(0)_0\rangle\big), \ldots,}  \\
& {\color{red}\;\;\;\;\;\;\;\;\; \big(\G_{k-1}(1)_1 \circ \langle \varepsilon_{\F_0}, \ldots, \varepsilon_{\F_{m-1}}\rangle\big) \circ_{\M(1)} \big(\varepsilon_{\G_{k-1}} \circ \langle \F_0(0)_0, \ldots, \F_{m-1}(0)_0\rangle\big)\Big\rangle}\Big) \; \circ_{\M(1)} \\
& \;\;\; \Big(\varepsilon_\mathcal{H} \circ {\color{teal} \Big\langle \G_0(0)_0 \circ \langle \F_0(0)_0, \ldots, \F_{m-1}(0)_0\rangle, \ldots,} \\
& \;\;\;\;\;\;\;\;\;\;\;\;\;\;\;\;\;\;\;\;  {\color{teal}\G_{k-1}(0)_0 \circ \langle \F_0(0)_0, \ldots, \F_{m-1}(0)_0\rangle\Big\rangle}\Big) \\
\stackrel{(1)}{=} \; & \Big(\mathcal{H}(1)_1 \circ {\color{red}\Big(\Big\langle \G_0(1)_1 \circ \langle \varepsilon_{\F_0}, \ldots, \varepsilon_{\F_{m-1}}\rangle, \ldots, \G_{k-1}(1)_1 \circ \langle \varepsilon_{\F_0}, \ldots, \varepsilon_{\F_{m-1}}\rangle \Big\rangle \; \circ_{\M(1)^n}} \\
& \;\;\;\;\;\;\; {\color{red}\big(\langle\varepsilon_{\G_0}, \ldots, \varepsilon_{\G_{k-1}}\rangle \circ \langle \F_0(0)_0, \ldots, \F_{m-1}(0)_0\rangle\big)\Big)}\Big) \; \circ_{\M(1)} \\
& \;\;\; \Big(\varepsilon_\mathcal{H} \circ {\color{teal} \langle \G_0(0)_0, \ldots, \G_{k-1}(0)_0\rangle \circ \langle \F_0(0)_0, \ldots, \F_{m-1}(0)_0\rangle}\Big) \\
\stackrel{(2)}{=} \; & \Big(\mathcal{H}(1)_1 \circ \Big\langle \G_0(1)_1 \circ \langle \varepsilon_{\F_0}, \ldots, \varepsilon_{\F_{m-1}}\rangle, \ldots, \G_{k-1}(1)_1 \circ \langle \varepsilon_{\F_0}, \ldots, \varepsilon_{\F_{m-1}}\rangle \Big\rangle\Big) \; \circ_{\M(1)} \\
& \;\;\; {\color{blue} \Big(\mathcal{H}(1)_1 \circ \langle\varepsilon_{\G_0}, \ldots, \varepsilon_{\G_{k-1}}\rangle \circ \langle \F_0(0)_0, \ldots, \F_{m-1}(0)_0\rangle\Big) \; \circ_{\M(1)}} \\
& \;\;\; {\color{blue} \Big(\varepsilon_\mathcal{H} \circ \langle \G_0(0)_0, \ldots, \G_{k-1}(0)_0\rangle \circ \langle \F_0(0)_0, \ldots, \F_{m-1}(0)_0\rangle\Big)} \\
\stackrel{(3)}{=} \; & \Big(\mathcal{H}(1)_1 \circ \Big\langle \G_0(1)_1 \circ \langle \varepsilon_{\F_0}, \ldots, \varepsilon_{\F_{m-1}}\rangle, \ldots,
\G_{k-1}(1)_1 \circ \langle \varepsilon_{\F_0}, \ldots, \varepsilon_{\F_{m-1}}\rangle\Big\rangle\Big) \; \circ_{\M(1)} \\
& {\color{blue} \;\;\; \bigg(\Big(\mathcal{H}(1)_1 \circ \langle \varepsilon_{\G_0}, \ldots, \varepsilon_{\G_{k-1}}\rangle \circ_{\M(1)} 
\big(\varepsilon_\mathcal{H} \circ \langle \G_0(0)_0, \ldots, \G_{k-1}(0)_0 \rangle\big)\Big) \; \circ} \\ 
&{\color{blue} \;\;\;\;\;\;\;\;\; \langle \F_0(0)_0, \ldots, \F_{m-1}(0)_0\rangle\bigg)}
\end{align*}
\end{minipage}}
\caption{Equalities for The Proof of Lemma~\ref{lem:subst_prop}}\label{figEE}
\end{figure*}

Defining the product $n \times 1$ in $\ctx(\rel)$ to be the natural
number sum $n + 1$, we see that $\ctx(\rel)$ enjoys sufficient
structure to model the construction of System F type contexts:

\begin{lemma}\label{lem:term-and-products}
The category $\ctx(\rel)$ has a terminal object $0$ and products $(-) \times 1$.  
\end{lemma}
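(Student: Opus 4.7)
The plan is to verify both claims by unpacking the definition of $\ctx(\rel)$ from Definition~\ref{def:contexts} and relying on Lemma~\ref{lem:subst_prop}(i) to handle the required equations involving substitution.

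For the terminal object, note that a morphism from $n$ to $0$ is, by definition, a $0$-tuple of objects of $\M^n \to \M$. There is exactly one such tuple, namely the empty tuple, so $0$ is terminal. The identity on $0$ is also the empty tuple, and the required composition laws hold vacuously.

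For products, I propose defining the projection $\pi_1^{n,1} : n+1 \to n$ as the $n$-tuple $(\mathsf{pr}^{n+1}_0, \ldots, \mathsf{pr}^{n+1}_{n-1})$ and the projection $\pi_2^{n,1} : n+1 \to 1$ as the $1$-tuple $(\mathsf{pr}^{n+1}_n)$. Given morphisms $\mathbf{F} = (\F_0, \ldots, \F_{n-1}) : k \to n$ and $\mathbf{G} = (\G_0) : k \to 1$, I would define the pairing $\langle \mathbf{F}, \mathbf{G}\rangle : k \to n+1$ to be the $(n+1)$-tuple $(\F_0, \ldots, \F_{n-1}, \G_0)$. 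To verify $\pi_1^{n,1} \circ \langle \mathbf{F}, \mathbf{G}\rangle = \mathbf{F}$, observe that by Definition~\ref{def:contexts} the $i^\mathit{th}$ component of the left-hand side is $\langle \mathbf{F}, \mathbf{G}\rangle^\star(\mathsf{pr}^{n+1}_i)$, which by Lemma~\ref{lem:subst_prop}(i) equals $\F_i$. The verification for $\pi_2^{n,1}$ is analogous.

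For uniqueness, suppose $\mathbf{H} = (\mathcal{H}_0, \ldots, \mathcal{H}_n) : k \to n+1$ satisfies $\pi_1^{n,1} \circ \mathbf{H} = \mathbf{F}$ and $\pi_2^{n,1} \circ \mathbf{H} = \mathbf{G}$. Computing the $i^\mathit{th}$ component of $\pi_1^{n,1} \circ \mathbf{H}$ via Definition~\ref{def:contexts} and Lemma~\ref{lem:subst_prop}(i) gives $\mathbf{H}^\star(\mathsf{pr}^{n+1}_i) = \mathcal{H}_i$, forcing $\mathcal{H}_i = \F_i$ for $0 \leq i < n$, and similarly $\mathcal{H}_n = \G_0$. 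Hence $\mathbf{H} = \langle \mathbf{F}, \mathbf{G}\rangle$. There is no real obstacle here: everything is a direct bookkeeping exercise once Lemma~\ref{lem:subst_prop}(i) is invoked, which is precisely why that lemma was set up beforehand. The only place one might pause is in writing out the projections as honest morphisms of $\ctx(\rel)$ — that is, checking the $\mathsf{pr}^{n+1}_i$ really are face map- and degeneracy-preserving — but this was already established when the projection reflexive graph functors were defined.
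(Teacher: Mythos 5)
Your proposal is correct and matches the paper's construction exactly: the paper likewise takes the product of $n$ and $1$ to be $n+1$ with the first projection given componentwise by $\mathsf{pr}^{n+1}_i$ and the second by $\mathsf{pr}^{n+1}_n$, leaving the universal-property verification (which you carry out via Lemma~\ref{lem:subst_prop}(i)) implicit. No issues.
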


\begin{proof}
The product of $n$ and $1$ is $n + 1$; the first projection has as its $i$-th component the ``$i$-th projection functor" $\mathsf{pr}^{n+1}_i$ and the second projection has as its sole component the ``$n$-th projection functor'' $\mathsf{pr}^{n+1}_{n}$.
\end{proof}

\noindent The categories $\ctx(\R)$ and $\M^n \to \M$ can be combined to give:

\begin{definition}\label{def:int-m}
The category $\int_n \, \M^n \to \M$ is defined as follows:
\begin{itemize}
\item objects are pairs $(n,\F)$, where $\F$ is an object in
  $\M^n \to \M$
\item morphisms from $(n,\F)$ to $(m,\G)$ are pairs
  $(\mathbf{F},\eta)$, where $\mathbf{F} : n \to m$ is a morphism in
  $\ctx(\X)$ and $\eta : \F \to \mathbf{F}^*(\G)$ is a morphism
  in $\M^n \to \M$
\item the identity on $(n,\F)$ is the pair $(\id_n,\id_\F)$, where
  $\id_n : n \to n$ is the identity in $\ctx(\R)$ and $\id_\F : \F \to
  \F$ is the identity in $\M^n \to \M$
\item the composition of two morphisms $(\mathbf{F},\eta_1) : (n,\F)
  \to (m,\G)$ and $(\mathbf{G},\eta_2) : (m,\G) \to (k,\mathcal{H})$ is the pair
  $(\mathbf{G} \circ \mathbf{F}, \mathbf{F}^*(\eta_2) \circ
  \eta_1)$, where the first composition is in $\ctx(\R)$ and the
  second composition is in $\M^n \to \M$
\end{itemize} 
\end{definition}

This is a standard (op)Grothendieck construction, and resuts in a
category whose objects can be understood as pairing a kinding context
and a typing context over it, and whose morphisms can be understood
as simultaneous substitutions.

Since the set of objects of $\M^n \to \M$ is, by definition,
(isomorphic to) the set of morphisms from $n$ to $1$ in $\ctx(\rel)$, we
have not only that $\int_n \, \M^n \to \M$ is the total category
of a fibration over $\ctx(\rel)$, but that this fibration is actually
a split fibration with a split generic object:

\begin{lemma}\label{lem:split-fib}
The forgetful functor from $\int_n \, \M^n \to \M$ to
$\ctx(\rel)$ is a split fibration with split generic object $1$.
\end{lemma}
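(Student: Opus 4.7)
The plan is to exhibit the structure of a cloven fibration explicitly, verify the cartesian property, check that the chosen liftings satisfy the splitness equations using Lemma~\ref{lem:subst_prop}, and then produce the split generic object by exploiting the fact that the objects in the fiber over $n$ were defined to be (in bijection with) the components of the hom-set $\mathrm{Hom}_{\ctx(\rel)}(n, 1)$.

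First I would define, for any object $(m, \G)$ of $\int_n \M^n \to \M$ and any morphism $\mathbf{F} : n \to m$ in $\ctx(\rel)$, the chosen cartesian lifting to be
\[ \cart{\mathbf{F}}_{(m,\G)} \coloneqq (\mathbf{F}, \id_{\mathbf{F}^*(\G)}) : (n, \mathbf{F}^*(\G)) \to (m, \G). \]
To show this is cartesian, suppose $(\mathbf{F}', \eta') : (k, \mathcal{H}) \to (m, \G)$ satisfies $\mathbf{F}' = \mathbf{F} \circ \mathbf{G}$ for some $\mathbf{G} : k \to n$. The only candidate lift is $(\mathbf{G}, \eta'') : (k, \mathcal{H}) \to (n, \mathbf{F}^*(\G))$, and composing with $\cart{\mathbf{F}}_{(m,\G)}$ yields $(\mathbf{F} \circ \mathbf{G}, \mathbf{G}^*(\id_{\mathbf{F}^*(\G)}) \circ \eta'')$. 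By Lemma~\ref{lem:subst_prop}(iii) applied to the identity, $\mathbf{G}^*(\id_{\mathbf{F}^*(\G)}) = \id_{(\mathbf{F}\circ \mathbf{G})^*(\G)} = \id_{(\mathbf{F}')^*(\G)}$, so the condition $\mathbf{G}^*(\id_{\mathbf{F}^*(\G)}) \circ \eta'' = \eta'$ forces $\eta'' = \eta'$, giving both existence and uniqueness.

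Next I would verify splitness. For the identity equation, $(\id_n)^*(\G) = \G$ by Lemma~\ref{lem:subst_prop}(ii), so $\cart{\id_n}_{(n,\G)} = (\id_n, \id_\G)$, which is the identity morphism at $(n,\G)$ in the total category. For the composition equation, given $\mathbf{F} : n \to m$ and $\mathbf{G} : m \to k$ and an object $(k,\mathcal{H})$, Lemma~\ref{lem:subst_prop}(iii) gives $(\mathbf{G} \circ \mathbf{F})^*(\mathcal{H}) = \mathbf{F}^*(\mathbf{G}^*(\mathcal{H}))$, and composing the chosen liftings $\cart{\mathbf{F}}_{(m, \mathbf{G}^*(\mathcal{H}))}$ and $\cart{\mathbf{G}}_{(k,\mathcal{H})}$ produces $(\mathbf{G} \circ \mathbf{F}, \mathbf{F}^*(\id_{\mathbf{G}^*(\mathcal{H})}) \circ \id_{\mathbf{F}^*(\mathbf{G}^*(\mathcal{H}))})$, which reduces to $\cart{\mathbf{G} \circ \mathbf{F}}_{(k,\mathcal{H})}$ once we invoke Lemma~\ref{lem:subst_prop}(iii) again. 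Substitution functors are therefore given by $\mathbf{F}^*$, satisfying $\id^* = \id$ and $(\mathbf{G} \circ \mathbf{F})^* = \mathbf{F}^* \circ \mathbf{G}^*$ on the nose.

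Finally, I would produce the split generic object by taking $\Omega \coloneqq 1$. By Definition~\ref{def:contexts}, a morphism $n \to 1$ in $\ctx(\rel)$ is a one-tuple $(\F)$ of objects of $\M^n \to \M$; the evident bijection $\theta_n(\F) \coloneqq \F$ then maps $\mathrm{Hom}_{\ctx(\rel)}(n, 1)$ isomorphically onto the objects of the fiber $\bigl(\int_n \M^n \to \M\bigr)_n$. Naturality follows directly from the definition of composition in $\ctx(\rel)$: for $\F : n \to 1$ and $\mathbf{G} : m \to n$, the single component of $\F \circ \mathbf{G}$ is $\mathbf{G}^*(\F)$, whence $\theta_m(\F \circ \mathbf{G}) = \mathbf{G}^*(\F) = \mathbf{G}^*(\theta_n(\F))$, as required. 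The only step with any substance is the cartesian property and the splitness equations, both of which are essentially bookkeeping on top of Lemma~\ref{lem:subst_prop}; the main obstacle is merely keeping the two levels of composition (in $\ctx(\rel)$ and in the fibers) correctly aligned.
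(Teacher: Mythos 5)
Your proof is correct and follows the same route as the paper's: the paper simply declares $(\mathbf{F},\id_{\mathbf{F}^*(\G)}):(n,\mathbf{F}^*(\G))\to(m,\G)$ to be the chosen cartesian lifting and notes that the induced reindexing functor is $\mathbf{F}^*$, with splitness and the generic object $1$ following from Lemma~\ref{lem:subst_prop} and the definition of $\ctx(\rel)$ exactly as you spell out. Your version just makes the cartesianness check and the splitness equations explicit, which the paper leaves implicit.
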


\begin{proof}
Given any morphism $\mathbf{F} : n \to m$ in $\ctx(\rel)$ and an object $\G$ in $\M^m \to \M$, the cartesian lifting of $\mathbf{F}$ with respect to $\G$ is defined to be the morphism $(\mathbf{F},\mathsf{id}_{\mathbf{F}^*(\G)}) : (n,\mathbf{F}^*(\G)) \to (m,\G)$ in the total category $\int_n \, \M^n \to \M$. The induced reindexing functor is precisely $\mathbf{F}^*$.
\end{proof}

To appropriately interpret arrow types we need the category $\M^n \to \M$ to be cartesian closed. For this we require more structure on the underlying reflexive graph category with isomorphisms. We define:

\begin{definition}\label{def:terminal_int}
An internal category $C$ in $\C$ \emph{has a terminal object} if it comes equipped with an arrow $1_C : 1 \to C_0$ with the following universal property: 
\begin{itemize}
\item for any object $a : J \to C_0$ (with $J$ arbitrary), there is a unique morphism $!_C(a) : J \to C_1$ such that 
\begin{align*}
& \source_C[!_C(a)] = a \\
& \target_C[!_C(a)] = 1_C \; \circ \; !J
\end{align*}
\end{itemize}
\end{definition}
It is possible to show that the above definition is equivalent to the standard one given \emph{e.g.}, in Section 7.2 of~\cite{jac99}. However, the explicit version will be more useful for us.

\begin{definition}\label{def:terminal2}
A reflexive graph category $\X$ \emph{has terminal objects} if for each $l \in \{0,1\}$ the category $\X(l)$ has a terminal object. The terminal objects are \emph{stable under face maps} if for any $\star \in \two$, the canonical morphism witnessing the commutativity of the diagram below is the identity:

\begin{center}
\scalebox{0.9}{
\begin{tikzpicture}
\node (N0) at (0,1) {$1$};
\node (N1) at (3,2) {$\X(1)_0$};
\node (N2) at (3,0) {$\X(0)_0$};
\draw[->] (N0) -- node[above]{$1_{\X(1)\;\;\;\;}$} (N1);
\draw[->] (N0) -- node[below]{$1_{\X(0)}$} (N2);
\draw[->] (N1) -- node[right]{$\X(\fm_\star)_0$} (N2);
\end{tikzpicture}}
\end{center}
The terminal objects are \emph{stable under degeneracies} if the canonical morphism $\eta_\X^\mathsf{1}$ witnessing the commutativity of the diagram below is an isomorphism:
\begin{center}
\scalebox{0.9}{
\begin{tikzpicture}
\node (N0) at (0,1) {$1$};
\node (N1) at (3,2) {$\X(0)_0$};
\node (N2) at (3,0) {$\X(1)_0$};
\draw[->] (N0) -- node[above]{$1_{\X(0)}$} (N1);
\draw[->] (N0) -- node[below]{$1_{\X(1)}$} (N2);
\draw[->] (N1) -- node[right]{$\X(\dm)_0$} (N2);
\end{tikzpicture}}
\end{center}
\end{definition}

\begin{definition}\label{def:terminal1}
A reflexive graph category $(\X,(\M,\I))$ with isomorphisms \emph{has terminal objects} if $\X$ has terminal objects. The terminal objects are \emph{stable under face maps} if the terminal objects in $\X$ are stable under face maps. The terminal objects are \emph{stable under degeneracies} if the terminal objects in $\X$ are stable under degeneracies and the (iso)morphism $\eta_\X^\mathsf{1}$ is in the image of $\I(1)$.
\end{definition}

\begin{lemma}\label{lem:terminal-choice}
If a reflexive graph category $(\X,(\M,\I))$ with isomorphisms has terminal objects stable under face maps and degeneracies, then for each $n$, the category $\M^n \to \M$ has a terminal object.
\end{lemma}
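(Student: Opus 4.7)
My plan is to show that the constant reflexive graph functor at the terminal objects of $\M$ serves as a terminal object in $\M^n \to \M$. Concretely, I define $\mathbf{1}_n : \M^n \to \M$ by letting $\mathbf{1}_n(l)$ be the internal functor in $\mathsf{Cat}(\C)$ whose object part is the composite $\M(l)_0^n \to 1 \xrightarrow{1_{\M(l)}} \M(l)_0$ and whose morphism part sends every morphism to the identity on the terminal object. This is an internal functor by uniqueness of morphisms into the terminal object. Face map preservation follows immediately from the equation $\X(\fm_\star)_0 \circ 1_{\X(1)} = 1_{\X(0)}$ provided by stability of terminals under face maps (applied pointwise in $\M$, since $(\M, \I)$ is wide and $\I$ preserves face maps on the nose).

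For degeneracy preservation, both $\X(\dm) \circ \mathbf{1}_n(0)$ and $\mathbf{1}_n(1) \circ \M(\dm)^n$ are constant functors at $\X(\dm)_0 \circ 1_{\X(0)}$ and $1_{\X(1)}$ respectively, so the canonical isomorphism $\eta^{\mathsf{1}}_\X$ (which by assumption lies in $\M(1)$) induces a natural isomorphism $\varepsilon_{\mathbf{1}_n}$ whose component at every generalized object is $\eta^{\mathsf{1}}_\X$ precomposed with the unique map to $1$. The coherence condition $\X(\fm_\star)_1 \circ \varepsilon_{\mathbf{1}_n} = \id_{\X(0)}[\mathbf{1}_n(0)_0]$ reduces to showing $\X(\fm_\star)_1(\eta^\mathsf{1}_\X) = \id_{1_{\X(0)}}$: by stability under face maps both the source $\X(\fm_\star)_0(\X(\dm)_0 \circ 1_{\X(0)}) = 1_{\X(0)}$ and target $\X(\fm_\star)_0(1_{\X(1)}) = 1_{\X(0)}$ agree, so uniqueness of morphisms into the terminal object in $\X(0)$ forces this map to be the identity.

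For the universal property, given any object $\F$ in $\M^n \to \M$, the putative morphism $!_\F : \F \to \mathbf{1}_n$ is a natural transformation $\I \circ \F \to \I \circ \mathbf{1}_n$ whose component at level $l$ is $!_{\X(l)}(\I(l)_0 \circ \F(l)_0) : \I(l)_0 \circ \F(l)_0 \to 1_{\X(l)}$. Naturality in $\M(l)_1$, face map preservation (i.e.\ $\X(\fm_\star)_1 \circ {!_\F(1)} = !_\F(0) \circ \X(\fm_\star)_0$), and degeneracy preservation (the square in Definition~\ref{def:deg-pres-nat-trans} with $\Y = \X$ and $\varepsilon_\G = \varepsilon_{\mathbf{1}_n}$) are all equalities between parallel morphisms into $1_{\X(l)}$, so they hold automatically by uniqueness in Definition~\ref{def:terminal_int}. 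Uniqueness of $!_\F$ as a whole follows identically: any competing morphism has components that are forced to agree with $!_{\X(l)}(\I(l)_0 \circ \F(l)_0)$ pointwise.

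The main bookkeeping obstacle, as is typical in such verifications, is keeping straight the wideness of $(\M,\I)$ together with the fact that $\eta^{\mathsf{1}}_\X$ is required to lie in $\M(1)$; this is what guarantees that $\mathbf{1}_n$ genuinely lands in $\M$ rather than only in $\X$, and thus qualifies as an object of $\M^n \to \M$ in the sense of Definition~\ref{def:mn-to-m}. Once this is recorded, every remaining verification is a one-line appeal to the uniqueness clause of Definition~\ref{def:terminal_int}.
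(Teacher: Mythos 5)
Your construction coincides with the paper's: the same constant functor $\mathbf{1}_n$ with object part $1_{\X(l)}\circ\,!$, the same $\varepsilon_{\mathbf{1}_n} = \eta^{\mathsf{1}}_\X \circ\,!$, the same universal morphism with components $!_{\X(l)}(\F(l)_0)$, and all verifications discharged by uniqueness of morphisms into the terminal object. The proposal is correct and takes essentially the same approach as the paper.
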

\begin{proof}
We define the terminal object in $\M^n \to \M$ to be $1_n$, where
\begin{itemize}
\item $1_n(l)_0 \coloneqq 1_{\X(l)} \; \circ \; !(\M(l)^n_0)$
\item $1_n(l)_1 \coloneqq \id_{\M(l)}[1_{\X(l)}] \; \circ \; !(\M(l)^n_1)$
\item $\varepsilon_{1_n} \coloneqq \eta^\mathsf{1}_\X \; \circ \; !(\M(0)^n_0)$
\end{itemize}
To show that $1_n$ is indeed a terminal object, take another object $\F$. The universal morphism from $\F$ into our candidate terminal object is the reflexive graph natural transformation whose component at level $l$ is $!_{\X(l)}(\F(l)_0)$. To prove naturality, we need to show that
\[ \big(!_{\X(l)}(\F(l)_0) \circ \target_{\M(l)^n}\big) \circ_{\X(l)} \F(l)_1 = {1_n}(l)_1 \circ_{\X(l)} \big(!_{\X(l)}(\F(l)_0) \circ \source_{\M(l)^n}\big) \]
The target of both sides is $1_{\X(l)}\;\circ\;!(\M(l)^n_1)$ so the equality follows from the universal property of $1_{\X(l)}$. To prove that the candidate universal morphism is degeneracy-preserving, we need to show that
\[ \big(!_{\X(1)}(\F(1)_0) \circ \M(\dm)^n_0\big) \circ_{\X(1)} \varepsilon_\F = \varepsilon_{1_n} \circ_{\X(1)} \big(\X(\dm)_1 \; \circ \; !_{\X(0)}(\F(0)_0)\big)\]
The target of both sides is $1_{\X(1)} \circ\;!(\M(0)^n_0)$ so the equality again follows from the universal property of $1_{\X(1)}$. The preservation of face maps follows by the exact same argument.  This shows that our candidate universal morphism is indeed a proper morphism. Its uniqueness is obvious, once again by the universal property of $1_{\X(l)}$. 
\end{proof}

\begin{definition}\label{def:products_int}
An internal category $C$ in $\C$ \emph{has products} if it comes equipped with arrows $\times_C :C_0 \times C_0 \to C_0$ and $\fst_C, \snd_C : C_0 \times C_0 \to C_1$ such that 
\begin{align*}
& \source_C[\fst_C] = \times_C \;\; \mathit{and} \;\; \target_C[\fst_C] = \fst[C_0,C_0] \\
& \source_C[\snd_C] = \times_C \;\; \mathit{and} \;\; \target_C[\snd_C] = \snd[C_0,C_0]
\end{align*}
with the following universal property:
\begin{itemize}
\item for any objects $a,b,c : J \to C_0$ and morphisms $f,g : J \to C_1$ (with $J$ arbitrary) such that 
\begin{align*}
& \source_C[f] = c \;\; \mathit{and} \;\; \target_C[f] = a \\
& \source_C[g] = c \;\; \mathit{and} \;\; \target_C[g] = b
\end{align*}
there is a unique morphism $\langle f,g\rangle_C : J \to C_1$ such that 
\begin{align*}
& \source_C[\langle f,g\rangle_C] = c \\
& \target_C[\langle f,g\rangle_C] = a \times_C b \\
& \fst_C[a,b] \circ_{C} \langle f,g\rangle_C = f \\
& \snd_C[a,b] \circ_{C} \langle f,g\rangle_C = g
\end{align*}
\end{itemize}
where we write $a \times_C b$, $\fst_C[a,b]$, $\snd_C[a,b]$ for the arrows $\times_C \circ \langle a, b \rangle$, $\fst_C \circ \langle a, b \rangle$, $\snd_C \circ \langle a, b \rangle$.
\end{definition}

\noindent If $C$ has products, then we have the following:
\begin{itemize}
\item for any objects $a,b,c,d : J \to C_0$ and morphisms $f,g : J \to C_1$ such that 
\begin{align*}
& \source_C[f] = a \;\; \mathit{and} \;\; \target_C[f] = c \\
& \source_C[g] = b \;\; \mathit{and} \;\; \target_C[g] = d
\end{align*}
there exists a unique morphism $f \times_C g : J \to C_1$ such that
\begin{align*}
& \source_C[f \times_C g] = a \times_C b \\
& \target_C[f \times_C g] = c \times_C d \\
& \fst_C[c,d] \circ_{C} (f \times_C g) = f \circ_C \fst_C[a,b] \\
& \snd_C[c,d] \circ_{C} (f \times_C g) = g \circ_C \snd_C[a,b]
\end{align*}
\end{itemize}
Using this observation, it is possible to show that above definition is equivalent to the standard one given \emph{e.g.}, in Section 7.2 of~\cite{jac99}.

\begin{definition}\label{def:products2}
A reflexive graph category $\X$ \emph{has products} if for each $l \in \{0,1\}$ the category $\X(l)$ has products. The products are \emph{stable under face maps} if for any $\star \in \two$, the canonical morphism witnessing the commutativity of the diagram below is the identity:

\begin{center}
\scalebox{0.9}{
\begin{tikzpicture}
\node (N0) at (0,2) {$\X(1)_0 \times \X(1)_0$};
\node (N1) at (0,0) {$\X(0)_0 \times \X(0)_0$};
\node (N2) at (4,2) {$\X(1)_0$};
\node (N3) at (4,0) {$\X(0)_0$};
\draw[->] (N0) -- node[above]{$\times_{\X(1)}$} (N2);
\draw[->] (N1) -- node[below]{$\times_{\X(0)}$} (N3);
\draw[->] (N0) -- node[left]{$\X(\fm_\star)_0 \times \X(\fm_\star)_0$} (N1);
\draw[->] (N2) -- node[right]{$\X(\fm_\star)_0$} (N3);
\end{tikzpicture}}
\end{center}
The products are \emph{stable under degeneracies} if the canonical morphism $\eta_\X^\times$ witnessing the commutativity of the diagram below is an isomorphism:

\begin{center}
\scalebox{0.9}{
\begin{tikzpicture}
\node (N0) at (0,2) {$\X(0)_0 \times \X(0)_0$};
\node (N1) at (0,0) {$\X(1)_0 \times \X(1)_0$};
\node (N2) at (4,2) {$\X(0)_0$};
\node (N3) at (4,0) {$\X(1)_0$};
\draw[->] (N0) -- node[above]{$\times_{\X(0)}$} (N2);
\draw[->] (N1) -- node[below]{$\times_{\X(1)}$} (N3);
\draw[->] (N0) -- node[left]{$\X(\dm)_0 \times \X(\dm)_0$} (N1);
\draw[->] (N2) -- node[right]{$\X(\dm)_0$} (N3);
\end{tikzpicture}}
\end{center}
\end{definition}

\begin{notation}
If $\X$ has products stable under degeneracies, we write $\eta_\X^\times[a,b]$ for the composition $\eta_\X^\times \circ \langle a,b \rangle$ whenever $a,b : J \to \X(0)_0$ are two objects.
\end{notation}

\noindent If $\X$ has products stable under degeneracies, we have:
\begin{itemize}
\item for any objects $a,b : J \to \X(0)_0$, the following diagrams commute:

\begin{center}
\scalebox{0.75}{
\begin{tikzpicture}
\node (N0) at (0,3) {$\X(\dm)_0 \circ (a \times_{\X(1)} b)$};
\node (N1) at (7,0) {$\X(\dm)_0 \circ a$};
\node (N2) at (7,3) {$(\X(\dm)_0 \circ a) \times_{\X(1)} (\X(\dm)_0 \circ a)$};
\draw[->] (N0) -- node[below]{$\X(\dm)_1 \circ \fst_{\X(0)}[a,b]$\;\;\;\;\;\;\;\;\;\;\;\;\;\;\;\;\;\;\;\;\;\;\;\;\;\;\;\;} (N1);
\draw[->] (N0) -- node[above]{$\eta_\X^\times[a,b]$} (N2);
\draw[->] (N2) -- node[right]{$\fst_{\X(1)}\big[\X(\dm)_0 \circ a, \X(\dm)_0 \circ b\big]$} (N1);
\end{tikzpicture}}
\end{center}

\begin{center}
\scalebox{0.75}{
\begin{tikzpicture}
\node (N0) at (0,3) {$\X(\dm)_0 \circ (a \times_{\X(1)} b)$};
\node (N1) at (7,0) {$\X(\dm)_0 \circ a$};
\node (N2) at (7,3) {$(\X(\dm)_0 \circ a) \times_{\X(1)} (\X(\dm)_0 \circ a)$};
\draw[->] (N0) -- node[below]{$\X(\dm)_1 \circ \snd_{\X(0)}[a,b]$\;\;\;\;\;\;\;\;\;\;\;\;\;\;\;\;\;\;\;\;\;\;\;\;\;\;\;\;} (N1);
\draw[->] (N0) -- node[above]{$\eta_\X^\times[a,b]$} (N2);
\draw[->] (N2) -- node[right]{$\snd_{\X(1)}\big[\X(\dm)_0 \circ a, \X(\dm)_0 \circ b\big]$} (N1);
\end{tikzpicture}}
\end{center}

\item The isomorphism $\eta_\X^\times$ is coherent, \emph{i.e.}, for any objects $a,b : J \to \X(0)_0$:
\[ \X(\fm_\star)_1 \circ \eta_\X^\times[a,b] = \mathsf{id} \]
\item The isomorphism $\eta_\X^\times$ is natural, \emph{i.e.}, for any objects $a,b,c,d : J \to \X(0)_0$ and morphisms $f,g : J \to \X(0)_1$ such that 
\begin{align*}
& \source_{\X(0)}[f] = a \;\; \mathit{and} \;\; \target_{\X(0)}[f] = c \\
& \source_{\X(0)}[g] = b \;\; \mathit{and} \;\; \target_{\X(0)}[g] = d
\end{align*}
the following diagram commutes:
\begin{center}
\scalebox{0.75}{
\begin{tikzpicture}
\node (N0) at (0,2) {$\X(\dm)_0 \circ (a \times_{\X(0)} b)$};
\node (N1) at (0,0) {$(\X(\dm)_0 \circ a) \times_{\X(1)} (\X(\dm)_0 \circ b)$};
\node (N2) at (9.5,2) {$\X(\dm)_0 \circ (c \times_{\X(0)} d)$};
\node (N3) at (9.5,0) {$(\X(\dm)_0 \circ c) \times_{\X(1)} (\X(\dm)_0 \circ d)$};
\draw[->] (N0) -- node[left]{$\eta_\X^\times[a,b]$} (N1);
\draw[->] (N0) -- node[above]{$\X(\dm)_1 \circ (f \times_{\X(0)} g)$} (N2);
\draw[->] (N1) -- node[below]{$(\X(\dm)_1 \circ f) \times_{\X(1)} (\X(\dm)_1 \circ g)$} (N3);
\draw[->] (N2) -- node[right]{$\eta_\X^\times[c,d]$} (N3);
\end{tikzpicture}}
\end{center}
\end{itemize}

\begin{definition}\label{def:products1}
A reflexive graph category $(\X,(\M,\I))$ with isomorphisms \emph{has products} if $\X$ has products and for any $f,g : J \to \X(l)_1$, $f \times_{\X(l)} g$ is in the image of $\I(l)$ whenever $f$ and $g$ are. The products are \emph{stable under face maps} if the products in $\X$ are stable under face maps. The products are \emph{stable under degeneracies} if the products in $\X$ are stable under degeneracies and the (iso)morphism $\eta_\X^\times$ is in the image of $\I(1)$.
\end{definition}

\begin{lemma}\label{lem:products-choice}
If a reflexive graph category $(\X,(\M,\I))$ with isomorphisms has products stable under face maps and degeneracies, then for each $n$, the category $\M^n \to \M$ has products.
\end{lemma}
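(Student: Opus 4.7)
The plan is to imitate the proof of Lemma~\ref{lem:terminal-choice}, building the product of two objects $\F, \G : \M^n \to \M$ componentwise from the internal product structure of $\X(l)$. Concretely, I would define $\F \times \G$ as the reflexive graph functor with
\begin{align*}
(\F \times \G)(l)_0 &\coloneqq \times_{\X(l)} \;\circ\; \langle \F(l)_0, \G(l)_0 \rangle \\
(\F \times \G)(l)_1 &\coloneqq \F(l)_1 \times_{\X(l)} \G(l)_1
\end{align*}
where in the second line I use the induced product on morphisms noted after Definition~\ref{def:products_int}. Since $\F(l)_1$ and $\G(l)_1$ land in the image of $\I(l)$ (as $\F, \G$ factor through $\M$), the hypothesis in Definition~\ref{def:products1} that $\I(l)$ is closed under $\times_{\X(l)}$ ensures $(\F \times \G)(l)_1$ also lands in $\M(l)$. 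Face map preservation is immediate from the stability of products under face maps.

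Next I would define the degeneracy coherence isomorphism as the composite
\[ \varepsilon_{\F \times \G} \coloneqq \big(\varepsilon_\F \times_{\X(1)} \varepsilon_\G\big) \;\circ_{\X(1)}\; \eta_\X^\times[\F(0)_0, \G(0)_0] \]
This is a morphism of the required type by the two commuting triangles recorded after Definition~\ref{def:products2}, and it lies in the image of $\I(1)$ since $\varepsilon_\F$, $\varepsilon_\G$, and $\eta_\X^\times$ do and $\M(1)$ is closed under products and composition. The coherence condition $\X(\fm_\star)_1 \circ \varepsilon_{\F \times \G} = \id$ follows from the coherence of $\varepsilon_\F$, $\varepsilon_\G$, and $\eta_\X^\times$ combined with the functoriality of $\X(\fm_\star)_1$ on products, which is guaranteed by stability of products under face maps.

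I would then define the projections $\pi_0 : \F \times \G \to \F$ and $\pi_1 : \F \times \G \to \G$ pointwise by $\pi_i(l) \coloneqq \mathsf{fst}_{\X(l)}[\F(l)_0,\G(l)_0]$ and $\mathsf{snd}_{\X(l)}[\F(l)_0,\G(l)_0]$, and given $\eta_0 : \mathcal{H} \to \F$ and $\eta_1 : \mathcal{H} \to \G$ form their pairing $\langle \eta_0, \eta_1 \rangle(l) \coloneqq \langle \eta_0(l), \eta_1(l)\rangle_{\X(l)}$. Naturality and face map preservation of these components follow directly from the universal property of products in $\X(l)$, and uniqueness of the pairing morphism is standard.

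The main obstacle will be verifying that $\pi_0, \pi_1$, and the pairing $\langle \eta_0, \eta_1 \rangle$ are all degeneracy-preserving in the sense of Definition~\ref{def:deg-pres-nat-trans}. For the projections, the relevant square reduces, after pre-composing with $\mathsf{fst}_{\X(1)}$, to the first triangle displayed after the notation for $\eta_\X^\times[a,b]$; the analogous argument handles $\pi_1$. For the pairing, I would use the universal property of $\times_{\X(1)}$: it suffices to show that pre-composing both sides of the coherence square with $\mathsf{fst}_{\X(1)}$ and $\mathsf{snd}_{\X(1)}$ yields equal morphisms, and each such instance reduces via the triangles to the degeneracy coherence of $\eta_0$ and $\eta_1$ together with the naturality square for $\eta_\X^\times$ displayed after Definition~\ref{def:products2}. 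This last calculation is the bookkeeping-heavy step, but it is forced once the naturality of $\eta_\X^\times$ is in hand.
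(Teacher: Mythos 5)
Your proposal matches the paper's proof essentially verbatim: the same componentwise definition of $\F \times \G$, the same $\varepsilon_{\F \times \G} \coloneqq (\varepsilon_\F \times_{\X(1)} \varepsilon_\G) \circ_{\X(1)} \eta_\X^\times[\F(0)_0,\G(0)_0]$, pointwise projections and pairings, and verification of naturality and degeneracy-preservation by composing with the projections and reducing to the triangle identities for $\eta_\X^\times$ together with the coherence of $\varepsilon_\F$ and $\varepsilon_\G$. This is exactly the argument the paper carries out.
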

\begin{proof}
Fix $\F$ and $\G$ in $\M^n \to \M$. We define $\F \times \G$ by:
\begin{itemize}
\item $(\F \times \G)(l)_0\coloneqq \F(l)_0 \times_{\X(l)} \G(l)_0$
\item $(\F \times \G)(l)_1\coloneqq \F(l)_1 \times_{\X(l)} \G(l)_1$
\item $\varepsilon_{\F \times \G} \coloneqq (\varepsilon_\F \times_{\X(1)} \varepsilon_\G) \circ_{\M(1)} \eta_\X^\mathsf{\times}[\F(0)_0,\G(0)_0]$
\end{itemize}
The first projection out of $\F \times \G$ is defined as the reflexive graph natural transformation whose component at level $l$ is $\fst_{\X(l)}[\F(l)_0,\G(l)_0]$. To prove naturality -- with respect to $\F \times \G$ and $\G$ -- we observe the following chain of equalities:
\begin{align*}
& \big(\fst_{\X(l)}[\F(l)_0,\G(l)_0] \circ \target_{\M(l)^n}\big) \circ_{\X(l)} (\F(l)_1 \times_{\X(l)} \G(l)_1) \\   
= \; & \fst_{\X(l)}\big[\F(l)_0 \circ \target_{\M(l)^n},\G(l)_0 \circ \target_{\M(l)^n}\big] \circ_{\X(l)} (\F(l)_1 \times_{\X(l)} \G(l)_1) \\ 
= \; & \F(l)_1 \circ_{\X(l)} \fst_{\X(l)}\big[\F(l)_0 \circ \source_{\M(l)^n},\G(l)_0 \circ \source_{\M(l)^n}\big] \\
= \; & \F(l)_1 \circ_{\X(l)} \big(\fst_{\X(l)}[\F(l)_0,\G(l)_0] \circ \source_{\M(l)^n}\big)
\end{align*}
The first and third equalities are clear and the second follows by the definition of $\times_{\X(l)}$ on morphisms. To prove degeneracy-preservation -- with respect to $\varepsilon_{\F \times \G}$ and $\varepsilon_\G$ -- we observe the following chain of equalities:

\resizebox{.95\linewidth}{!}{
\begin{minipage}{\linewidth}
\begin{align*}
& \big(\fst_{\X(1)}[\F(1)_0,\G(1)_0] \circ \M(\dm)_0^n\big) \circ_{\X(1)} \big(\varepsilon_\F \times_{\X(1)} \varepsilon_\G\big) \circ_{\X(1)} \eta_\X^\mathsf{\times}[\F(0)_0,\G(0)_0] \\
= \; & \fst_{\X(1)}\big[\F(1)_0 \circ \M(\dm)_0^n,\G(l)_0 \circ \M(\dm)_0^n\big] \circ_{\X(1)} (\varepsilon_\F \times_{\X(1)} \varepsilon_\G) \circ_{\X(1)} \eta_\X^\mathsf{\times}[\F(0)_0,\G(0)_0] \\
= \; & \varepsilon_\F \circ_{\X(1)} \fst_{\X(1)}\big[\X(\dm)_0 \circ \F(0)_0,\X(\dm)_0 \circ \G(0)_0\big] \circ_{\X(1)} \eta_\X^\mathsf{\times}[\F(0)_0,\G(0)_0] \\
= \; & \varepsilon_\F \circ_{\X(1)} \big(\X(\dm)_1 \circ \fst_{\X(0)}[\F(0)_0,\G(0)_0]\big)
\end{align*}
\end{minipage}}\bigskip

\noindent The first equality is clear, the second follows by definition of $\times_{\X(1)}$ on morphisms, and the third follows by definition of $\eta_\X^\times$. The preservation of face maps follows by the exact same argument. This shows that the first projection is indeed a proper morphism. The second projection is defined analogously. \smallskip

To show that $\F \times \G$ with the aforementioned projections is indeed a product, fix $\mathcal{H}$ and $\eta_\F : \mathcal{H} \to \F$, $\eta_\G : \mathcal{H} \to \G$. The universal morphism from $\mathcal{H}$ into $\F \times \G$ is the reflexive graph natural transformation whose component at level $l$ is $\langle \eta_\F(l),\eta_\G(l) \rangle_{\X(l)}$. To show naturality -- with respect to $\mathcal{H}$ and $\F \times \G$ -- we need to establish the equality
\begin{align*} 
& \big(\langle \eta_\F(l),\eta_\G(l) \rangle_{\X(l)} \circ \target_{\M(l)^n}\big) \circ_{\X(l)} \mathcal{H}(l)_1 = \\
& \;\;\; (\F(l)_1 \times_{\X(l)} \G(l)_1) \circ_{\X(l)} \big(\langle \eta_\F(l),\eta_\G(l) \rangle_{\X(l)} \circ \source_{\M(l)^n}\big)
\end{align*}

The target of the two morphisms is a product, so it suffices to check that their compositions with the first and second projections coincide. The chain of equalities below establishes this for the first projection. Equalities $(1)$ and $(5)$ are clear; equalities $(2)$ and $(4)$ follow by the definition of $\langle \cdot,\cdot\rangle_{\X(l)}$; equality $(3)$ follows from the naturality of $\eta_\F$; and equality $(6)$ follows by the definition of $\times_{\X(l)}$ on morphisms. The case of the second projection is entirely analogous.

\resizebox{.95\linewidth}{!}{
\begin{minipage}{\linewidth}
\begin{align*}
& \fst_{\X(l)}\big[\F(l)_0 \circ \target_{\M(l)^n}, \G(l)_0 \circ \target_{\M(l)^n}\big] \circ_{\X(l)} \big(\langle \eta_\F(l),\eta_\G(l) \rangle_{\X(l)} \circ \target_{\M(l)^n}\big) \circ_{\X(l)} \mathcal{H}(l)_1 \\
\stackrel{(1)}{=} \; & \Big(\big(\fst_{\X(l)}[\F(l)_0, \G(l)_0] \circ_{\X(l)} \langle \eta_\F(l),\eta_\G(l) \rangle_{\X(l)}\big) \circ \target_{\M(l)^n}\Big) \circ_{\X(l)} \mathcal{H}(l)_1 \\
\stackrel{(2)}{=} \; & (\eta_\F(l) \circ \target_{\M(l)^n}) \circ_{\X(l)} \mathcal{H}(l)_1 \\
\stackrel{(3)}{=} \; & \F(l)_1 \circ_{\X(l)} (\eta_\F(l) \circ \source_{\M(l)^n}) \\
\stackrel{(4)}{=} \; & \F(l)_1 \circ_{\X(l)} \Big(\big(\fst_{\X(l)}[\F(l)_0, \G(l)_0] \circ_{\X(l)} \langle \eta_\F(l),\eta_\G(l) \rangle_{\X(l)} \big) \circ \source_{\M(l)^n}\Big) \\
\stackrel{(5)}{=} \; & {\color{red}\F(l)_1 \circ_{\X(l)} \fst_{\X(l)}\big[\F(l)_0 \circ \source_{\M(l)^n}, \G(l)_0 \circ \source_{\M(l)^n}\big]} \circ_{\X(l)} \big(\langle \eta_\F(l),\eta_\G(l) \rangle_{\X(l)} \circ \source_{\M(l)^n}\big) \\
\stackrel{(6)}{=} \; & {\color{red}\fst_{\X(l)}\big[\F(l)_0 \circ \target_{\M(l)^n}, \G(l)_0 \circ \target_{\M(l)^n}\big] \circ_{\X(l)} (\F(l)_1 \times_{\X(l)} \G(l)_1)} \; \circ_{\X(l)} \\
& \;\;\; \big(\langle \eta_\F(l),\eta_\G(l) \rangle_{\X(l)} \circ \source_{\M(l)^n}\big)
\end{align*}
\end{minipage}}\bigskip

To prove that our candidate universal morphism is degeneracy-preserving -- with respect to $\varepsilon_\mathcal{H}$ and $\varepsilon_{\F \times \G}$ -- we need to establish the equality
\begin{align*}
& \big(\langle \eta_\F(1),\eta_\G(1) \rangle_{\X(1)} \circ \M(\dm)_0^n\big) \circ_{\X(1)} \varepsilon_\mathcal{H} = \\
& \;\;\; (\varepsilon_\F \times_{\X(1)} \varepsilon_\G) \circ_{\X(1)} \eta_\X^\mathsf{\times}[\F(0)_0,\G(0)_0] \circ_{\X(1)} \big(\X(\dm)_1 \circ \langle \eta_\F(0),\eta_\G(0) \rangle_{\X(0)}\big)
\end{align*}

Again the target of the two morphisms is a product so it suffices to check that their compositions with the first and second projections coincide. The chain of equalities below establishes this for the first projection. Equality $(1)$ is clear; equalities $(2)$ and $(4)$ follow by the definition of $\langle \cdot,\cdot\rangle_{\X(l)}$; equality $(3)$ follows by the degeneracy-preservation of $\eta_\F$; equality $(5)$ follows by the functoriality of $\X(\dm)$; equality $(6)$ follows by the definition of $\eta_\X^\times$; and equality $(7)$ follows by the definition of $\times_{\X(1)}$ on morphisms. The case of the second projection is entirely analogous, which shows that our candidate universal morphism is degeneracy-preserving. The preservation of face maps is shown by the exact same argument. Thus our candidate universal morphism is indeed a proper morphism. Its universality and uniqueness are obvious, again by the universal property of $\times_{\X(l)}$.

\resizebox{.95\linewidth}{!}{
\begin{minipage}{\linewidth}
\begin{align*}
& \fst_{\X(1)}\big[\F(1)_0 \circ \M(\dm)^n_0,\G(1)_0 \circ \M(\dm)^n_0] \circ_{\X(1)} \big(\langle \eta_\F(1),\eta_\G(1) \rangle_{\X(1)} \circ \M(\dm)_0^n\big) \circ_{\X(1)} \varepsilon_\mathcal{H} \\
\stackrel{(1)}{=} \; &  \Big(\big(\fst_{\X(1)}[\F(1)_0,\G(1)_0] \circ_{\X(1)} \langle \eta_\F(1),\eta_\G(1) \rangle_{\X(1)}\big) \circ \M(\dm)_0^n\Big) \circ_{\X(1)} \varepsilon_\mathcal{H} \\
\stackrel{(2)}{=} \; &  (\eta_\F(1) \circ \M(\dm)_0^n) \circ_{\X(1)} \varepsilon_\mathcal{H} \\
\stackrel{(3)}{=} \; & \varepsilon_\F \circ_{\X(1)} (\X(\dm)_1 \circ \eta_\F(0)) \\
\stackrel{(4)}{=} \; & \varepsilon_\F \circ_{\X(1)} \Big(\X(\dm)_1 \circ \big(\fst_{\X(0)}[\F(0)_0,\G(0)_0] \circ_{\X(0)} \langle \eta_\F(0),\eta_\G(0) \rangle_{\X(0)}\big)\Big) \\
\stackrel{(5)}{=} \; & \varepsilon_\F \circ_{\X(1)} {\color{red}\big(\X(\dm)_1 \circ \fst_{\X(0)}[\F(0)_0,\G(0)_0]\big)} \circ_{\X(1)} \big(\X(\dm)_1 \circ \langle \eta_\F(0),\eta_\G(0) \rangle_{\X(0)}\big) \\
\stackrel{(6)}{=} \; & \varepsilon_\F \circ_{\X(1)} {\color{red}\fst_{\X(1)}\big[\X(\dm)_0 \circ \F(0)_0,\X(\dm)_0 \circ \G(0)_0] \circ_{\X(1)} \eta_\X^\mathsf{\times}[\F(0)_0,\G(0)_0]} \; \circ_{\X(1)}
\\ & \;\;\; \big(\X(\dm)_1 \circ \langle \eta_\F(0),\eta_\G(0) \rangle_{\X(0)}\big) \\
\stackrel{(7)}{=} \; & \fst_{\X(1)}\big[\F(1)_0 \circ \M(\dm)^n_0,\G(1)_0 \circ \M(\dm)^n_0] \circ_{\X(1)} (\varepsilon_\F \times_{\X(1)} \varepsilon_\G) \circ_{\X(1)} \eta_\X^\mathsf{\times}[\F(0)_0,\G(0)_0] \; \circ_{\X(1)} \\ & \;\;\; \big(\X(\dm)_1 \circ \langle \eta_\F(0),\eta_\G(0) \rangle_{\X(0)}\big)
\end{align*}
\end{minipage}}\bigskip

\end{proof}

\begin{definition}\label{def:exponentials_int}
An internal category $C$ in $\C$ with products \emph{has exponentials} if it comes equipped with arrows $\Rightarrow_C \; : C_0 \times C_0 \to C_0$ and $\eval_C : C_0 \times C_0 \to C_1$ such that 
\begin{align*}
\source_C[\eval_C] = \big((\Rightarrow_C) \times_C \fst[C_0,C_0] \big)\;\; \mathit{and} \;\; \target_C[\eval_C] = \snd[C_0,C_0]
\end{align*}
with the following universal property:
\begin{itemize}
\item for any objects $a,b,c : J \to C_0$ and morphism $f : J \to C_1$ (with $J$ arbitrary) such that 
\begin{align*}
& \source_C[f] = c \times_C a \;\; \mathit{and} \;\; \target_C[f] = b
\end{align*}
there is a unique morphism $\lambda_C[a,b,c,f] : J \to C_1$ such that 
\begin{align*}
& \source_C[\lambda_C[a,b,c,f]] = c \\
& \target_C[\lambda_C[a,b,c,f]] = (a \Rightarrow_C b) \\
& \eval_C[a,b] \circ_C \big(\lambda_C[a,b,c,f] \times_C \id_C[a]\big) = f
\end{align*}
\end{itemize}
where we write $a \Rightarrow_C b$, $\eval_C[a,b]$ for the arrows $(\Rightarrow_C \circ\; \langle a, b \rangle)$, $\eval_C \circ \langle a, b \rangle$.
\end{definition}

\noindent If $C$ has exponentials, then we have the following:
\begin{itemize}
\item for any objects $a,b,c,d : J \to C_0$ and morphisms $f,g : J \to C_1$ such that 
\begin{align*}
& \source_C[f] = c \;\; \mathit{and} \;\; \target_C[f] = a \\
& \source_C[g] = b \;\; \mathit{and} \;\; \target_C[g] = d
\end{align*}
there exists a unique morphism $f \Rightarrow_C g : J \to C_1$ such that
\begin{align*}
& \source_C[f \Rightarrow_C g] = (a \Rightarrow_C b) \\
& \target_C[f \Rightarrow_C g] = (c \Rightarrow_C d) \\
& \eval_C[c,d] \circ_{C} \big((f \Rightarrow_C g) \times_C \id_C[c]\big) =  \\
& \;\;\; g \circ_C \eval_C[a,b] \circ_C (\id_C[a \Rightarrow_C b] \times_C f)  
\end{align*}
\end{itemize}
Using this observation, it is possible to show that above definition is equivalent to the standard one given \emph{e.g.}, in Section 7.2 of~\cite{jac99}.

\begin{definition}\label{def:exponentials2}
A reflexive graph category $\X$ with products \emph{has exponentials} if for each $l \in \{0,1\}$, the category $\X(l)$ has exponentials. Assuming the products are stable under face maps, we say the exponentials are \emph{stable under face maps} if for any $\star \in \two$, the canonical morphism witnessing the commutativity of the diagram below is the identity:
	
\begin{center}
\scalebox{0.9}{
\begin{tikzpicture}
\node (N0) at (0,2) {$\X(1)_0 \times \X(1)_0$};
\node (N1) at (0,0) {$\X(0)_0 \times \X(0)_0$};
\node (N2) at (4,2) {$\X(1)_0$};
\node (N3) at (4,0) {$\X(0)_0$};
\draw[->] (N0) -- node[above]{$\Rightarrow_{\X(1)}$} (N2);
\draw[->] (N1) -- node[below]{$\Rightarrow_{\X(0)}$} (N3);
\draw[->] (N0) -- node[left]{$\X(\fm_\star)_0 \times \X(\fm_\star)_0$} (N1);
\draw[->] (N2) -- node[right]{$\X(\fm_\star)_0$} (N3);
\end{tikzpicture}}
\end{center}
Assuming the products are stable under degeneracies, we say the exponentials are \emph{stable under degeneracies} if the canonical morphism $\eta^\Rightarrow_\X$ witnessing the commutativity of the diagram below is an isomorphism:

\begin{center}
\scalebox{0.9}{
\begin{tikzpicture}
\node (N0) at (0,2) {$\X(0)_0 \times \X(0)_0$};
\node (N1) at (0,0) {$\X(1)_0 \times \X(1)_0$};
\node (N2) at (4,2) {$\X(0)_0$};
\node (N3) at (4,0) {$\X(1)_0$};
\draw[->] (N0) -- node[above]{$\Rightarrow_{\X(0)}$} (N2);
\draw[->] (N1) -- node[below]{$\Rightarrow_{\X(1)}$} (N3);
\draw[->] (N0) -- node[left]{$\X(\dm)_0 \times \X(\dm)_0$} (N1);
\draw[->] (N2) -- node[right]{$\X(\dm)_0$} (N3);
\end{tikzpicture}}
\end{center}
\end{definition}

\begin{notation}
If $\X$ has exponentials stable under degeneracies, we write $\eta_\X^\Rightarrow[a,b]$ for the composition $\eta_\X^\Rightarrow \circ \langle a,b \rangle$ whenever $a,b : J \to \X(l)_0$ are two objects.
\end{notation}

\noindent If $\X$ has products and exponentials stable under degeneracies, we have:
\begin{itemize}
\item for any objects $a,b : J \to \X(0)_0$, the following diagram commutes:

\begin{center}
\scalebox{0.75}{
\begin{tikzpicture}
\node (N0) at (0,6) {$\X(\dm)_0 \circ \big((a \Rightarrow_{\X(0)} b) \times_{\X(0)} a\big)$};
\node (N1) at (0,0) {$\X(\dm)_0 \circ b$};
\node (N2) at (9,6) {$\big(\X(\dm)_0 \circ (a \Rightarrow_{\X(1)} b)\big) \times_{\X(1)} (\X(\dm)_0 \circ a)$};
\node (N3) at (9,3) {$\;\;\;\;\;\;\;\;\;\;\;\;\;\;\;\; \big((\X(\dm) \circ a) \Rightarrow_{\X(1)} (\X(\dm) \circ b)\big) \times_{\X(1)} (\X(\dm)_0 \circ a)$};
\node at (6.5,1.1) {$\eval_{\X(1)}\big[\X(\dm)_0 \circ a, \X(\dm)_0 \circ b\big]$};
\draw[->] (N0) -- node[left]{$\X(\dm)_1 \circ \eval_{\X(0)}[a,b]$} (N1);
\draw[->] (N0) -- node[above]{$\eta_\X^\times[a \Rightarrow_{\X(0)} b,a]$} (N2);
\draw[->] (N2) -- node[right]{$\eta_\X^\Rightarrow[a,b] \times_{\X(1)} \id_{\X(1)}[\X(\dm)_0 \circ a]$} (N3);
\draw[->] (N3) -- node[below]{} (N1);
\end{tikzpicture}}
\end{center}

\item The isomorphism $\eta_\X^\Rightarrow$ is coherent, \emph{i.e.}, for any objects $a,b : J \to \X(0)_0$, the following holds:
\[ \X(\fm_\star)_1 \circ \eta_\X^\Rightarrow[a,b] = \mathsf{id} \]
\item The isomorphism $\eta_\X^\Rightarrow$ is natural, \emph{i.e.}, for any objects $a,b,c,d : J \to \X(0)_0$ and morphisms $f,g : J \to \X(0)_1$ such that 
\begin{align*}
& \source_{\X(0)}[f] = a \;\; \mathit{and} \;\; \target_{\X(0)}[f] = c \\
& \source_{\X(0)}[g] = b \;\; \mathit{and} \;\; \target_{\X(0)}[g] = d
\end{align*}
the following diagram commutes:
\begin{center}
\scalebox{0.75}{
\begin{tikzpicture}
\node (N0) at (0,2) {$\X(\dm)_0 \circ (a \Rightarrow_{\X(0)} b)$};
\node (N1) at (0,0) {$(\X(\dm)_0 \circ a) \Rightarrow_{\X(1)} (\X(\dm)_0 \circ b)$};
\node (N2) at (9.5,2) {$\X(\dm)_0 \circ (c \Rightarrow_{\X(0)} d)$};
\node (N3) at (9.5,0) {$(\X(\dm)_0 \circ c) \Rightarrow_{\X(1)} (\X(\dm)_0 \circ d)$};
\draw[->] (N0) -- node[left]{$\eta_\X^\Rightarrow[a,b]$} (N1);
\draw[->] (N0) -- node[above]{$\X(\dm)_1 \circ (f \Rightarrow_{\X(0)} g)$} (N2);
\draw[->] (N1) -- node[below]{$(\X(\dm)_1 \circ f) \Rightarrow_{\X(1)} (\X(\dm)_1 \circ g)$} (N3);
\draw[->] (N2) -- node[right]{$\eta_\X^\Rightarrow[c,d]$} (N3);
\end{tikzpicture}}
\end{center}
\end{itemize}

\begin{definition}\label{def:exponentials1}
A reflexive graph category $(\X,(\M,\I))$ with isomorphisms and products \emph{has exponentials} if $\X$ has exponentials and for any $f,g : J \to \X(l)_1$, $f \Rightarrow_{\X(l)} g$ is in the image of $\I(l)$ whenever $f$ and $g$ are. Assuming the products are stable under face maps, we say the exponentials are \emph{stable under face maps} if the exponentials in $\X$ are stable under face maps. Assuming the products are stable under degeneracies, we say the exponentials are \emph{stable under degeneracies} if the exponentials in $\X$ are stable under degeneracies and the (iso)morphism $\eta_\X^\Rightarrow$ is in the image of $\I(1)$.
\end{definition}

\begin{lemma}\label{lem:exponentials-choice}
If a reflexive graph category $(\X,(\M,\I))$ with isomorphisms has products and exponentials stable under face maps and degeneracies, then for each $n$, the category $\M^n \to \M$ has exponentials.
\end{lemma}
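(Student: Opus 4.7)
The plan is to adapt the pointwise construction from Lemma~\ref{lem:products-choice}, building the exponential of $\F$ and $\G$ in $\M^n \to \M$ from the exponentials in each $\X(l)$, with one essential twist to accommodate contravariance in the first argument. I would set $(\F \Rightarrow \G)(l)_0 \coloneqq \F(l)_0 \Rightarrow_{\X(l)} \G(l)_0$ and $(\F \Rightarrow \G)(l)_1 \coloneqq \F(l)_1^{-1} \Rightarrow_{\X(l)} \G(l)_1$, with degeneracy-coherence $\varepsilon_{\F \Rightarrow \G} \coloneqq (\varepsilon_\F^{-1} \Rightarrow_{\X(1)} \varepsilon_\G) \circ_{\M(1)} \eta_\X^\Rightarrow[\F(0)_0, \G(0)_0]$. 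The use of $\F(l)_1^{-1}$ is legitimate because $\F(l)_1$ lands in $\M(l)$, every morphism of which is an isomorphism by Definition~\ref{def:rgwi}; the same remark lets us form $\varepsilon_\F^{-1}$. That $(\F \Rightarrow \G)(l)_1$ lies in the image of $\I(l)$ and that $\varepsilon_{\F \Rightarrow \G}$ lies in the image of $\I(1)$ follows from the closure conditions in Definitions~\ref{def:exponentials1} and~\ref{def:products1}.

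Having defined $\F \Rightarrow \G$ on objects and morphisms, I would verify that it is a face map- and degeneracy-preserving reflexive graph functor by the same pattern as in Lemma~\ref{lem:products-choice}: face map-preservation is immediate from the stability of exponentials under face maps, combined with the fact that face maps preserve inverses of isomorphisms, and degeneracy-preservation is built into $\varepsilon_{\F \Rightarrow \G}$, with the necessary naturality and coherence following from the diagrams listed after Definition~\ref{def:exponentials2}.

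Evaluation is defined pointwise by $\eval_{\F,\G}(l) \coloneqq \eval_{\X(l)}[\F(l)_0, \G(l)_0]$, and currying sends $\eta : \mathcal{H} \times \F \to \G$ to the reflexive graph natural transformation with components $\lambda(\eta)(l) \coloneqq \lambda_{\X(l)}[\F(l)_0, \G(l)_0, \mathcal{H}(l)_0, \eta(l)]$. Naturality, face map-preservation, and the universal property of the exponential then reduce, via the universal property of $\lambda_{\X(l)}$, to the corresponding properties of $\eta$, exactly as the product pairing in Lemma~\ref{lem:products-choice} reduced to the universal property of $\langle -, - \rangle_{\X(l)}$.

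The main obstacle is verifying that $\lambda(\eta)$ is degeneracy-preserving. Following the strategy of Lemma~\ref{lem:products-choice}, I would reduce this to an equality after postcomposition with $\eval_{\X(1)}[\F(1)_0 \circ \M(\dm)^n_0, \G(1)_0 \circ \M(\dm)^n_0]$ suitably crossed with the identity, at which point uniqueness of currying lets us compare the two sides directly. The ensuing diagram chase combines the coherence diagram relating $\eta_\X^\Rightarrow$, $\eta_\X^\times$, and the evaluations; the naturality of $\eta_\X^\Rightarrow$ applied to $\varepsilon_\F^{-1}$ and $\varepsilon_\G$; the degeneracy-preservation of $\eta$; and bookkeeping showing that $\varepsilon_\F^{-1}$ cancels precisely against $\varepsilon_\F$ at the appropriate point. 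This is longer but structurally parallel to the product calculation, and I do not anticipate a new conceptual difficulty once the contravariant book-keeping is set up carefully.
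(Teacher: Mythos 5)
Your construction coincides with the paper's proof in every essential respect: the same definitions of $(\F \Rightarrow \G)(l)_0$, $(\F \Rightarrow \G)(l)_1$ via $\F(l)_1^{-1} \Rightarrow_{\X(l)} \G(l)_1$, and $\varepsilon_{\F \Rightarrow \G}$, the same pointwise evaluation and currying, and the same reduction of naturality and degeneracy-preservation to the universal property of $\Rightarrow_{\X(l)}$ after postcomposing with evaluation and precomposing with a suitable isomorphism. The only nitpick is that in the final diagram chase the paper invokes the naturality of $\eta_\X^\times$ (together with the coherence triangle \emph{defining} $\eta_\X^\Rightarrow$) rather than the naturality of $\eta_\X^\Rightarrow$, but this is a detail of bookkeeping, not of approach.
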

\begin{proof}
Fix $\F$ and $\G$ in $\M^n \to \M$. We define $\F \Rightarrow \G$ by:
\begin{itemize}
\item $(\F \Rightarrow \G)(l)_0 \coloneqq \F(l)_0 \Rightarrow_{\X(l)} \G(l)_0$
\item $(\F \Rightarrow \G)(l)_1 \coloneqq \F(l)^{-1}_1 \Rightarrow_{\X(l)} \G(l)_1$
\item $\varepsilon_{\F \Rightarrow \G} \coloneqq (\varepsilon_\F^{-1} \Rightarrow_{\X(1)} \varepsilon_\G) \circ_{\M(1)} \eta_\X^\mathsf{\Rightarrow}[\F(0)_0,\G(0)_0]$ 
\end{itemize}
The evaluation morphism for $\F \Rightarrow \G$ is defined as the reflexive graph natural transformation whose component at level $l$ is $\eval_{\X(l)}[\F(l)_0,\G(l)_0]$. To prove naturality -- with respect to $(\F \Rightarrow \G) \times \F$ and $\G$ -- we observe the chain of equalities below. The first and third equalities follow since $\times_{\X(1)}$ suitably commutes with $\circ_{\X(1)}$; the second equality follows by definition of $\Rightarrow_{\X(l)}$ on morphisms; and the fourth equality follows since the product of identities is again an identity.

\resizebox{.95\linewidth}{!}{
\begin{minipage}{\linewidth}
\begin{align*}
& {\color{teal}\big(\eval_{\X(l)}[\F(l)_0,\G(l)_0] \circ \target_{\M(l)^n}\big)} \circ_{\X(l)} {\color{red}\Big(\big(\F(l)_1^{-1} \Rightarrow_{\X(l)} \G(l)_1\big) \times_{\X(l)} \F(l)_1\Big)} \\
\stackrel{(1)}{=} \; & {\color{teal}\eval_{\X(l)}\big[\F(l)_0 \circ \target_{\M(l)^n},\G(l)_0 \circ \target_{\M(l)^n}\big]} \; \circ_{\X(l)} \\ 
& \;\;\; {\color{red}\Big(\big(\F(l)_1^{-1} \Rightarrow_{\X(l)} \G(l)_1\big) \times_{\X(l)} \id_{\X(l)}[\F(l)_0 \circ \target_{\M(l)^n}]\Big) \; \circ_{\X(l)}}\\
& \;\;\; {\color{red}\Big(\id_{\X(l)}\big[(\F(l)_0 \circ \source_{\M(l)^n}) \Rightarrow_{\X(l)} (\G(l)_0 \circ \source_{\M(l)^n})\big] \times_{\X(l)} \F(l)_1\Big)} \\
\stackrel{(2)}{=} \; & \G(l)_1 \circ_{\X(l)} \eval_{\X(l)}\big[\F(l)_0 \circ \source_{\M(l)^n},\G(l)_0 \circ \source_{\M(l)^n}\big] \; \circ_{\X(l)} \; \\
& \;\;\; {\color{blue}\Big(\id_{\X(l)}\big[(\F(l)_0 \circ \source_{\M(l)^n}) \Rightarrow_{\X(l)} (\G(l)_0 \circ \source_{\M(l)^n})\big] \times_{\X(l)} \F(l)_1^{-1}\Big) \; \circ_{\X(l)}} \\
& \;\;\; {\color{blue}\Big(\id_{\X(l)}\big[(\F(l)_0 \circ \source_{\M(l)^n}) \Rightarrow_{\X(l)} (\G(l)_0 \circ \source_{\M(l)^n})\big] \times_{\X(l)} \F(l)_1\Big)} \\
\stackrel{(3)}{=} \; & \G(l)_1 \circ_{\X(l)} \big(\eval_{\X(l)}\big[\F(l)_0,\G(l)_0] \circ \source_{\M(l)^n}\big) \; \circ_{\X(l)} \; \\
& \;\;\; {\color{blue}\Big(\id_{\X(l)}\big[(\F(l)_0 \circ \source_{\M(l)^n}) \Rightarrow_{\X(l)} (\G(l)_0 \circ \source_{\M(l)^n})\big] \times_{\X(l)} \id_{\X(l)}[\F(l)_0 \circ \source_{\M(l)^n}]\Big)} \\
\stackrel{(4)}{=} \; & \G(l)_1 \circ_{\X(l)} \big(\eval_{\X(l)}\big[\F(l)_0,\G(l)_0] \circ \source_{\M(l)^n}\big)
\end{align*}
\end{minipage}}\bigskip

To prove the degeneracy-preservation of the evaluation morphism -- with respect to $\varepsilon_{(\F \Rightarrow \G) \times \F}$ and $\varepsilon_\G$ -- we observe the chain of equalities below. 

\resizebox{.95\linewidth}{!}{
\begin{minipage}{\linewidth}
\begin{align*}
& {\color{teal}\big(\eval_{\X(1)}[\F(1)_0,\G(1)_0] \circ \M(\dm)_0^n\big)} \; \circ_{\X(1)} \\
& \;\;\; {\color{red}\bigg(\Big(\big(\varepsilon_\F^{-1} \Rightarrow_{\X(1)} \varepsilon_\G\big) \circ_{\X(1)} \eta_\X^\mathsf{\Rightarrow}[\F(0)_0,\G(0)_0]\Big) \times_{\X(1)} \varepsilon_\F\bigg)} \; \circ_{\X(1)}\\
& \;\;\;  \eta_\X^\mathsf{\times}\big[\F(0)_0 \Rightarrow_{\X(0)} \G(0)_0, \F(0)_0\big] \\
\stackrel{(1)}{=} \; & {\color{teal}\eval_{\X(1)}\big[\F(1)_0 \circ \M(\dm)_0^n,\G(1)_0 \circ \M(\dm)_0^n\big]} \; \circ_{\X(1)} \\ 
& \;\;\; {\color{red} \Big(\big(\varepsilon_\F^{-1} \Rightarrow_{\X(1)} \varepsilon_\G\big) \times_{\X(1)} \id_{\X(1)}[\F(1)_0 \circ \M(\dm)^n_0]\Big) \; \circ_{\X(1)}} \\
& \;\;\; {\color{red}\big(\eta_\X^\mathsf{\Rightarrow}[\F(0)_0,\G(0)_0] \times_{\X(1)} \varepsilon_\F\big)} \; \circ_{\X(1)} \\
& \;\;\; \eta_\X^\mathsf{\times}\big[\F(0)_0,\F(0) \Rightarrow_{\X(0)} \G(0)_0\big] \\
\stackrel{(2)}{=} \; & \varepsilon_\G \circ_{\X(1)} \eval_{\X(1)}\big[\X(\dm)_0 \circ \F(0)_0,\X(\dm)_0 \circ \G(0)_0\big] \; \circ_{\X(1)} \; \\
& \;\;\; {\color{blue}\Big(\id_{\X(1)}\big[(\X(\dm)_0 \circ \F(0)_0) \Rightarrow_{\X(1)} (\X(\dm)_0 \circ \G(0)_0)\big] \times_{\X(1)} \varepsilon_\F^{-1}\Big)} \; \circ_{l_2} \\
& \;\;\; {\color{blue}\big(\eta_\X^\mathsf{\Rightarrow}[\F(0)_0,\G(0)_0] \times_{\X(1)} \varepsilon_\F\big)} \; \circ_{\X(1)} \\
& \;\;\; \eta_\X^\mathsf{\times}\big[\F(0)_0,\F(0) \Rightarrow_{\X(0)} \G(0)_0\big] \\
\stackrel{(3)}{=} \; & \varepsilon_\G \circ_{\X(1)} \eval_{\X(1)}\big[\X(\dm)_0 \circ \F(0)_0,\X(\dm)_0 \circ \G(0)_0\big] \; \circ_{\X(1)} \\ 
& \;\;\; {\color{blue}\Big(\eta_\X^\mathsf{\Rightarrow}[\F(0)_0,\G(0)_0] \times_{\X(1)} \id_{\X(1)}[\X(\dm)_0 \circ \F(0)_0]\Big)} \; \circ_{\X(1)} \\
& \;\;\; \eta_\X^\mathsf{\times}\big[\F(0)_0,\F(0) \Rightarrow_{\X(0)} \G(0)_0\big] \\
\stackrel{(4)}{=} \; & \varepsilon_\G \circ_{\X(1)} \big(\X(\dm)_1 \circ \eval_{\X(0)}[\F(0)_0,\G(0)_0]\big)
\end{align*}
\end{minipage}}\bigskip

The first and third equalities follow since $\times_{\X(1)}$ suitably commutes with $\circ_{\X(1)}$; the second equality follows by definition of $\Rightarrow_{\X(1)}$ on morphisms; and the fourth equality follows by definition of $\eta_\X^\Rightarrow$. The preservation of face maps follows by the exact same argument. This shows that the evaluation morphism is indeed a proper morphism.\smallskip

To show that $\F \Rightarrow \G$ with the aforementioned evaluation is an exponential, fix $\mathcal{H}$ and $\eta : \F \times \mathcal{H} \to \G$. The universal morphism from $\mathcal{H}$ into $\F \Rightarrow \G$ is the reflexive graph natural transformation whose component at level $l$ is $\lambda_{\X(l)}[\F(l)_0,\G(l)_0,\mathcal{H}(l)_0,\eta(l)]$. To show naturality -- with respect to $\mathcal{H}$ and $\F \Rightarrow \G$ -- we need to establish the equality
\begin{align*}
& \big(\lambda_{\X(l)}[\F(l)_0,\G(l)_0,\mathcal{H}(l)_0,\eta(l)] \circ \target_{\M(l)^n}\big) \circ_{\X(l)} \mathcal{H}(l)_1 = \\
& \;\;\; \big(\F(l)_1^{-1} \Rightarrow_{\X(l)} \G(l)_1\big) \circ_{\X(l)} \big(\lambda_{\X(l)}[\F(l)_0,\G(l)_0,\mathcal{H}(l)_0,\eta(l)] \circ \source_{\M(l)^n}\big)
\end{align*}

The target of the two morphisms is an exponential, so it suffices to check that taking a product of each morphism with the identity and postcomposing with the evaluation morphism yields the same result. Moreover, since $\id_{\X(l)}[\mathcal{H}(l)_0~\circ~\source_{\M(l)^n}]~\times_{\X(l)}~\F(l)_1$ is an isomorphism, it suffices to show that a further precomposition with this isomorphism yields the same result. To this end we observe the chain of equalities below. Equalities $(1)$, $(5)$, $(7)$, $(8)$, and the green part of $(2)$ follow since $\times_{\X(l)}$ suitably commutes with $\circ_{\X(l)}$; equality $(4)$ and the red part of $(2)$ follow by the definition of $\lambda_{\X(l)}$; equality $(3)$ follows by the degeneracy-preservation of $\eta$; and equality $(6)$ follows by the definition of $\Rightarrow_{\X(l)}$.

\resizebox{0.95\linewidth}{!}{
\begin{minipage}{\linewidth}
\begin{align*}
& \eval_{\X(l)}\big[\F(l)_0 \circ \target_{\M(l)^n}, \G(l)_0 \circ\target_{\M(l)^n}\big] \; \circ_{\X(l)} \\ 
& \;\;\; \bigg(\Big(\big(\lambda_{\X(l)}[\F(l)_0,\G(l)_0,\mathcal{H}(l)_0,\eta(l)] \circ \target_{\M(l)^n}\big) \circ_{\X(l)} \mathcal{H}(l)_1\Big) \; \times_{\X(l)} \\
& \;\;\; \id_{\X(l)}[\F(l)_0 \circ \target_{\M(l)^n}]\bigg) \circ_{\X(l)} \big(\id_{\X(l)}[\mathcal{H}(l)_0 \circ \source_{\M(l)^n}] \times_{\X(l)} \F(l)_1\big) \\
\stackrel{(1)}{=} \; & {\color{red} \eval_{\X(l)}\big[\F(l)_0 \circ \target_{\M(l)^n}, \G(l)_0 \circ \target_{\M(l)^n}\big] \; \circ_{\X(l)}} \\
& \;\;\; {\color{red} \Big(\big(\lambda_{\X(l)}[\F(l)_0,\G(l)_0,\mathcal{H}(l)_0,\eta(l)] \circ \target_{\M(l)^n}\big) \times_{\X(l)} \id_{\X(l)}[\F(l)_0 \circ \target_{\M(l)^n}]\Big)} \; \circ_{\X(l)} \\
& \;\;\; {\color{teal} \big(\mathcal{H}(l)_1 \times_{\X(l)} \id_{\X(l)}[\F(l)_0 \circ \target_{\M(l)^n}]\big) \circ_{\X(l)} \big(\id_{\X(l)}[\mathcal{H}(l)_0 \circ \source_{\M(l)^n}] \times_{\X(l)} \F(l)_1\big)} \\
\stackrel{(2)}{=} \; & {\color{red} (\eta(l) \circ \target_{\M(l)^n})} \circ_{\X(l)} {\color{teal}(\mathcal{H}(l)_1 \times_{\X(l)} \F(l)_1)} \\
\stackrel{(3)}{=} \; & \G(l)_1 \circ_{\X(l)} (\eta(l) \circ \source_{\M(l)^n}) \\
\stackrel{(4)}{=} \; & \G(l)_1 \circ_{\X(l)} \eval_{\X(l)} \big[\F(l)_0 \circ \source_{\M(l)^n}, \G(l)_0 \circ \source_{\M(l)^n}\big] \; \circ_{\X(l)} \\
& \;\;\; {\color{blue}\Big(\big(\lambda_{\X(l)}[\F(l)_0,\G(l)_0,\mathcal{H}(l)_0,\eta(l)] \circ \source_{\M(l)^n}\big) \times_{\X(l)} \id_{\X(l)}[\F(l)_0 \circ \source_{\M(l)^n}]\Big)} \\
\stackrel{(5)}{=} \; & \G(l)_1 \circ_{\X(l)} \eval_{\X(l)}\big[\F(l)_0 \circ \source_{\M(l)^n}, \G(l)_0 \circ \source_{\M(l)^n}\big] \; \circ_{\X(l)} \\
& \;\;\; {\color{blue}\Big(\id_{\X(l)}\big[(\F(l)_0 \circ \source_{\M(l)^n}) \Rightarrow_{\X(l)} (\G(l)_0 \circ \source_{\M(l)^n})\big] \times_{\X(l)} \F(l)_1^{-1} \Big)} \; \circ_{\X(l)} \\
& \;\;\; {\color{blue}\Big(\big(\lambda_{\X(l)}[\F(l)_0,\G(l)_0,\mathcal{H}(l)_0,\eta(l)] \circ \source_{\M(l)^n}\big) \times_{\X(l)} \F(l)_1\Big)} \\ 
\stackrel{(6)}{=} \; &\eval_{\X(l)}\big[\F(l)_0 \circ \target_{\M(l)^n}, \G(l)_0 \circ\target_{\M(l)^n}\big] \; \circ_{\X(l)} \\
& \;\;\; \Big(\big(\F(l)_1^{-1} \Rightarrow_{\X(l)} \G(l)_1\big) \times_{\X(l)} \id_{\X(l)}[\F(l)_0 \circ \target_{\M(l)^n}]\Big) \; \circ_{\X(l)} \\
& \;\;\; \Big(\big(\lambda_{\X(l)}[\F(l)_0,\G(l)_0,\mathcal{H}(l)_0,\eta(l)] \circ \source_{\M(l)^n}\big) \times_{\X(l)} \F(l)_1\Big) \\
\stackrel{(7)}{=} \; &\eval_{\X(l)}\big[\F(l)_0 \circ \target_{\M(l)^n}, \G(l)_0 \circ\target_{\M(l)^n}\big] \; \circ_{\X(l)} \\
& \;\;\; {\color{purple} \Big(\big(\F(l)_1^{-1} \Rightarrow_{\X(l)} \G(l)_1\big) \times_{\X(l)} \id_{\X(l)}[\F(l)_0 \circ \target_{\M(l)^n}]\Big) \; \circ_{\X(l)}} \\
& \;\;\; {\color{purple} \Big(\big(\lambda_{\X(l)}[\F(l)_0,\G(l)_0,\mathcal{H}(l)_0,\eta(l)] \circ \source_{\M(l)^n}\big) \times_{\X(l)} \id_{\X(l)}[\F(l)_0 \circ \target_{\M(l)^n}]\Big)} \; \circ_{\X(l)} \\ 
& \;\;\; \big(\id_{\X(l)}[\mathcal{H}(l)_0 \circ \source_{\M(l)^n}] \times_{\X(l)} \F(l)_1\big) \\
\stackrel{(8)}{=} \; &\eval_{\X(l)}\big[\F(l)_0 \circ \target_{\M(l)^n}, \G(l)_0 \circ\target_{\M(l)^n}\big] \circ_{\X(l)} {\color{purple}\bigg(\Big(\big(\F(l)_1^{-1} \Rightarrow_{\X(l)} \G(l)_1\big) \; \circ_{\X(l)}} \\
& \;\;\; {\color{purple}\big(\lambda_{\X(l)}[\F(l)_0,\G(l)_0,\mathcal{H}(l)_0,\eta(l)] \circ \source_{\M(l)^n}\big)\Big) \times_{\X(l)} \id_{\X(l)}[\F(l)_0 \circ \target_{\M(l)^n}]\bigg)} \; \circ_{\X(l)} \\
& \;\;\; \big(\id_{\X(l)}[\mathcal{H}(l)_0 \circ \source_{\M(l)^n}] \times_{\X(l)} \F(l)_1\big)
\end{align*}
\end{minipage}}\pagebreak

To prove that our candidate universal morphism is degeneracy-preserving -- with respect to $\varepsilon_\mathcal{H}$ and $\varepsilon_{\F \Rightarrow \G}$ -- we need to establish the following equality:
\begin{align*}
& \big(\lambda_{\X(1)}[\F(1)_0,\G(1)_0,\mathcal{H}(1)_0,\eta(1)] \circ \M(\dm)_0^n\big) \circ_{\X(1)} \varepsilon_\mathcal{H} = \\
& \;\;\; \big(\varepsilon_\F^{-1} \Rightarrow_{\X(1)} \varepsilon_\G\big) \circ_{\X(1)} \eta_\X^\mathsf{\Rightarrow}[\F(0)_0,\G(0)_0]\; \circ_{\X(1)} \\ 
& \;\;\; \big(\X(\dm)_1 \circ \lambda_{\X(0)}[\F(0)_0,\G(0)_0,\mathcal{H}(0)_0,\eta(0)]\big)
\end{align*}

The target of the two morphisms is an exponential, so it suffices to check that taking a product of each morphism with the identity and postcomposing with the evaluation morphism yields the same result. Moreover, since 
\[\big(\id_{\X(1)}[\X(\dm)_0 \circ \mathcal{H}(0)_0]  \times_{\X(1)} \varepsilon_\F\big) \circ_{\X(1)} \eta_\X^\times[\F(0)_0,\mathcal{H}(0)_0] \]
is an isomorphism, it suffices to show that a further precomposition with this isomorphism yields the same result. To this end we observe the two chains of equalities below. Equalities $(1)$, $(7)$, $(9)$, $(10)$, and the green part of $(2)$ follow since $\times_{\X(1)}$ suitably commutes with $\circ_{\X(1)}$; equality $(4)$ and the red part of $(2)$ follow by the definition of $\lambda_{\X(l)}$; equality $(3)$ follows by the degeneracy-preservation of $\eta$; equality $(5)$ follows by the functoriality of $\X(\dm)$; equality $(6)$ follows by the definition of $\eta_\X^\Rightarrow$; the orange part of equality $(8)$ follows by the definition of $\Rightarrow_{\X(1)}$ on morphisms; and the purple part of equality $(8)$ follows by the naturality of $\eta_\X^\times$. The preservation of face maps is shown by the exact same argument. 

This shows that our candidate universal morphism is a proper morphism. Its universality and uniqueness are obvious by the universal property of $\Rightarrow_{\X(l)}$.

\resizebox{0.95\linewidth}{!}{
\begin{minipage}{\linewidth}
\begin{align*}
& \eval_{\X(1)}\big[\F(1)_0 \circ \M(\dm)_0^n, \G(1)_0 \circ \M(\dm)_0^n\big] \; \circ_{\X(1)} \\
& \;\;\; \bigg(\Big(\big(\lambda_{\X(1)}[\F(1)_0,\G(1)_0,\mathcal{H}(1)_0,\eta(1)] \circ \M(\dm)_0^n\big) \circ_{\X(1)} \varepsilon_\mathcal{H}\Big) \\
& \;\;\; \times_{\X(1)} \id_{\X(1)}[\F(1)_0 \circ \M(\dm)_0^n]\bigg) \circ_{\X(1)} \big(\id_{\X(1)}[\X(\dm)_0 \circ \mathcal{H}(0)_0] \times_{\X(1)} \varepsilon_\F\big) \; \circ_{\X(1)} \\
& \;\;\; \eta_\X^\times[\F(0)_0,\mathcal{H}(0)_0] \\
\stackrel{(1)}{=} \; & {\color{red} \eval_{\X(1)}\big[\F(1)_0 \circ \M(\dm)_0^n, \G(1)_0 \circ \M(\dm)_0^n\big] \; \circ_{\X(1)}} \\ 
& \;\;\; {\color{red} \Big(\big(\lambda_{\X(1)}[\F(1)_0,\G(1)_0,\mathcal{H}(1)_0,\eta(1)] \circ \M(\dm)_0^n\big) \times_{\X(1)} \id_{\X(1)}[\F(1)_0 \circ \M(\dm)_0^n]\Big)} \; \circ_{\X(1)} \\
& \;\;\; {\color{teal}\big(\varepsilon_\mathcal{H} \times_{\X(1)} \id_{\X(1)}[\F(1)_0 \circ \M(\dm)_0^n]\big) \circ_{\X(1)} \big(\id_{\X(1)}[\X(\dm)_0 \circ \mathcal{H}(0)_0] \times_{\X(1)} \varepsilon_\F \big)} \; \circ_{\X(1)} \\
& \;\;\; \eta_\X^\times[\F(0)_0,\mathcal{H}(0)_0] \\
\stackrel{(2)}{=} \; & {\color{red} (\eta(1) \circ \M(\dm)_0^n)} \circ_{\X(1)} {\color{teal} (\varepsilon_\mathcal{H} \times_{\X(1)} \varepsilon_\F)} \circ_{\X(1)} \eta_\X^\times[\F(0)_0,\mathcal{H}(0)_0] \\
\stackrel{(3)}{=} \; & \varepsilon_\G \circ_{\X(1)} (\X(\dm)_1 \circ \eta(0)) \\
\stackrel{(4)}{=} \; & \varepsilon_\G \circ_{\X(1)} \bigg(\X(\dm)_1 \circ \Big(\eval_{\X(0)}[\F(0)_0,\G(0)_0] \; \circ_{\X(0)}  \\
& \;\;\; \big(\lambda_{\X(0)}[\F(0)_0,\G(0)_0,\mathcal{H}(0)_0,\eta(0)] \times_{\X(0)} \id_{\X(0)}[\F(0)_0] \big)\Big)\bigg)\\
\stackrel{(5)}{=} \; & \varepsilon_\G \circ_{\X(1)} \big(\X(\dm)_1 \circ \eval_{\X(0)}[\F(0)_0,\G(0)_0]\big) \; \circ_{\X(1)} \\
& \;\;\; \Big(\X(\dm)_1 \circ \big(\lambda_{\X(0)}[\F(0)_0,\G(0)_0,\mathcal{H}(0)_0,\eta(0)] \times_{\X(0)} \id_{\X(0)}[\F(0)_0]\big)\Big)
\end{align*}
\end{minipage}}\bigskip

\resizebox{0.95\linewidth}{!}{
\begin{minipage}{\linewidth}
\begin{align*}
& \varepsilon_\G \circ_{\X(1)} {\color{blue} \big(\X(\dm)_1 \circ \eval_{\X(0)}[\F(0)_0,\G(0)_0]\big)} \; \circ_{\X(1)} \\
& \;\;\; \Big(\X(\dm)_1 \circ \big(\lambda_{\X(0)}[\F(0)_0,\G(0)_0,\mathcal{H}(0)_0,\eta(0)] \times_{\X(0)} \id_{\X(0)}[\F(0)_0]\big)\Big) \\
\stackrel{(6)}{=} \; & \varepsilon_\G \circ_{\X(1)} {\color{blue} \eval_{\X(1)}\big[\X(\dm)_0 \circ \F(0)_0, \X(\dm)_0 \circ \G(0)_0\big] \; \circ_{\X(1)}} \\
& \;\;\; {\color{blue} \big(\eta_\X^\mathsf{\Rightarrow}[\F(0)_0,\G(0)_0] \times_{\X(1)} \id_{\X(1)}[\X(\dm)_0 \circ \F(0)_0]\big) \; \circ_{\X(1)}} \\ 
& \;\;\; {\color{blue} \eta_\X^\times\big[\F(0)_0 \Rightarrow_{\X(1)} \mathcal{G}(0)_0,\F(0)_0\big]} \; \circ_{\X(1)} \\
& \;\;\; \Big(\X(\dm)_1 \circ \big(\lambda_{\X(0)}[\F(0)_0,\G(0)_0,\mathcal{H}(0)_0,\eta(0)] \times_{\X(0)} \id_{\X(0)}[\F(0)_0]\big)\Big) \\
\stackrel{(7)}{=} \; & {\color{orange} \varepsilon_\G \circ_{\X(1)} \eval_{\X(1)}\big[\X(\dm)_0 \circ \F(0)_0, \X(\dm)_0 \circ \G(0)_0\big] \; \circ_{\X(1)}} \\
& \;\;\; {\color{orange}\Big(\id_{\X(1)}\big[(\X(\dm)_0 \circ \F(0)_0) \Rightarrow_{\X(1)} (\X(\dm)_0 \circ \G(0)_0)\big] \times_{\X(1)} \varepsilon_\F^{-1}\Big)} \; \circ_{\X(1)} \\
& \;\;\; \big(\eta_\X^\mathsf{\Rightarrow}[\F(0)_0,\G(0)_0] \times_{\X(1)} \varepsilon_\F\big) \; \circ_{\X(1)} \\
& \;\;\; {\color{purple} \eta_\X^\times\big[\F(0)_0 \Rightarrow_{\X(1)} \mathcal{G}(0)_0,\F(0)_0\big] \; \circ_{\X(1)}} \\
& \;\;\; {\color{purple} \Big(\X(\dm)_1 \circ \big(\lambda_{\X(0)}[\F(0)_0,\G(0)_0,\mathcal{H}(0)_0,\eta(0)] \times_{\X(0)} \id_{\X(0)}[\F(0)_0]\big)\Big)} \\
\stackrel{(8)}{=} \; & {\color{orange}\eval_{\X(1)}\big[\F(1)_0 \circ \M(\dm)_0^n, \G(1)_0 \circ \M(\dm)_0^n\big] \; \circ_{\X(1)}} \\
& \;\;\; {\color{orange} \Big(\big(\varepsilon_\F^{-1} \Rightarrow_{\X(1)} \varepsilon_\G\big) \times_{\X(1)} \id_{\X(1)}[\F(1)_0 \circ \M(\dm)_0^n]\Big)} \; \circ_{\X(1)} \\
& \;\;\; \big(\eta_\X^\mathsf{\Rightarrow}[\F(0)_0,\G(0)_0] \times_{\X(1)} \varepsilon_\F\big)\; \circ_{\X(1)} \\
& \;\;\; {\color{purple}\Big(\big(\X(\dm)_1 \circ \lambda_{\X(0)}[\F(0)_0,\G(0)_0,\mathcal{H}(0)_0,\eta(0)]\big) \times_{\X(1)} \id_{\X(1)}[\X(\dm)_0 \circ \F(0)_0]\Big) \; \circ_{\X(1)}} \\
& {\color{purple}\;\;\; \eta_\X^\times[\F(0)_0,\mathcal{H}(0)_0]} \\
\stackrel{(9)}{=} \; & \eval_{\X(1)}\big[\F(1)_0 \circ \M(\dm)_0^n, \G(1)_0 \circ \M(\dm)_0^n\big] \; \circ_{\X(1)} \\
& \;\;\; {\color{olive} \Big(\big(\varepsilon_\F^{-1} \Rightarrow_{\X(1)} \varepsilon_\G\big) \times_{\X(1)} \id_{\X(1)}[\F(1)_0 \circ \M(\dm)_0^n]\Big) \; \circ_{\X(1)}} \\
& \;\;\; {\color{olive} \Big(\eta_\X^\mathsf{\Rightarrow}[\F(0)_0,\G(0)_0] \times_{\X(1)} \id_{\X(1)}[\F(1)_0 \circ \M(\dm)_0^n]\Big) \; \circ_{\X(1)}} \\
& \;\;\; {\color{olive} \Big(\big(\X(\dm)_1 \circ \lambda_{\X(0)}[\F(0)_0,\G(0)_0,\mathcal{H}(0)_0,\eta(0)]\big) \times_{\X(1)} \id_{\X(1)}[\F(1)_0 \circ \M(\dm)_0^n]\Big)} \; \circ_{\X(1)} \\
& \;\;\; \big(\id_{\X(1)}[\X(\dm)_0 \circ \mathcal{H}(0)_0] \times_{\X(1)} \varepsilon_\F\big) \circ_{\X(1)} \eta_\X^\times[\F(0)_0,\mathcal{H}(0)_0] \\
\stackrel{(10)}{=} \; & \eval_{\X(1)}\big[\F(1)_0 \circ \M(\dm)_0^n, \G(1)_0 \circ \M(\dm)_0^n\big] \; \circ_{\X(1)} \\
& \;\;\;{\color{olive}\bigg(\Big(\big(\varepsilon_\F^{-1} \Rightarrow_{\X(1)} \varepsilon_\G\big) \circ_{\X(1)} \eta_\X^\mathsf{\Rightarrow}[\F(0)_0,\G(0)_0]\; \circ_{\X(1)}} \\
& \;\;\; {\color{olive} \big(\X(\dm)_1 \circ \lambda_{\X(1)}[\F(0)_0,\G(0)_0,\mathcal{H}(0)_0,\eta(0)]\big)\Big) \times_{\X(1)} \id_{\X(1)}[\F(1)_0 \circ \M(\dm)_0^n]\bigg)}  \; \circ_{\X(1)} \\
& \;\;\; \big(\id_{\X(1)}[\X(\dm)_0 \circ \mathcal{H}(0)_0] \times_{\X(1)} \varepsilon_\F\big) \circ_{\X(1)} \eta_\X^\times[\F(0)_0,\mathcal{H}(0)_0]
\end{align*}
\end{minipage}}\pagebreak
\end{proof}

\begin{definition}\label{def:ccrgcwi}
A reflexive graph category with isomorphisms is \emph{cartesian closed} if it has terminal objects, products, and exponentials, all
stable under face maps and degeneracies.
\end{definition}

\begin{example}\label{ex:per-cont}[PER model, continued]
Terminal objects, products, and exponentials are defined for
$\mathcal{R}_\mathit{PER}$ in the obvious ways, inheriting from the
corresponding constructs on PERs. It is not hard to check that all of
these constructs are preserved on the nose by the two face maps
(projections) and the degeneracy (equality functor), and thus, in our
terminology, are stable under face maps and degeneracies.
\end{example}

\begin{example}\label{ex:rey-cont}[Both versions of Reynolds' model, continued]
Here, too, terminal objects, products, and exponentials are defined
for $\mathcal{R}_\mathit{REY}$ and $\mathcal{R}_\mathit{CREY}$ in the
obvious ways, relating two pairs iff their first and second components
are related, and two functions iff they map related arguments to
related results. It is easy to see that all of these
constructs are preserved on the nose ({\em i.e.}, up to \emph{definitional} equality) by the projections, and thus are stable
under face maps. Unlike in the PER model though, they are only
preserved by the equality functor $\Eq$ up to (the canonical)
isomorphism. For example, as discussed just after
Definition~\ref{def:rgfmp}, the two types $\Id((a,b),(c,d))$ and
$\Id(a,c) \times \Id(b,d)$ for $(a,b), (c,d) : A \times B$ are not necessarily identical, although they are isomorphic under the canonical
(iso)morphism $\eta^\times[A,B] : \Eq(A \times B) \to \Eq(A) \times
\Eq(B)$. A similar situation arises for function types $A \to B$: by
function extensionality, $\Id(f,g)$ and $\Pi_{a,a':A} \Id(f(a),
g(a'))$ are isomorphic, but not necessarily identical, via
$\eta^\Rightarrow[A,B]$.  Nevertheless, we still get stability under
degeneracies since we explicitly allowed for this possibility in
Definition~\ref{def:exponentials1}.
\end{example}

\section{Reflexive Graph Models of Parametricity}\label{sec:main}
As Examples~\ref{ex:per-cont} and~\ref{ex:rey-cont} show, cartesian
closed reflexive graph categories with isomorphisms suitably
generalize the structure of sets and relations. Moreover, they allow
us to interpret unit, product, and function types in a natural
way. To show this, we introduce the following terminology, presented in a form more general than we need for interpreting the simply-typed fragment of System F but paralleling the later terminology used for interpreting the impredicative fragment. 

\begin{definition}
\emph{A $\lambda^\to$-fibration} is a split fibration $U : \mathcal{E} \to \mathcal{B}$ satisfying the following
properties:
\begin{enumerate}
\item The objects of $\mathcal{B}$ are in bijection with $\nat$, with $0$ serving as a terminal object in $\mathcal{B}$, $1$ serving as a split generic object for $U$, and $n + 1$ serving as a product of $n$ and $1$.
\item Every fiber $\E_n$ for $n$ in $\mathcal{B}$ is cartesian closed, with a terminal object $\mathsf{1}_n$, products $\times_n$, and exponentials $\Rightarrow_n$.
\item Beck-Chevalley: for any $f : n \to m$ in $\mathcal{B}$ and objects $X,Y$ in $\E_m$, the canonical morphisms below are isomorphisms: 
\begin{align*}
& \theta_\mathsf{1}(f) : f^*(\mathsf{1}_m) \to \mathsf{1}_n \\ 
& \theta_\times(f,X,Y) : f^*(X \times_m Y) \to \big(f^*(X) \times_n f^*(Y)\big)\\ 
& \theta_\Rightarrow(f,X,Y) : f^*(X \Rightarrow_m Y) \to \big(f^*(X) \Rightarrow_n f^*(Y)\big)
\end{align*}
\end{enumerate}
A $\lambda^\to$-fibration is \emph{split} if these canonical morphisms are identities.
\end{definition}

\noindent Using a similar idea as in the proof of Lemma 5.2.4 of \cite{jac99}, we can show:
\begin{lemma}\label{lem:strict}
Every $\lambda^\to$-fibration is equivalent to a split $\lambda^\to$-fibration in a canonical way.
\end{lemma}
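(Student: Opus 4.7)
The strategy is standard strictification: replace $U : \mathcal{E} \to \mathcal{B}$ by an equivalent fibration whose objects are coherent indexed families of objects of $\mathcal{E}$, so that reindexing acts by precomposition of indexing families and hence strictly preserves the transported CCC operations. This is the natural adaptation of Jacobs' Lemma 5.2.4 to the $\lambda^\to$-fibration setting; what needs strictifying here is only the fiberwise CCC structure, since $U$ is already split.

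Concretely, I would define $\mathcal{E}'$ as follows. Its objects over $n$ are pairs $(X_\bullet, \alpha_\bullet)$, where $X_f$ is an object of $\mathcal{E}_m$ specified for each morphism $f : m \to n$ in $\mathcal{B}$, and $\alpha_{f,g} : X_{f \circ g} \to g^*(X_f)$ is a vertical isomorphism for each composable pair $g : k \to m$, $f : m \to n$, subject to $\alpha_{f,\id} = \id$ and the evident cocycle condition on triples. Morphisms over $h : n' \to n$ are families of vertical maps $X_{h \circ f} \to Y_f$ compatible with the coherence data. Reindexing along $h$ sends $(X_\bullet, \alpha_\bullet)$ to the family $f \mapsto X_{h \circ f}$ equipped with the induced coherences, and is strictly functorial in $h$ by construction. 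The fibered functor $E : \mathcal{E}' \to \mathcal{E}$ sending $(X_\bullet, \alpha_\bullet)$ to $X_{\id_n}$ is an equivalence over $\mathcal{B}$, with quasi-inverse $X \mapsto (f \mapsto f^*(X))$ equipped with the canonical cleavage coherences available since $U$ is split. I would then transport the CCC structure componentwise: the terminal of $\mathcal{E}'_n$ is the family $f \mapsto \mathsf{1}_m$, and products and exponentials are defined fiberwise, with the $\alpha$-data assembled from $\theta_\mathsf{1}, \theta_\times, \theta_\Rightarrow$ of $U$. Because reindexing in $\mathcal{E}'$ is precomposition of families, it commutes strictly with $\mathsf{1}, \times, \Rightarrow$, so the Beck-Chevalley morphisms are identities and the resulting fibration $U'$ is split in the required strong sense; the base $\mathcal{B}$ and its generic object $1$ are unchanged.

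The main obstacle is coherence bookkeeping: verifying that $\theta_\mathsf{1}, \theta_\times, \theta_\Rightarrow$ genuinely assemble into valid $\alpha$-data for the componentwise-defined operations (the cocycle and identity laws for each), and that the equivalence $E$ commutes with the CCC structure up to canonical $\theta$-isomorphism, so that it is an equivalence of $\lambda^\to$-fibrations and not merely of the underlying split fibrations. Both are routine but intricate diagram chases using naturality of the $\theta$'s and the pseudofunctoriality of the cleavage that comes from $U$ being split.
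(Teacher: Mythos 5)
The paper gives no actual proof of this lemma beyond the pointer to Jacobs' Lemma 5.2.4, and your family-indexed strictification is exactly the kind of construction that pointer is meant to invoke. The fiberwise part of your argument is sound: coherent families with precomposition reindexing do make the substitution functors commute on the nose with the componentwise terminal objects, products, and exponentials, and the coherence bookkeeping you defer (cocycle laws for the $\alpha$-data assembled from $\theta_{\mathsf{1}}, \theta_\times, \theta_\Rightarrow$) is indeed routine, since those comparison maps are induced by universal properties.

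The genuine gap is your closing claim that ``the base $\mathcal{B}$ and its generic object $1$ are unchanged.'' A $\lambda^\to$-fibration is required to have a \emph{split} generic object, which in this paper means a natural \emph{bijection} between $\mathcal{B}(n,1)$ and the set of objects of the fiber over $n$. Your $\mathcal{E}'_n$ has strictly more objects than $\mathcal{E}_n$: every coherent family $(X_\bullet,\alpha_\bullet)$ is an object, not just those of the form $f \mapsto f^*(X)$ with identity coherences, and the componentwise products and exponentials necessarily land outside the image of your quasi-inverse (otherwise nothing would have been strictified). So with $\mathcal{B}$ unchanged, $1$ is at best a generic object up to isomorphism, and condition 1 of the definition of a $\lambda^\to$-fibration fails for $U'$. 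The repair is to replace the base as well --- take the morphisms $n \to m$ of the new base to be $m$-tuples of objects of $\mathcal{E}'_n$, with composition given by the now-strict reindexing, exactly as the paper does for $\ctx(\R)$ in Definition~\ref{def:contexts} and Lemma~\ref{lem:split-fib} --- and then verify that the resulting base is equivalent to $\mathcal{B}$ and that your fibered functor $E$ descends to an equivalence over it. That verification is not entirely trivial (the obvious comparison functor from the new base to $\mathcal{B}$ preserves composition only up to the $\alpha$-isomorphisms), and it is where the remaining substance of the lemma lies; your proposal does not address it.
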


\noindent We now come to our main technical lemma:

\begin{theorem}\label{lem:cartesian_split}
Given a cartesian closed reflexive graph category $\R$ with isomorphisms, the forgetful functor from the category $\int_n \,\M^n \to \M$ to $\ctx(\R)$ is a split $\lambda^\to$-fibration.
\end{theorem}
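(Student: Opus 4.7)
The plan is to verify the three defining clauses of a $\lambda^\to$-fibration and then confirm splitness. Clause~(1) is immediate: by Lemma~\ref{lem:term-and-products}, $\ctx(\R)$ has objects $\nat$ with $0$ terminal and $n+1$ a product of $n$ and $1$, and by Lemma~\ref{lem:split-fib} the forgetful functor is a split fibration with $1$ as a split generic object. Clause~(2) follows by combining Lemmas~\ref{lem:terminal-choice}, \ref{lem:products-choice}, and~\ref{lem:exponentials-choice}: since $\R$ is cartesian closed, each fiber $\M^n \to \M$ inherits a terminal object $1_n$, binary products, and exponentials.

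The remaining work is to establish clause~(3) together with the splitness clause, which amounts to showing that substitution preserves the fiberwise cartesian closed structure on the nose. Fix $\mathbf{F} = (\F_0, \ldots, \F_{m-1}) : n \to m$ in $\ctx(\R)$ and objects $\G, \G'$ of $\M^m \to \M$. I will show that $\mathbf{F}^*(1_m) = 1_n$, $\mathbf{F}^*(\G \times \G') = \mathbf{F}^*(\G) \times \mathbf{F}^*(\G')$, and $\mathbf{F}^*(\G \Rightarrow \G') = \mathbf{F}^*(\G) \Rightarrow \mathbf{F}^*(\G')$ as objects of $\M^n \to \M$. Given these on-the-nose equalities, each canonical comparison map $\theta_1$, $\theta_\times$, and $\theta_\Rightarrow$ is forced to be the identity by the uniqueness parts of the relevant universal properties, so the fibration is automatically split.

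For each of the three equalities the object and morphism components at level $l \in \{0,1\}$ follow by a direct unfolding of Definition~\ref{def:subst} alongside the constructions in Lemmas~\ref{lem:terminal-choice}, \ref{lem:products-choice}, and~\ref{lem:exponentials-choice}, using that $\times_{\X(l)}$, $\Rightarrow_{\X(l)}$, and $1_{\X(l)}$ are defined fiberwise in $\X(l)$ and therefore commute with precomposition by the generalized morphism $\langle \F_0, \ldots, \F_{m-1}\rangle(l)_0$ (respectively $(l)_1$). For instance, in the product case $(\mathbf{F}^*(\G \times \G'))(l)_0$ unfolds to $(\G(l)_0 \times_{\X(l)} \G'(l)_0) \circ \langle \F_0, \ldots, \F_{m-1}\rangle(l)_0$, which reduces to $(\mathbf{F}^*(\G))(l)_0 \times_{\X(l)} (\mathbf{F}^*(\G'))(l)_0$ by the same distributive observation used in Lemma~\ref{lem:subst_prop}(iii); the terminal and exponential cases are analogous, with the latter additionally invoking that $(\cdot)^{-1}$ commutes with precomposition.

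The main obstacle, and the only step requiring genuine computation, is matching the coherence isomorphisms $\varepsilon$ on the two sides of each equality. Unfolding $\varepsilon_{\mathbf{F}^*(-)}$ via Definition~\ref{def:comp_func} produces a two-factor expression that must be rewritten into the $\varepsilon$ supplied by Lemma~\ref{lem:products-choice} (respectively Lemmas~\ref{lem:terminal-choice} and~\ref{lem:exponentials-choice}) applied to $\mathbf{F}^*(\G)$ and $\mathbf{F}^*(\G')$. The required manipulation is analogous in style to the coloured chain of equalities in Figure~\ref{figEE}: one uses the functoriality of $\X(1)_1$, the naturality of $\eta_\X^\times$ and $\eta_\X^\Rightarrow$ (and, for the terminal case, the universal property of $1_{\X(1)}$), and the compatibility of $\circ_{\M(1)}$ with precomposition in $\C$ to rewrite both expressions into a common normal form. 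This is tedious but routine, and no new idea beyond those already used in the proofs of Lemmas~\ref{lem:subst_prop}, \ref{lem:products-choice}, and~\ref{lem:exponentials-choice} is required.
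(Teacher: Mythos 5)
Your proposal is correct and follows the route the paper intends: the paper states this theorem without an explicit proof, leaving it as the assembly of Lemma~\ref{lem:split-fib}, Lemma~\ref{lem:term-and-products}, and Lemmas~\ref{lem:terminal-choice}, \ref{lem:products-choice}, and~\ref{lem:exponentials-choice}, and you assemble exactly these pieces. You also correctly identify and sketch the one step those lemmas leave open --- that $\mathbf{F}^*$ preserves $1_n$, $\times$, and $\Rightarrow$ (including the $\varepsilon$ components and the chosen projections/evaluations) on the nose via commutation with precomposition, so that $\theta_{\mathsf{1}}$, $\theta_\times$, and $\theta_\Rightarrow$ are identities --- which is precisely what splitness requires.
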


To interpret $\forall$-types we need to know that, in the forgetful
fibration from Lemma~\ref{lem:cartesian_split}, each weakening functor
induced by the first projection from $n + 1$ to $n$ for $n \in \nat$
has a right adjoint $\forall_n$. Here we differ from \cite{dr04},
where only $\forall_0$ is required, with the intention that
$\forall_n$ can be derived from $\forall_0$ using partial
application. We observe that this approach does not appear to work
since a partial application of an indexed functor is not necessarily
an indexed functor. Hence we require an entire family of adjoints
$\forall_n$.

\begin{example}[PER model, continued]\label{ex:per2}
Define the adjoint $\forall_n$ by
\begin{align*}
&\forall_n\,\F(0) \; \overline{A} \coloneqq \big\{ (m,k) \; | \; \text{ for all} \; A, (m,k) \in \F(0) \, (\overline{A},A), \\ 
& \;\;\; \text{and for all} \; R, \langle m,k \rangle \in \F(1)(\overline{\Eq\,A},R) \big\} \\
&\forall_n\,\F(1) \; \overline{R} \coloneqq \Big(\big(\forall_n \, \F(0) \; \overline{R_\mathsf{d}}, \forall_n \, \F(0) \; \overline{R_\mathsf{c}}\big), \\  
& \;\;\; \big\{ m ~|~\text{ for all} \; R, m \in \F(1) \,(\overline{R},R) \big\} \Big)
\end{align*}
where for any relation $R \coloneqq ((A_\mathsf{d},A_\mathsf{c}),R_A)$
we write $R_\mathsf{d}$ for $A_\mathsf{d}$ and $R_\mathsf{c}$ for
$A_\mathsf{c}$. We will employ a similar convention for Reynolds'
model. To define $\forall_n$ on a morphism $\eta : \F \to \Gcal$, we
put 
\begin{align*}
& \forall_n\,\eta(0) \; \overline{A} \coloneqq \Big(\big(\forall_n\,\F(0) \; \overline{A}, \forall_n\,\G(0) \; \overline{A}\big), \{ m \cdot 0 \}_{(\forall_n\,\F(0) \; \overline{A}) \to (\forall_n\,\G(0) \; \overline{A}})\Big)
\end{align*}

Here $m$ is any natural number realizing $\eta(0)~\overline{A}$. Crucial observations are
that all natural transformations are ``uniformly realized'' in the
sense that there is a natural number realizing each such
transformation, and since all PERs are defined to be realized by all
natural numbers, each is suitably uniform. In particular, if $\eta$
were not uniformly realized in the above sense then $\forall_n$ would
not be well-defined on morphisms. These observations can be used to
show that, in the category-theoretic setting (rather than the setting
of $\omega$-sets), the family of adjoints $\forall$ cannot exist
precisely because {\em ad hoc} natural transformations --- {\em i.e.},
natural transformations that are not uniformly realizable, even though
each of their components may indeed be realizable --- are not
excluded.
\end{example}	

\begin{example}[Reynolds' model, continued]\label{ex:rey2}
On the set level, the adjoint $\forall_n$ is defined as follows:
\begin{align*}
\forall_n \, \F(0) \, \overline{A} \coloneqq \Big\{ & f_0 : \Pi_{A:\U} \F(0)(\overline{A},A) \; \& \\
& f_1 : \Pi_{R : \mathsf{R}_0} \F(1)\,(\overline{\Eq \,A},R) \, (f_0(R_\mathsf{d}), f_0(R_\mathsf{d}))\Big\}
\end{align*}
On the relation level, we define $\forall_n \F(1) \, \overline{R}$ to
be the relation with domain $\forall_n\F(0) \, \overline{R_0}$ and
codomain $\forall_n\F(0) \, \overline{R_1}$ mapping
$((f_0,f_1),(g_0,g_1))$ to 
\[\Pi_{R : \mathsf{R}_0} \F(1)\,(\overline{R},R) \, (f_0(R_\mathsf{d}),g_0(R_\mathsf{c}))\]

To see that the above definition indeed gives a degeneracy-preserving
reflexive graph functor, fix $\overline{A}$. We want to show that the
two relations $\Eq \, (\forall_n \, \F(0) \, \overline{A})$ and
$\forall_n \, \F(1) \, \overline{\Eq(A)}$ are isomorphic. The domains
and codomains of these relations are all the same --- $\forall_n \,
\F(0) \, \overline{A}$ --- so we let both of the underlying maps of the
isomorphism be identities (as also required by the coherence condition
on the isomorphism and, independently, the definition of a relevant
isomorphism). Fix $((f_0,f_1),(g_0,g_1)) : (\forall_n \, \F(0) \,
\overline{A}) \times (\forall_n \, \F(0) \, \overline{A})$. We need
functions going back and forth between the types $\Id((f_0,f_1),(g_0,g_1))$ and
$\Pi_{R : \mathsf{R}_0} \F(1)\,(\overline{\Eq(A)},R) \,
(f_0(R_0),g_0(R_1))$. Such functions will automatically be mutually
inverse since the types in question are propositions.

Going from left to right is easy using $\Id$-induction and $f_1$. To
go from right to left, fix $\phi : \Pi_{R : \mathsf{R}_0}
\F(1)\,(\overline{\Eq(A)},R) \, (f_0(R_0),g_0(R_1))$. To show
$\Id((f_0,f_1),(g_0,g_1))$ it suffices to show $\Id(f_0,g_0)$ since
the type of $g_1$ (or $f_1$) is a proposition. By function
extensionality, it suffices to show pointwise equality between $f_0$
and $g_0$. So fix $B$. The only thing we can do with $\phi$ is to
apply it to $\Eq(B)$, which gives us $\phi(\Eq(B)) :
\F(1)\,(\overline{\Eq(A)},\Eq(B)) \, (f_0(B),g_0(B))$. The relation
$\F(1)\,(\overline{\Eq(A)},\Eq(B))$ is isomorphic to $\Eq\,
\F(0)\,(\overline{A},B)$ via
$\varepsilon_\F(\overline{A},B)^{-1}$. Applying
$\varepsilon_\F(\overline{A},B)^{-1}$ to $(f_0(B),g_0(B))$ and
$\phi(\Eq(B))$ thus gives us
$\Id\big(\varepsilon_\F(\overline{A},B)^{-1}_\top \, f_0(A),
\varepsilon_\F(\overline{A},B)^{-1}_\bot \, g_0(B)\big)$. The
coherence condition on $\varepsilon_\F$ tells us that the respective
images $\varepsilon_\F(\overline{A},B)_\top$ and
$\varepsilon_\F(\overline{A},B)_\bot$ of
$\varepsilon_\F(\overline{A},B)$ under the two face maps are the
identity on $\F(0)(\overline{A},B)$, and thus are
$\varepsilon_\F(\overline{A},B)^{-1}_\top$ and
$\varepsilon_\F(\overline{A},B)^{-1}_\bot$. This gives
$\Id(f_0(A),g_0(B))$ as desired.
\end{example}

\begin{example}[A categorical version of Reynolds' model, continued]\label{ex:crey2}
On the set level, the adjoint $\forall_n$ is defined as follows: 

\begin{align*}
\forall_n \, \F(0) \, \overline{A} \coloneqq \Big\{ & f_0 : \Pi_{A:\U} \F(0)(\overline{A},A) \; \& \\
& f_1 : \Pi_{R : \mathsf{R}_0} \F(1)\,(\overline{\Eq \,A},R) \, (f_0(R_\mathsf{d}), f_0(R_\mathsf{c})) \; \& \\
& \Pi_{i : \M(0)_1} \F(0)\big(\overline{\mathsf{id}_{\M(0)}(A)},i\big) \, f_0(i_\mathsf{d}) = f_0(i_\mathsf{c}) \Big\}
\end{align*}
The last condition says that $f_0$ is functorial in its argument, in the
sense that if $i$ is an isomorphism between two types $A,B : \Set_0$,
then $f_0(A)$ and $f_0(B)$ are suitably related via the obvious
isomorphism between $\F(0)\,(\overline{A},A)$ and
$\F(0)\,(\overline{A},B)$. This condition, which does not have an
analogue in the set-theoretic presentation of Reynolds' model, is
needed because we do not work with discrete domains (\emph{e.g.}, we
use $\F : \M^n \to \M$ rather than $\F : |\M|^n \to \M$), as is common
in other presentations of parametricity. A very similar condition does
appear, {\em e.g.}, in the definition of parametric limits for the
category of sets in \cite{dr04}. The analogous condition asserting the
functoriality of $f_1$ is automatically satisfied since the codomain
of $f_1$ is a proposition. On relations, we use the same
definition as in Example~\ref{ex:rey2}.
\end{example}

\begin{definition}
\emph{A $\lambda 2$-fibration} is a $\lambda^\to$-fibration $U : \mathcal{E} \to \mathcal{B}$ satisfying the following
properties:
\begin{enumerate}
\item For each $n$ in $\mathcal{B}$, the weakening functor induced by the first projection from $n + 1$ to $1$ has a right adjoint $\forall_n$.
\item Beck-Chevalley: for any $f : n \to m$ in $\mathcal{B}$ and object $X$ in $\E_m$, the canonical morphism below is an isomorphism:
\[\theta_\forall(f,X) : f^*(\forall_m(X)) \to \forall_n((f \times \id)^*(X))\]
\end{enumerate}
A $\lambda 2$-fibration is \emph{split} if it is a split $\lambda^\to$-fibration and the canonical morphism above is the identity.
\end{definition}

\noindent Seely \cite{see87} essentially showed the following:
\begin{theorem}[Seely]\label{prop:model}
Every split $\lambda 2$-fibration $U : \mathcal{E} \to \mathcal{B}$ gives a sound model of System F in which:
\begin{itemize}
\item every type context $\Gamma$ is interpreted as an object
  $\sem{\Gamma}$ in $\mathcal{B}$
\item every type $\Gamma \vdash T$ is interpreted as an object
  $\sem{\Gamma \vdash T}$ in the fiber over $\sem{\Gamma}$
\item every term context $\Gamma;\Delta$ is interpreted as an object
  $\sem{\Gamma \vdash \Delta}$ in the fiber over $\sem{\Gamma}$
\item every term $\Gamma;\Delta \vdash t : T$ is interpreted as a
  morphism $\sem{\Gamma;\Delta \vdash t : T}$ from
  $\sem{\Gamma;\Delta}$ to $\sem{\Gamma \vdash T}$ in the fiber over $\sem{\Gamma}$
\end{itemize}
\end{theorem}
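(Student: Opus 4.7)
The plan is to follow Seely's standard construction, adapted to our split setting. Since $U$ is split, reindexing $f^*$ is a genuine functor (not just pseudofunctor) and, crucially, the canonical Beck--Chevalley morphisms $\theta_\mathsf{1}$, $\theta_\times$, $\theta_\Rightarrow$, and $\theta_\forall$ are all identities. This is what allows the syntactic substitution operation on System~F to be modeled strictly by reindexing, avoiding the usual coherence headaches and giving equality (rather than mere isomorphism) on the nose.

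First I would fix the interpretation of the three syntactic categories by induction on their structure. A type context $\Gamma = \alpha_1,\ldots,\alpha_n$ is sent to the object $n \in \B$, with the empty context going to $0$ and extension corresponding to $(-) \times 1$. For a type $\Gamma \vdash T$, proceed by induction on $T$: the $i^\mathit{th}$ type variable is interpreted via $\theta_n(\pi_i)$, where $\pi_i : n \to 1$ is the $i^\mathit{th}$ projection obtained from the terminal and product structure on the base; an arrow type is sent to the exponential $\sem{\Gamma \vdash S} \Rightarrow_n \sem{\Gamma \vdash T}$ in $\E_n$; and $\forall \alpha. T$ is sent to $\forall_n(\sem{\Gamma, \alpha \vdash T})$. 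A term context $\Gamma; x_1 : T_1, \ldots, x_k : T_k$ is interpreted in $\E_n$ as the iterated product $\sem{T_1} \times_n \cdots \times_n \sem{T_k}$, starting from $\mathsf{1}_n$.

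Next I would prove the central bookkeeping result, a Type Substitution Lemma: for a type substitution $\sigma$ whose interpretation is a morphism $\sem{\sigma} : m \to n$ in $\B$, one has $\sem{T[\sigma]} = \sem{\sigma}^\ast \sem{T}$ as objects of $\E_m$, and similarly for term contexts. The proof is by induction on $T$: the variable case uses naturality of $\theta$ (part of the split generic object structure); the arrow and $\forall$ cases use that the canonical morphisms $\theta_\Rightarrow$ and $\theta_\forall$ are identities by splitness. With this lemma available, terms $\Gamma;\Delta \vdash t : T$ are interpreted as vertical morphisms in $\E_n$ by induction: variables are projections out of the product $\sem{\Delta}$, $\lambda$-abstraction and application use the cartesian closed structure of $\E_n$ via Lemma~\ref{lem:exponentials-choice}-style transposition, type abstraction $\Lambda \alpha. t$ uses transposition across the adjunction $(-)^\ast \dashv \forall_n$ induced by the weakening morphism $n \to n+1$, and type application $t[T']$ is obtained by reindexing $\sem{t}$ along the substitution morphism and then using the counit of this adjunction.

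The main obstacle, and the bulk of the verification, is soundness of the equational theory: the $\beta\eta$-rules for $\lambda$/application, the $\beta\eta$-rules for $\Lambda$/type-application, and the commuting-conversion rules for substitution. For the term-level rules this reduces to the standard universal properties of $\mathsf{1}_n$, $\times_n$, and $\Rightarrow_n$ in each fiber. For the type-level rules it reduces to the triangle identities of the adjunction $(-)^\ast \dashv \forall_n$ combined with the identities $\theta_\forall = \mathrm{id}$ and the corresponding identities for $\theta_\Rightarrow$, $\theta_\times$, $\theta_\mathsf{1}$, which together guarantee that $\beta$-reducing a type application $(\Lambda \alpha. t)[T']$ is interpreted as the same morphism as $t[\alpha := T']$. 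Since the statement attributes the result to Seely~\cite{see87}, the proof is in fact a transcription of his argument with the only novelty being that our definition of split $\lambda 2$-fibration packages the required Beck--Chevalley identities explicitly; I would cite Seely for the detailed verifications and indicate only where splitness is used.
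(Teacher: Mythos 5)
Your proposal is correct and matches the paper's treatment: the paper gives no proof of this theorem at all, simply deferring to Seely~\cite{see87}, and your sketch is precisely the standard Seely construction (contexts to base objects, types to fiber objects via the generic object, exponentials, and the adjoints $\forall_n$, with splitness making the substitution lemma hold on the nose) followed by the same citation for the detailed equational verifications. Nothing to correct.
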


\noindent A (not necessarily split) $\lambda 2$-fibration also gives a sound model of System F, due to the following:
\begin{lemma}
Every $\lambda 2$-fibration is equivalent to a split $\lambda 2$-fibration in a canonical way.
\end{lemma}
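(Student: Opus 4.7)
My plan is to bootstrap off of Lemma~\ref{lem:strict}. Given a $\lambda 2$-fibration $U : \mathcal{E} \to \mathcal{B}$, I first apply that lemma to its underlying $\lambda^\to$-fibration to obtain an equivalent split $\lambda^\to$-fibration $U' : \mathcal{E}' \to \mathcal{B}$ together with an equivalence of fibrations $E : \mathcal{E}' \to \mathcal{E}$ commuting with $U$ and $U'$ on the nose. In the standard Giraud-style strictification, an object of $\mathcal{E}'_n$ is a coherent family $\{X_f\}_{f : m \to n}$ of objects of $\mathcal{E}$, indexed by morphisms into $n$ in $\mathcal{B}$ and equipped with compatible vertical isomorphisms $g^*(X_f) \cong X_{f \circ g}$; reindexing along $h : k \to n$ in $\mathcal{E}'$ is then the strict precomposition $\{X_f\}_f \mapsto \{X_{h \circ f}\}_f$, which by design makes $\id^\star$ literally the identity and composition of reindexings strictly associative.

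Next I would transport the right adjoints $\forall_n$ of $\mathcal{E}$ to $\mathcal{E}'$. Concretely, for $X$ an object of $\mathcal{E}'_{n+1}$ represented by a family $\{X_f\}_{f : m \to n+1}$, I set
\[
\forall'_n(X)_f \;:=\; \forall_m\big(X_{f \times \id_1}\big)
\]
for each $f : m \to n$, where $f \times \id_1 : m + 1 \to n + 1$ arises from the product structure on $\mathcal{B}$; the required transition isomorphism $g^*(\forall'_n(X)_f) \cong \forall'_n(X)_{f \circ g}$ is built by composing the original Beck-Chevalley isomorphism $\theta_\forall(g, X_{f \times \id_1})$ with the transition isomorphism of $X$ along $g \times \id_1$. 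Functoriality of $\forall'_n$, and its fibrewise adjointness with the (now strict) weakening functor in $\mathcal{E}'$, follow from the corresponding statements for $\forall_n$ in $\mathcal{E}$ together with the naturality of $\theta_\forall$; the pointwise structure is simply inherited via $E$.

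The main obstacle will be the splitness of the Beck-Chevalley condition, i.e.\ that for every $h : k \to n$ in $\mathcal{B}$ the comparison map $h^\star(\forall'_n(X)) \to \forall'_k((h \times \id_1)^\star X)$ is literally the identity, not merely an isomorphism. Unfolding both sides by the definition of reindexing on families yields the same underlying family $\{\forall_m(X_{(h \circ f) \times \id_1})\}_{f : m \to k}$; what requires care is checking that the transition isomorphisms assembled on both sides agree, and this reduces to a pentagon-style coherence diagram built from $\theta_\forall$ and the strictification coherence of $X$, which commutes by the naturality of $\theta_\forall$ and the compatibility of the product structure on $\mathcal{B}$ with reindexing in $\mathcal{E}$. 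One finally verifies that $E$ transports $\forall'$ to $\forall$ up to the canonical isomorphism packaged by the strictification, establishing the required equivalence of $\lambda 2$-fibrations.
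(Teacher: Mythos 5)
Your proposal is correct and matches what the paper intends: the paper gives no explicit proof of this lemma, but it is clearly meant to follow the same Jacobs-style strictification (Lemma 5.2.4 of \cite{jac99}) already invoked for Lemma~\ref{lem:strict}, with the adjoints $\forall_n$ transported to the split replacement exactly as you do, via $\forall'_n(X)_f \coloneqq \forall_m(X_{f\times\id_1})$ and the Beck--Chevalley isomorphisms supplying the transition data. If anything, your final verification is slightly over-engineered: once both sides of the split Beck--Chevalley comparison are unfolded, the underlying families \emph{and} their transition isomorphisms are built from literally the same components, so the equality holds on the nose without appeal to a further coherence diagram (the coherence of $\theta_\forall$ is only needed to check that $\forall'_n(X)$ is a well-formed coherent family in the first place).
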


We now want to specify when a model of System F given by a $\lambda 2$-fibration is relationally parametric. If $\mathcal{R}$ is a cartesian closed reflexive graph category with isomorphisms, we denote by $\mathsf{F}(\mathcal{R})$ the $\lambda^\to$-fibration induced by $\R$ as in Theorem~\ref{lem:cartesian_split}. To formulate an abstract definition of a parametric model, we will appropriately relate a $\lambda 2$-fibration $U$ to $\mathsf{F}(\mathcal{R})$. To see how, we revisit the simplest model, namely the System F term model. In the $\lambda 2$-fibration $U_\mathit{term}$ corresponding to the term model, the fiber over $n \in \nat$ consists of types and terms with $n$ free type variables. Let $\mathcal{U}$ be the category consisting of closed System F types and terms between them. Then $\mathcal{U}$ induces a $\lambda^\to$-fibration, $U_\mathit{set}$, whose fiber over $n$ consists of functors $|\mathcal{U}|^n \to \mathcal{U}$ and natural transformations between them. 

A type $\overline{\alpha} \vdash T$ with $n$ free variables can now be seen as functor $|\mathcal{U}|^n \to \mathcal{U}$, and a term $\overline{\alpha}; x : S \vdash t : T$ as a natural transformation
between $S$ and $T$. We thus have a morphism of $\lambda^\to$-fibrations $\mu : U_\textit{term} \to U_{\textit{set}}$. However, unlike $U_{\textit{term}}$, $U_\textit{set}$ does not admit the family
of adjoints required to make it a $\lambda 2$-fibration. Still, we can view $U_\mathit{term}$ as a version of $U_\mathit{set}$ that ``enriches'' the functors and natural transformations with enough extra information to ensure that the desired adjoints exist: in this example, the information that the maps involved are not {\em ad hoc}, but come from syntax. Since these adjunctions are only applicable to non-empty contexts, no such ``enrichment'' should be necessary for objects and morphisms over the \emph{terminal} object. And indeed, the restriction
of $\mu$ to the fibers over the respective terminal objects is clearly
an equivalence. These observations echo those immediately following
Definition~\ref{def:rg}, and motivate our main definition:

\begin{definition}\label{def:parametric-model}
Let $\mathcal{R}$ be a cartesian closed reflexive graph category with
isomorphisms. A {\em parametric model of System F over $\mathcal{R}$}
is a $\lambda 2$-fibration $U$ together with a morphism $\mu : U \to \mathsf{F}(\mathcal{R})$ of $\lambda^\to$-fibrations whose restriction to the fibers of $U$ and $\mathsf{F}(\mathcal{R})$ over the 
terminal objects is full, faithful, and essentially surjective.
\end{definition}

Our main theorem shows that the definition of a parametric model is indeed sensible:
\begin{theorem}
Every parametric model of System F over a cartesian closed reflexive graph category $(\X,(\M,\I))$ with isomorphisms, as specified in Definition~\ref{def:parametric-model}, is a sound model in which:
\begin{itemize}
\item every type $\Gamma \vdash T$ can be seen as a face map- and degeneracy-preserving reflexive graph functor $\sem{\Gamma \vdash T} : \M^{|\Gamma|} \to \M$
\item every term $\Gamma; \Delta \vdash t : T$ can be seen as a face map- and degeneracy-preserving reflexive graph natural transformation $\sem{\Gamma; \Delta \vdash t:T} : \sem{\Gamma \vdash
    \Delta} \to \sem{\Gamma \vdash T}$, with the domain and codomain seen as reflexive graph functors into $\X$
\end{itemize}
\end{theorem}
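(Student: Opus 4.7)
The plan is to obtain a sound interpretation inside the $\lambda 2$-fibration $U$ via Seely's theorem and then push this interpretation forward along the morphism $\mu$ to $\mathsf{F}(\mathcal{R})$, whose fibers are by construction categories of face map- and degeneracy-preserving reflexive graph functors. First, by the preceding strictification lemma I may replace $U$ by an equivalent split $\lambda 2$-fibration, at which point Theorem~\ref{prop:model} yields a sound interpretation of System F in $U$: a context $\Gamma$ becomes an object $\sem{\Gamma}$ of the base $\mathcal{B}$ of $U$, a type $\Gamma \vdash T$ and term context $\Gamma \vdash \Delta$ become objects of the fiber $\mathcal{E}_{\sem{\Gamma}}$, and a term $\Gamma;\Delta \vdash t:T$ becomes a morphism $\sem{\Gamma \vdash \Delta} \to \sem{\Gamma \vdash T}$ in that fiber. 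Soundness of the resulting model is therefore inherited directly from Seely's theorem, settling the first part of the claim.

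To realize types and terms as reflexive graph functors and natural transformations, I transport the above interpretation along $\mu$. Being a morphism of $\lambda^\to$-fibrations, $\mu$ preserves terminal objects, split generic objects, and the iterated product structure $n{+}1 = n \times 1$ on the bases. Since the base categories of both $U$ and $\mathsf{F}(\mathcal{R})$ have object sets in bijection with $\nat$ under which $0$ is terminal and $1$ is generic, the base component of $\mu$ necessarily acts as the identity on objects, and so maps $\sem{\Gamma}$ to $|\Gamma|$. Applying the fiber component $\mu_{|\Gamma|} : \mathcal{E}_{|\Gamma|} \to (\M^{|\Gamma|} \to \M)$ to the type interpretation $\sem{\Gamma \vdash T}$ yields an object of $\M^{|\Gamma|} \to \M$, which by Definition~\ref{def:mn-to-m} is exactly a face map- and degeneracy-preserving reflexive graph functor $\M^{|\Gamma|} \to \M$. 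Likewise, $\mu_{|\Gamma|}$ sends $\sem{\Gamma;\Delta \vdash t:T}$ to a morphism in $\M^{|\Gamma|} \to \M$, and by the same definition this is precisely a face map- and degeneracy-preserving reflexive graph natural transformation from $\I \circ \sem{\Gamma \vdash \Delta}$ to $\I \circ \sem{\Gamma \vdash T}$, as required.

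The main technical obstacle is verifying that pushforward along $\mu$ is compatible with all of the syntactic constructs of System F under the Seely interpretation. For unit, product, and arrow types this is immediate from $\mu$ preserving terminal objects, products, and exponentials fiberwise, together with the fact that in the split case each of these constructions commutes with substitution on the nose; compatibility with type variables and simultaneous substitutions further reduces to $\mu$ preserving cartesian liftings, which is part of being a morphism of split fibrations. The one case requiring care is that of $\forall$-types: while $U$ admits the required right adjoints to weakening, $\mathsf{F}(\mathcal{R})$ need not, so $\mu$ cannot be expected to preserve them. This is not a problem, however, because the image $\mu(\sem{\Gamma \vdash \forall\alpha.T})$ is simply treated as an opaque object of $\M^{|\Gamma|} \to \M$; no additional structure is demanded by the statement of the theorem. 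I also note that the full/faithful/essentially surjective hypothesis on the terminal fibers from Definition~\ref{def:parametric-model} is not actually used in the present argument: it instead encodes the parametric content of the model by tying closed types and terms in $U$ to their reflexive-graph counterparts, and is what distinguishes a genuinely \emph{parametric} model from an arbitrary $\lambda 2$-fibration equipped with a morphism to $\mathsf{F}(\mathcal{R})$.
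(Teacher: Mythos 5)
Your proposal is correct and follows what is evidently the intended argument: the paper states this theorem without an explicit proof, but the surrounding development (Seely's theorem for split $\lambda 2$-fibrations, the strictification lemma, and the fact that the fibers of $\mathsf{F}(\mathcal{R})$ are by Definition~\ref{def:mn-to-m} exactly the categories of face map- and degeneracy-preserving reflexive graph functors and natural transformations) is set up precisely so that one interprets System F in $U$ via Seely and transports along $\mu$, as you do. Your closing observations --- that the base component of $\mu$ is forced to be the identity on objects by preservation of the terminal/generic/product structure, and that the full--faithful--essentially-surjective condition on terminal fibers is not consumed by this particular statement but rather carries the parametricity content of Definition~\ref{def:parametric-model} --- are both accurate.
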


\begin{theorem}[PER model]
Let $\mathcal{R}_\mathit{PER}$ be the cartesian closed reflexive
graph category with isomorphisms defined in
Examples~\ref{ex:per},~\ref{ex:per1}, and~\ref{ex:per-cont}. The
family of adjoints defined in Example~\ref{ex:per2} makes
$\mathsf{F}(\mathcal{R}_\mathit{PER})$ into a $\lambda
2$-fibration, and hence into a parametric model of
System F over $\mathcal{R}_\mathit{PER}$.
\end{theorem}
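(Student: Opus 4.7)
The plan is to verify that $\forall_n$ as defined in Example~\ref{ex:per2} equips the $\lambda^\to$-fibration $\mathsf{F}(\mathcal{R}_\mathit{PER})$ (obtained from Theorem~\ref{lem:cartesian_split}) with the additional structure needed to make it a $\lambda 2$-fibration, and then take $\mu$ to be the identity fibration morphism to recognize the result as a parametric model in the sense of Definition~\ref{def:parametric-model}.

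First I would check that $\forall_n \F$ is a well-formed object of $|\M|^n \to \M$. Since $\M = |\mathcal{R}_\mathit{PER}|$, all morphisms in $\M(l)$ are identities, so the reflexive graph functor laws degenerate into conditions on the object components. Face map preservation is immediate, because the domain and codomain of $\forall_n \F(1)\,\overline{R}$ are explicitly recorded as $\forall_n\F(0)\,\overline{R_\mathsf{d}}$ and $\forall_n\F(0)\,\overline{R_\mathsf{c}}$. Degeneracy preservation requires showing that the saturated predicate underlying $\forall_n\F(1)\,\overline{\Eq A}$ coincides on the nose with $\Delta_{\forall_n\F(0)\,\overline{A}}$; this uses the degeneracy preservation of $\F$ itself together with the saturation and proof-irrelevance properties of the PERs involved. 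Well-definedness of the action of $\forall_n$ on a morphism $\eta$ uses in an essential way that every such $\eta$ is uniformly realized, as discussed in Example~\ref{ex:per2}.

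Second, I would establish the adjunction $\pi^* \dashv \forall_n$, where $\pi : n+1 \to n$ is the first projection in $\ctx(\mathcal{R}_\mathit{PER})$. Unfolding Definition~\ref{def:subst}, a morphism $\pi^* \F \to \G$ is a reflexive graph natural transformation whose object component assigns, realizably and uniformly in $A$, a realizer of a map $\F(0)\,\overline{A} \to \G(0)(\overline{A},A)$, and whose relational component witnesses the expected preservation statement for every $R$. Since every such natural transformation is realized by a \emph{single} natural number, its content can be repackaged into an element of $\forall_n\G(0)\,\overline{A}$, yielding the transpose $\F \to \forall_n\G$; the inverse direction is a straightforward unpackaging. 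I expect this step --- simultaneously verifying the bijection, its naturality in $\F$ and $\G$, and the fact that it lifts to morphisms in $|\M|^n \to \M$ --- to be the main technical obstacle, but its shape is dictated by the very definition of $\forall_n\eta(0)$ in Example~\ref{ex:per2}, which was crafted so that a single uniform realizer witnesses both the object and the relational content of the transpose.

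Third, I would verify the Beck--Chevalley condition. For $\mathbf{F}:n\to m$ in $\ctx(\mathcal{R}_\mathit{PER})$ and $\G$ in $|\M|^{m+1}\to\M$, unfolding the precomposition with $\langle \F_0,\ldots,\F_{m-1}\rangle$ from Definition~\ref{def:subst} shows that $\mathbf{F}^*(\forall_m \G)$ and $\forall_n((\mathbf{F}\times\id)^*\G)$ produce equal saturated predicates on both the object and the relational level, so the canonical morphism is the identity and the fibration is split. Finally, since $\mathsf{F}(\mathcal{R}_\mathit{PER})$ is itself the target of the fibration morphism $\mu$ required by Definition~\ref{def:parametric-model}, we take $\mu$ to be the identity; its restriction to any fiber --- in particular the fiber over the terminal object $0$ --- is trivially full, faithful, and essentially surjective, completing the verification.
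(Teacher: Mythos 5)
Your outline is correct and follows exactly the route the paper intends: the paper states this theorem without a written proof, but the ingredients it sets up in Examples~\ref{ex:per}, \ref{ex:per1}, \ref{ex:per-cont}, and~\ref{ex:per2} --- face-map preservation by construction, on-the-nose identity extension for $\forall_n$ (forced since $\M = |\mathcal{R}_\mathit{PER}|$), the adjunction transpose obtained from the uniform realizer of a natural transformation, and the identity morphism $\mu$ --- are precisely the ones you assemble. The only simplification you could add is the paper's observation that $\langle \X(\fm_\bot),\X(\fm_\top)\rangle$ is faithful in this model, so degeneracy-preservation of the transposed natural transformations in your second step comes for free from face-map preservation.
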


\begin{theorem}[Reynolds' model]
Let $\mathcal{R}_\mathit{REY}$ be the reflexive graph category with
isomorphisms defined in Examples~\ref{ex:reynolds},~\ref{ex:rey1},
and~\ref{ex:rey-cont}. The family of adjoints defined in
Example~\ref{ex:rey2} makes $\mathsf{F}(\mathcal{R}_\mathit{REY})$
into a $\lambda 2$-fibration, and hence into a parametric model of System F over $\mathcal{R}_\mathit{REY}$.
\end{theorem}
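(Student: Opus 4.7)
The plan is to first invoke Theorem~\ref{lem:cartesian_split} on the cartesian closed reflexive graph category with isomorphisms $\mathcal{R}_\mathit{REY}$ assembled in Examples~\ref{ex:reynolds}, \ref{ex:rey1}, and~\ref{ex:rey-cont}, which immediately yields that $\mathsf{F}(\mathcal{R}_\mathit{REY})$ is a split $\lambda^\to$-fibration. It then suffices to verify that (a) the family $\forall_n$ from Example~\ref{ex:rey2} is right adjoint to the weakening functor $\pi^*$ induced by the first projection $\pi : n+1 \to n$ in $\ctx(\mathcal{R}_\mathit{REY})$, and (b) the Beck-Chevalley morphism $\theta_\forall$ is an isomorphism (indeed definitionally the identity). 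Taking $\mu$ in Definition~\ref{def:parametric-model} to be the identity on $\mathsf{F}(\mathcal{R}_\mathit{REY})$ --- which trivially restricts to an equivalence over the terminal context $0$ --- then exhibits the resulting $\lambda 2$-fibration as a parametric model of System~F over $\mathcal{R}_\mathit{REY}$.

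For (a), the first subtask is to check that each $\forall_n\F$ is a well-formed face map- and degeneracy-preserving reflexive graph functor from $\M^n$ to $\M$. Face-map preservation is immediate from the construction, since by definition the domain and codomain of $\forall_n\F(1)\,\overline{R}$ are $\forall_n\F(0)\,\overline{R_\mathsf{d}}$ and $\forall_n\F(0)\,\overline{R_\mathsf{c}}$. Functoriality of $\forall_n\F(l)$ on the relevant isomorphisms of $\M(l)$ is obtained by transporting the pair $(f_0,f_1)$ through the corresponding components of $\F$, using that morphisms in $\M(1)$ have identity face-map images. The delicate point, sketched in Example~\ref{ex:rey2}, is the construction of the degeneracy isomorphism $\varepsilon_{\forall_n\F}(\overline{A}) : \Eq(\forall_n\F(0)\,\overline{A}) \to \forall_n\F(1)\,(\overline{\Eq\,A})$: both of its underlying functions must be the identity on $\forall_n\F(0)\,\overline{A}$, and constructing the backward direction requires function extensionality, propositionality of the relation-component of $\F$, and the coherence condition on $\varepsilon_\F$ which forces the face-map images of $\varepsilon_\F(\overline{A},B)^{-1}$ to be identities.

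The adjunction itself is then routine. Given a morphism $\mu : \pi^*(\G) \to \F$ in $\M^{n+1}\to\M$, the transpose $\tilde\mu : \G \to \forall_n\F$ sends $g : \G(0)\,\overline{A}$ to the pair whose object-component is $\lambda A.\,\mu(0)(\overline{A},A)\,g$ and whose relation-component is $\lambda R.\,\mu(1)(\overline{\Eq\,A},R)\,(\id\,g)$; the latter is well-typed precisely because $\mu$ is degeneracy-preserving, and the impredicativity of $\Ucal$ guarantees that $\forall_n\F(0)\,\overline{A}$ lies in $\Ucal$ rather than being a proper class. The inverse transpose sends $\nu : \G \to \forall_n\F$ to the morphism whose value at $(\overline{A},A)$ on $g$ is the object-component of $\nu(0)\,\overline{A}\,g$ instantiated at $A$, and the triangle identities reduce to $\beta\eta$-equalities of $\lambda$-abstraction. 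For (b), both $\mathbf{F}^*(\forall_m\G)$ and $\forall_n((\mathbf{F} \times \id)^*(\G))$ unfold on objects to the same $\Pi$-types indexed by $\Ucal$ and $\mathsf{R}_0$ --- reindexing along $\mathbf{F}$ merely renames the first $n$ arguments of $\G$ --- so $\theta_\forall$ is definitionally the identity. The principal obstacle throughout is the construction of $\varepsilon_{\forall_n\F}$ in the degeneracy-preservation check; this is precisely the step where frameworks based on strict indexed or fibered functors fail, and where our explicit tracking of coherent isomorphisms $\varepsilon_{(-)}$ is indispensable.
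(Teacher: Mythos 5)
Your overall strategy is the one the paper (implicitly) takes: the paper gives no standalone proof of this theorem, relying instead on Theorem~\ref{lem:cartesian_split} for the $\lambda^\to$-structure and on the verification carried out inside Example~\ref{ex:rey2} for the only genuinely delicate point, namely that $\forall_n\F$ is degeneracy-preserving. You correctly identify that point and reproduce its ingredients (identity underlying maps, function extensionality, propositionality, and the coherence condition forcing the face-map images of $\varepsilon_\F(\overline{A},B)^{-1}$ to be identities), and taking $\mu = \id$ in Definition~\ref{def:parametric-model} is exactly how ``hence a parametric model'' is meant to be read.

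There is, however, one concrete error in your step (b). You claim $\theta_\forall$ is \emph{definitionally the identity} because ``reindexing along $\mathbf{F}$ merely renames the first $n$ arguments of $\G$.'' That is false in Reynolds' model, and for precisely the reason the whole framework exists. Unfolding the two sides on objects, the $f_1$-component of $\mathbf{F}^*(\forall_m\G)(0)\,\overline{A}$ quantifies over $\G(1)\big(\overline{\Eq(\F_i(0)\,\overline{A})},R\big)$, whereas the $f_1$-component of $\forall_n\big((\mathbf{F}\times\id)^*(\G)\big)(0)\,\overline{A}$ quantifies over $\G(1)\big(\overline{\F_i(1)(\Eq\,\overline{A})},R\big)$. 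The relations $\Eq(\F_i(0)\,\overline{A})$ and $\F_i(1)(\overline{\Eq\,A})$ are not identical --- they differ by the non-trivial isomorphism $\varepsilon_{\F_i}(\overline{A})$ whenever $\F_i$ involves a product or arrow --- so the two $\Sigma$-types are not definitionally equal. What is true, and what the definition of a (non-split) $\lambda 2$-fibration requires, is that $\theta_\forall$ is an isomorphism: it is built by transporting the $f_1$-component along $\G(1)$ applied to the relevant isomorphisms $\varepsilon_{\F_i}(\overline{A})$ (whose underlying maps are identities by coherence), with mutual inverses supplied by proof-irrelevance of the relation components. Since the theorem only asserts a $\lambda 2$-fibration, and the paper separately records that every $\lambda 2$-fibration is canonically equivalent to a split one, the statement survives --- but your justification of Beck--Chevalley as written would fail. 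A smaller slip: the well-typedness of the relation-component of the transpose $\tilde\mu$ rests on the degeneracy-preservation of $\G$ (to obtain the reflexivity witness in $\G(1)(\overline{\Eq\,A})$ from $\mathsf{refl}_g$ via $\varepsilon_\G$) together with face-map preservation of $\mu$, rather than on degeneracy-preservation of $\mu$ as you state.
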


\begin{theorem}[A categorical version of Reynolds' model]
Let $\mathcal{R}_\mathit{CREY}$ be the reflexive graph category with
isomorphisms defined in Examples~\ref{ex:reynolds},~\ref{ex:crey1}, and~\ref{ex:rey-cont}. The family of adjoints defined in Example~\ref{ex:crey2} makes $\mathsf{F}(\mathcal{R}_\mathit{CREY})$
into a $\lambda 2$-fibration, and hence into a parametric model of System F over $\mathcal{R}_\mathit{CREY}$.
\end{theorem}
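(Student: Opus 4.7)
The plan is to parallel the preceding theorem for Reynolds' model, since $\mathcal{R}_\mathit{CREY}$ differs from $\mathcal{R}_\mathit{REY}$ only in that $\M(0)$ and $\M(1)$ contain more isomorphisms; Theorem~\ref{lem:cartesian_split} already gives that $\mathsf{F}(\mathcal{R}_\mathit{CREY})$ is a split $\lambda^\to$-fibration, so the bulk of the work is to promote it to a $\lambda 2$-fibration and then exhibit it as parametric via the identity morphism.

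First, I will verify that the assignment $\forall_n\F$ of Example~\ref{ex:crey2} is a genuine object of $\M^n \to \M$. On objects the two face maps of $\forall_n\F(1)\,\overline{R}$ are by construction $\forall_n\F(0)\,\overline{R_\mathsf{d}}$ and $\forall_n\F(0)\,\overline{R_\mathsf{c}}$, giving face map preservation on the nose. For the degeneracy, one must show that $\Eq(\forall_n\F(0)\,\overline{A})$ is isomorphic to $\forall_n\F(1)\,\overline{\Eq(A)}$ with underlying face map components equal to identities. The construction of the isomorphism is essentially the one used in Example~\ref{ex:rey2}: going left-to-right is by $\Id$-induction, and going right-to-left uses function extensionality and $\varepsilon_\F(\overline{A},B)^{-1}$ together with its coherence condition to recover pointwise equality of the underlying maps $f_0$ and $g_0$. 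The novelty here is the extra functoriality clause: one must additionally check that the higher witnesses (functoriality of $f_0$ and the identity-on-relation witness) cohere, but since these components live in propositions (the codomain of $f_1$ is a proposition, and equality in a function type into a set-valued family whose outputs are propositions is itself a proposition), the argument goes through. Next, I will verify functoriality of $\forall_n$ on morphisms of $\M^n \to \M$, for which the crucial input is that natural transformations in $\M^{n+1} \to \M$ automatically respect the isomorphism-action on the bound variable (this is naturality restricted to the $(n{+}1)$-th component).

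Second, I will verify that $\forall_n$ is right adjoint to the weakening functor induced by the first projection $n+1 \to n$ in $\ctx(\R)$. The unit and counit are defined exactly as in the set-theoretic case: the counit at $\F$ evaluates the universally-quantified function at the fresh type variable, and the unit at $\G$ sends an element to the constant family. Naturality and the triangle identities reduce to pointwise computations on elements, and the extra functoriality condition on $f_0$ is straightforwardly preserved by weakening (which ignores the new variable) and by evaluation. I will then verify the Beck--Chevalley condition for $\forall$: for a morphism $\mathbf{F} : n \to m$ in $\ctx(\R)$, the canonical map $\mathbf{F}^\ast(\forall_m\G) \to \forall_n((\mathbf{F}\times\id)^\ast\G)$ is in fact an \emph{equality}, because $\forall$ is defined by the same formula regardless of the context and the formula involves only quantification over $\U$ and $\mathsf{R}_0$, which are untouched by precomposition in the first $n$ coordinates. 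This gives \emph{splitness} of the $\lambda 2$-fibration, i.e.\ the split $\lambda 2$-fibration in the sense of the definition preceding Seely's theorem.

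Third, having shown $\mathsf{F}(\mathcal{R}_\mathit{CREY})$ is a split $\lambda 2$-fibration, I take $U := \mathsf{F}(\mathcal{R}_\mathit{CREY})$ and $\mu := \id$. The restriction of $\id$ to the fibers over the terminal object $0$ is trivially full, faithful, and essentially surjective, so the conditions of Definition~\ref{def:parametric-model} are met. The hard part is the degeneracy-preservation isomorphism for $\forall_n\F$: in $\mathcal{R}_\mathit{CREY}$ the chosen relevant isomorphisms on sets are no longer just identities (as in Reynolds' model), so one must simultaneously track that the constructed isomorphism $\varepsilon_{\forall_n\F}$ lies in $\M(1)$, which requires that its underlying face map components are \emph{identities} on $\forall_n\F(0)\,\overline{A}$; this in turn is what forces the underlying maps of the isomorphism to be chosen as identities in the proof sketched above, and is precisely where the extra functoriality clause in the definition of $\forall_n\F(0)\,\overline{A}$ pays off, guaranteeing that the transported $f_0$ still satisfies the functoriality condition and therefore identifies with the original $f_0$ rather than merely being equal to it up to a nontrivial isomorphism.
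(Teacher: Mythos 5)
Your overall strategy --- reuse the constructions of Examples~\ref{ex:rey2} and~\ref{ex:crey2}, check that $\forall_n\F$ is a genuine object of $\M^n \to \M$ (face maps preserved on the nose, degeneracy preserved up to a coherent isomorphism whose underlying maps are identities, with the relevant codomains being propositions), establish the adjunction pointwise, and then take $U := \mathsf{F}(\mathcal{R}_\mathit{CREY})$ with $\mu := \id$ --- is the same route the paper takes, where the verification is left implicit in Example~\ref{ex:crey2}. Your identification of where the extra functoriality clause earns its keep (naturality of transformations in the bound coordinate, and the fact that the analogous condition on $f_1$ is automatic because its codomain is a proposition) matches the paper's remarks.

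There is, however, one genuine error: your Beck--Chevalley argument. You claim that for $\mathbf{F} : n \to m$ in $\ctx(\R)$ the canonical map $\mathbf{F}^*(\forall_m\G) \to \forall_n((\mathbf{F}\times\id)^*\G)$ is an \emph{equality} because ``the formula involves only quantification over $\U$ and $\mathsf{R}_0$, which are untouched by precomposition in the first $n$ coordinates.'' This overlooks that the formula for $\forall_m\G(0)$ also mentions $\overline{\Eq\,A}$ in the type of the component $f_1$. After reindexing, the $f_1$-component of $\mathbf{F}^*(\forall_m\G)(0)\,\overline{A}$ quantifies over $\G(1)\big(\overline{\Eq(\F_i(0)\,\overline{A})},R\big)$, whereas the $f_1$-component of $\forall_n((\mathbf{F}\times\id)^*\G)(0)\,\overline{A}$ quantifies over $\G(1)\big(\overline{\F_i(1)(\Eq\,\overline{A})},R\big)$. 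Since the $\F_i$ are degeneracy-preserving only up to the isomorphisms $\varepsilon_{\F_i}$, these two types are isomorphic but not identical, so the canonical map is a coherent isomorphism rather than the identity. Consequently $\mathsf{F}(\mathcal{R}_\mathit{CREY})$ is a $\lambda 2$-fibration but not a \emph{split} one --- which is precisely why the theorem claims only the former and why the paper separately records that every $\lambda 2$-fibration is canonically equivalent to a split one. The fix is local: verify the Beck--Chevalley condition as an isomorphism by inserting the maps induced by the $\varepsilon_{\F_i}$ (invertible and coherent with the face maps, hence relevant isomorphisms), rather than asserting literal equality. As written, your argument establishes neither the equality nor the isomorphism, since it rests on a false identification of the two types.
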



\section{A Proof-Relevant Model of Parametricity}\label{sec:proof-relevant}
We now describe a proof-relevant version of Reynolds' model, in which
witnesses of relatedness are themselves related. The construction
of such a model is the subject of \cite{orsanigo_thesis}, but the
development there seems to contain a major technical gap. Specifically, it is unclear how to prove the $\forall$-case in Lemma 9.4 in \cite{orsanigo_thesis}, due to the fact that when types are interpreted as discrete functors $|\X|^n \to \X$, the reindexing of a degeneracy-preserving functor might not be degeneracy-preserving. We already observed this in the introduction but this issue is not addressed in \cite{orsanigo_thesis} and the proof of the lemma is not given there. Since this lemma is crucial to the soundness of the interpretation, it is unknown whether the result of \cite{orsanigo_thesis} can be salvaged as-is. For this reason, we only reuse the main ideas of \cite{orsanigo_thesis} for handling the higher dimensional structure and otherwise proceed independently.

\begin{example}\label{ex:pf-rel}
We use the same ambient category as in Example~\ref{ex:reynolds} and
reuse the (internal) category $\Set$ of types. The category
$\mathsf{R}$ of relations is almost the same as in
Example~\ref{ex:reynolds}, except that relations are now
proof-relevant, \emph{i.e.}, $\mathsf{R}_0 \coloneqq \Sigma_{A,B:\Set}
A \times B \to \Ucal$. As before, we have two face maps $\fm_\top,\fm_\bot : \rel \to \set$ projecting out the domain and codomain of a relation and a degeneracy $\Eq : \set \to \rel$ constructing the equality relation. Given relations $R$ on $A$ and $B$ and $S$ on $C$ and $D$, to relate two witnesses $p : R(a,b)$ and $q : S(c,d)$ we should know {\em a priori} how $a$ relates to $c$ and $b$ to $d$. This
motivates defining the category $\mathsf{2R}$ of \emph{2-relations}, whose objects $Q$ are tuples $(Q^{0\top}, Q^{1\top}, Q^{0\bot}, Q^{1\bot})$ of relations forming a square

\begin{center}
\scalebox{0.9}{
\begin{tikzpicture}
\node (N0) at (0,1.5) {$A$};
\node (N1) at (0,0) {$C$};
\node (N2) at (2,1.5) {$B$};
\node (N3) at (2,0) {$D$};
\draw[->] (N0) -- node[left]{$Q^{1\top}$} (N1);
\draw[->] (N0) -- node[above]{$Q^{0\top}$} (N2);
\draw[->] (N1) -- node[below]{$Q^{0\bot}$} (N3);
\draw[->] (N2) -- node[right]{$Q^{1\bot}$} (N3);
\end{tikzpicture}}
\end{center}
together with a $\Prop$-valued predicate (also denoted $Q$) on the type of tuples of the form $((a,b,c,d),(p,q,r,s))$, where $p : Q^{0\top}(a,b)$, $q : Q^{1\top}(a,c)$, $r : Q^{0\bot}(c,d)$, and $s : Q^{1\bot}(b,d)$. This gives four face maps $\fm_{0\top},\fm_{0\bot},\fm_{1\top},\fm_{1\bot}: \mathsf{2R} \to \mathsf{R}$, one for each edge. We have two degeneracies from $\mathsf{R}$ to $\mathsf{2R}$, one replicating a relation $R$ horizontally and one vertically. More precisely, given $R$, we obtain the 2-relation $\mathsf{Eq}_=(R)$ by placing $R$ on top and bottom, with equality relations $\Eq(R_0)$ and $\Eq(R_1)$ as vertical edges, and mapping $\big((a,b,a,b),(p,-,r,-)\big)$ to $\Id(p,r)$. The symmetric version $\mathsf{Eq}_\parallel(R)$ places $R$ on left and right and assumes equality relations as horizontal edges. But we also have two other ways of turning a relation $R$ into a 2-relation: the functor $\cm_\top$ places $R$ on top and left, and $\cm_\bot(R)$ places $R$ on bottom and right, filling the remaining edges with equalities. The functors
$\cm_\top$ and $\cm_\bot$ are called \emph{connections}. We define terminal objects, products, exponentials, and isomorphisms in the obvious way.

Just like in Reynolds' model, we have $\fm_\star~\circ~\Eq = \id$. We also have further equalities:
\begin{itemize}
\item $\fm_{0 \star} \circ \Eq_= = \id$
\item $\fm_{1\star} \circ \Eq_= = \Eq \circ \fm_\star$ \; for a fixed $\star \in \two$
\item $\fm_{1 \star} \circ \Eq_\parallel = \id$
\item $\fm_{0 \star} \circ \Eq_\parallel = \Eq \circ \fm_\star$ \; for a fixed $\star \in \two$
\item $\fm_{l \star} \circ \cm_\star = \id$ \; for $l \in \{0,1\}$ and a fixed $\star \in \two$
\item $\fm_{l \overline{\star}} \circ \cm_\star = \Eq \circ \fm_{\overline{\star}}$ \; for $l \in \{0,1\}$ and a fixed $\star \in \two$
\end{itemize} 
Moreover, the compositions $\Eq_= \circ \Eq$, $\Eq_\parallel \circ \Eq$, $\cm_\top \circ \Eq$, $\cm_\bot \circ \Eq$ are all naturally isomorphic.
 
The structure described above induces two $\lambda^\to$-fibrations of interest: the first one is induced by combining the first two levels, the categories $\set$ and $\mathsf{R}$, into a cartesian closed
reflexive graph category with isomorphisms $\mathsf{R}_\mathit{PREY}$; this is the fibration $\mathsf{F}(\mathsf{R}_\mathit{PREY})$. We recall that the objects of $\mathsf{F}(\mathsf{R}_\mathit{PREY})$
over $n$ are pairs $\{\F(l) : \M(l)^n \to \M(l) \}_{l \in \{0,1\}}$ of functors that commute with the two face maps from $\mathsf{R}$ to $\set$ on the nose, as well as with the degeneracy $\Eq$ up to a
suitably coherent natural isomorphism $\varepsilon_\F$. The morphisms are pairs $\{\eta(l) : \F(l) \to \Gcal(l)\}_{l \in \{0,1\}}$ of natural transformations that respect both face maps from $\mathsf{R}$ to
$\set$ and the degeneracy $\Eq$. 

The second fibration, which we call $\mathsf{F}_{\mathit{2D}}$, is induced in much the same way, but taking into account all three levels. This means that the objects over $n$ are triples $\{\F(l) : \M(l)^n \to \M(l) \}_{l \in \{0,1,2\}}$ of functors that commute with \emph{all} face maps -- the two from from $\mathsf{R}$ to $\set$ as well as the four from $\mathsf{2R}$ to $\mathsf{R}$ -- on the nose and \emph{all} degeneracies $\Eq,\Eq_=,\Eq_\parallel$ and connections $\cm_\top$, $\cm_\bot$ up to suitably coherent natural isomorphisms. Here ``suitably coherent'' means taking into account not only the equality $\fm_\star \circ \Eq = \id$ but the additional equalities involving $\Eq_=,\Eq_\parallel,\cm_\top,\cm_\bot$ as well. For example, the image of the isomorphism witnessing the commutativity of $\F$ with $\Eq_=$ under the face map $\fm_{1\star}$ must be precisely $\varepsilon_\F \circ \fm_\star$. Analogously, the morphisms are triples $\{\eta(l) : \F(l) \to \Gcal(l)\}_{l \in \{0,1,2\}}$ of natural transformations that respect all face maps, degeneracies, and connections. We have the obvious forgetful morphism of $\lambda^\to$-fibrations from $\mathsf{F}_{\mathit{2D}}$ to $\mathsf{F}(\mathsf{R}_\mathit{PREY})$ that only retains the structure pertaining to levels 0 and 1.

The fibration $\mathsf{F}_{\mathit{2D}}$ admits a family
of adjoints to weakening functors as follows. The adjoint $\forall_n
\,\F(0) \; \overline{A}$ is the type
\begin{align*}
\big\{ & f_0 : \Pi_{A:\Ucal} \F(0)(\overline{A},A) \, \& \\
& f_1 : \Pi_{R : \mathsf{R}_0} \F(1)(\overline{\Eq \,A},R) \, (f_0(R_\mathsf{d}), f_0(R_\mathsf{d})) \, \& \\
& f_2 : \Pi_{Q : \mathsf{2R}_0} \F(2)(\overline{\Eq_=(\Eq(A))} ,Q) \, 
\big(\big(f_0 \,  Q^{0\top}_\mathsf{d}, f_0 \,  Q^{0\top}_\mathsf{c}, f_0 \,  Q^{1\top}_\mathsf{c}, f_0 \,  Q^{0 \bot}_\mathsf{c}), \\
& \;\;\;\;\;\;\;\;\;\;\;\;\;\;\;\;\;\;\;\;\;\;\;\;\;\;\;\;\;\;\;\;\;\;\;\;\;\;\;\;\;\;\;\;\;\;\;\;\;\;\;\;\;\big(f_1 \,  Q^{0\top},  f_1 \,  Q^{1\top}, f_1 \,  Q^{0\bot}, f_1 \, Q^{1\bot}\big)\big) \, \& \\
& \Pi_{i : M(0)_1} \F(0)(\overline{\mathsf{id}_{\M(0)}(A)},i) \, 
f_0(i_\mathsf{c}) = f_0(i_\mathsf{d}) \,  \& \\
& \Pi_{i : M(1)_1} \F(1) \big(\overline{\mathsf{id}_{\M(1)}(\mathsf{Eq} \,
A)},i\big) \, \big(f_0 \,  (i_\mathsf{d})_\mathsf{d}, f_0 \,  (i_\mathsf{d})_\mathsf{d}\big) \, f_1(i_\mathsf{d}) = f_1(i_\mathsf{c}) \big\}
\end{align*}

In the type of $f_2$, we could have just as well used any of the other functors $\Eq_\parallel$, $\cm_\top$, $\cm_\bot$ instead of $\Eq_=$ since as observed above, their compositions with $\Eq$ are all naturally isomorphic. We next define $\forall_n \, \F(1) \, \overline{R}$ to be the relation with domain $\forall_n \, \F(0) \, \overline{R_\mathsf{d}}$ and codomain $\forall_n \, \F(0) \, \overline{R_\mathsf{c}}$ mapping $((f_0,f_1,f_2),(g_0,g_1,g_2))$ to
\begin{align*}
\big\{ & \phi_{\color{white} \Eq_=} : \Pi_{R:\mathsf{R}_0} \F(1)\, (\overline{R},R) \, (f_0(R_\mathsf{d}),g_0(R_\mathsf{c})) \,  \& \\
& \phi_{\Eq_=} : \Pi_{Q : \mathsf{2R}_0} \F(2)\, (\overline{\Eq_=\,R} ,Q) \,
\big(\big(f_0 \,  Q^{0\top}_\mathsf{d}, f_0 \,  Q^{0\top}_\mathsf{c}, g_0 \,  Q^{1\top}_\mathsf{c}, g_0 \, Q^{0\bot}_\mathsf{c}), \\
& \;\;\;\;\;\;\;\;\;\;\;\;\;\;\;\;\;\;\;\;\;\;\;\;\;\;\;\;\;\;\;\;\;\;\;\;\;\;\;\;\;\;\;\;\;\;\;\;~\big(f_1 \,  Q^{0\top},  \phi \,  Q^{1\top}, g_1 \,  Q^{0\bot}, \phi \,  Q^{1\bot}\big)\big) \, \& \\
& \phi_{\Eq_\parallel} : \Pi_{Q : \mathsf{2R}_0} \F(2)\, (\overline{\Eq_\parallel\,R} ,Q) \, \big(\big(f_0 \,  Q^{0\top}_\mathsf{d}, g_0 \,  Q^{0\top}_\mathsf{c}, f_0 \,  Q^{1\top}_\mathsf{c}, g_0 \, Q^{0\bot}_\mathsf{c}), \\
& \;\;\;\;\;\;\;\;\;\;\;\;\;\;\;\;\;\;\;\;\;\;\;\;\;\;\;\;\;\;\;\;\;\;\;\;\;\;\;\;\;\;\;\;\;\;~\big(\phi \,  Q^{0\top},  f_1 \,  Q^{1\top}, \phi \,  Q^{0\bot}, g_1 \,  Q^{1\bot}\big)\big) \, \& \\
& \phi_{\cm_\top} : \Pi_{Q : \mathsf{2R}_0} \F(2)\, (\overline{\cm_\top\,R} ,Q) \,
\big(\big(f_0 \,  Q^{0\top}_\mathsf{d}, g_0 \,  Q^{0\top}_\mathsf{c}, g_0 \,  Q^{1\top}_\mathsf{c}, g_0 \, Q^{0\bot}_\mathsf{c}), \\
& \;\;\;\;\;\;\;\;\;\;\;\;\;\;\;\;\;\;\;\;\;\;\;\;\;\;\;\;\;\;\;\;\;\;\;\;\;\;\;\;\;\;\;\;\;\;~\big(\phi \,  Q^{0\top},  \phi \,  Q^{1\top}, g_1\,  Q^{0\bot}, g_1 \,  Q^{1\bot}\big)\big) \, \& \\
& \phi_{\cm_\bot} : \Pi_{Q : \mathsf{2R}_0} \F(2)\, (\overline{\cm_\bot\,R} ,Q) \,
\big(\big(f_0 \,  Q^{0\top}_\mathsf{d}, f_0 \,  Q^{0\top}_\mathsf{c}, f_0 \,  Q^{1\top}_\mathsf{c}, g_0 \, Q^{0\bot}_\mathsf{c}), \\
& \;\;\;\;\;\;\;\;\;\;\;\;\;\;\;\;\;\;\;\;\;\;\;\;\;\;\;\;\;\;\;\;\;\;\;\;\;\;\;\;\;\;\;\;\;\;~\big(f_1 \,  Q^{0\top},  f_1 \,  Q^{1\top}, \phi\,  Q^{0\bot}, \phi \,  Q^{1\bot}\big)\big) \, \& \\
& \Pi_{i : \M(1)_1} \F(1) (\overline{\mathsf{id}_{\M(1)}(R)},i) \,  \big(f_0 \, (i_\mathsf{d})_\mathsf{d}, g_0 \,  (i_\mathsf{d})_\mathsf{c}\big) \, \phi_1(i_\mathsf{d}) = \phi_1(i_\mathsf{c}) \,  \big\}
\end{align*}
The component $\phi_{\Eq_=}$ asserts that $\phi$ appropriately interacts with the degeneracy $\Eq_=$ and similarly for the analogous components $\phi_{\Eq_\parallel}$, $\phi_{\cm_\top}$, $\phi_{\cm_\bot}$.
We define $\forall_n \, \F(2) \, \overline{Q}$ to be the 2-relation with underlying tuple of relations 
\[\big(\forall_n \F(1) \, \overline{Q^{0\top}}, \forall_n \, \F(1) \, \overline{Q^{1\top}}, \forall_n \, \F(1) \, \overline{Q^{0\bot}}, \forall_n \, \F(1) \, \overline{Q^{1\bot}}\big)\]
mapping $\big(((f_0,f_1,f_2),(g_0,g_1,g_2),(h_0,h_1,h_2),(l_0,l_1,l_2)),((\phi_0,\ldots),(\phi_1,\ldots),(\phi_2,$\linebreak
$\ldots),(\phi_3,\ldots))\big)$ to the proposition
\begin{align*} 
& \Pi_{Q : \mathsf{2R}_0} \F(2)\, (\overline{Q} ,Q) \, \big((f_0 \, Q^{0\top}_\mathsf{d}, g_0 \, Q^{0\top}_\mathsf{c}, h_0 \, Q^{0\bot}_\mathsf{d}, l_0 \, Q^{0 \bot}_\mathsf{c}), \\
& \;\;\;\;\;\;\;\;\;\;\;\;\;\;\;\;\;\;\;\;\;\;\;\;\;\;\;\;\;\;~ (\phi_0 \, Q^{0\top}, \phi_1 \, Q^{1\top}, \phi_2 \, Q^{0\bot}, \phi_3 \, Q^{1\bot})\big)
\end{align*}
\end{example}
Finally, unlike the frameworks
~\cite{dr04,param_johann,gfs16,mr92,rr94,jac99}, our definition of a
parametric model recognizes the above proof-relevant model:

\begin{theorem}[Proof-relevant model]
The family of adjoints defined in Example~\ref{ex:pf-rel} makes
$\mathsf{F}_{\mathit{2D}}$ into a $\lambda 2$-fibration, and hence into a parametric model of
System F over $\mathsf{R}_\mathit{PREY}$.
\end{theorem}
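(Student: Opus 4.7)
The plan is to verify, step by step, each property required of a parametric model, namely that $\mathsf{F}_{\mathit{2D}}$ is a $\lambda 2$-fibration and that the forgetful morphism $\mu : \mathsf{F}_{\mathit{2D}} \to \mathsf{F}(\mathsf{R}_\mathit{PREY})$ restricts to an equivalence on the fibers over the terminal objects. That $\mathsf{F}_{\mathit{2D}}$ is a split $\lambda^\to$-fibration follows by essentially the same argument as Theorem~\ref{lem:cartesian_split}, applied to the three-level proof-relevant structure: terminal objects, products, and exponentials are computed componentwise at each level, and the coherence isomorphisms $\varepsilon_\F$ required at the two new faces of the 3-dimensional structure (namely those witnessing compatibility with $\Eq_=$, $\Eq_\parallel$, $\cm_\top$, $\cm_\bot$) are constructed by the same method used in Lemmas~\ref{lem:terminal-choice}, \ref{lem:products-choice}, and \ref{lem:exponentials-choice}.

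The heart of the proof is verifying that the family $\forall_n$ of Example~\ref{ex:pf-rel} genuinely gives a right adjoint to the weakening functor $\pi_n^*$ induced by the first projection $n+1 \to n$. First I would check well-definedness of $\forall_n$ on objects: given a face-, degeneracy- and connection-preserving reflexive graph functor $\F : \M^{n+1} \to \M$, the triple $(\forall_n\F(0), \forall_n\F(1), \forall_n\F(2))$ defined in the example must itself commute with all face maps on the nose and with all degeneracies/connections up to suitably coherent isomorphism. The face conditions are immediate from the projection structure; the degeneracy conditions at levels 0 and 1 follow precisely the pattern of Example~\ref{ex:rey2}, but the proof-irrelevance shortcut used there is no longer available. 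Instead, to show that $\forall_n\F(1)(\overline{\Eq\,A})$ is isomorphic to $\Eq(\forall_n\F(0)\,\overline{A})$, I would construct the isomorphism explicitly by using the $\Eq_=$-component $f_2$ together with the coherence equality $\fm_{1\star}\circ\Eq_= = \Eq\circ\fm_\star$ to compare two witnesses $f_1$, $g_1$ pointwise at arbitrary relations, and then use the induction principle for $\Id$-types at the 2-level to conclude pointwise equality; the functoriality clause $f_1(i_\mathsf{d}) = f_1(i_\mathsf{c})$ is what guarantees that this pointwise comparison glues into an actual identification. The analogous comparisons at the 2-level, involving $\Eq_\parallel$, $\cm_\top$, and $\cm_\bot$, follow by the same pattern but using $\phi_{\Eq_\parallel}$, $\phi_{\cm_\top}$, $\phi_{\cm_\bot}$ instead.

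Next I would verify the adjunction itself. Given a morphism $\eta : \pi_n^*(\G) \to \F$ in the fiber over $n+1$, its transpose $\overline{\eta} : \G \to \forall_n\F$ sends an argument $\overline{A}$ to the tuple $\bigl(A \mapsto \eta(0)(\overline{A},A), R \mapsto \eta(1)(\overline{\Eq\,A},R), Q \mapsto \eta(2)(\overline{\Eq_=\Eq\,A},Q)\bigr)$ together with the functoriality clauses, which hold precisely because $\eta$ is a natural transformation (the fact that $\eta(0)$ is natural in the last variable is exactly what the clause $\Pi_{i:\M(0)_1}\F(0)(\overline{\mathsf{id}},i)\,f_0(i_\mathsf{d}) = f_0(i_\mathsf{c})$ records). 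Face- and degeneracy-preservation of $\overline{\eta}$ at each level translates directly to the corresponding preservation properties of $\eta$. Conversely, transposing back is done by evaluating at the appropriate argument. One then checks that these transpositions are mutually inverse and natural in both $\G$ and $\F$. Splitness gives the Beck-Chevalley condition on the nose: reindexing along a morphism $\mathbf{F} : n \to m$ in $\ctx(\R)$ commutes with $\forall$ because both operations are defined by purely ``point-wise'' formulas in $\M^{(-)}$ that are preserved by precomposition with $\langle \F_0, \ldots, \F_{m-1}\rangle$.

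Finally, for the parametric model requirement, the morphism $\mu : \mathsf{F}_{\mathit{2D}} \to \mathsf{F}(\mathsf{R}_\mathit{PREY})$ is the forgetful functor discarding the component at level 2 (and the associated $\Eq_=$-, $\Eq_\parallel$-, $\cm_\top$-, $\cm_\bot$-coherences). It is visibly a morphism of $\lambda^\to$-fibrations since terminal objects, products, and exponentials are defined componentwise. Over the terminal object $0$ in $\ctx(\R)$, an object is a tuple $(\F(0), \F(1), \F(2))$ of elements of $\Set$, $\mathsf{R}$, $\mathsf{2R}$ respectively, together with the coherence isomorphisms; forgetting the top level is essentially surjective because for any pair $(X, R)$ over $0$ in $\mathsf{F}(\mathsf{R}_\mathit{PREY})$ one can lift to a triple by taking the 2-relation $\Eq_=(R)$, and fullness and faithfulness over $0$ follow because morphisms in these fibers over $0$ have no genuine content at level 2 beyond equalities forced by the 2-degeneracy structure. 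I expect the main obstacle to be the well-definedness of $\forall_n$ at the 2-level, specifically verifying that $\forall_n\F(2)\,\overline{Q}$ really is a 2-relation compatible with all four face maps and with both degeneracies $\Eq_=$, $\Eq_\parallel$ and both connections $\cm_\top$, $\cm_\bot$; this requires carefully tracking which of the five components $\phi, \phi_{\Eq_=}, \phi_{\Eq_\parallel}, \phi_{\cm_\top}, \phi_{\cm_\bot}$ is invoked on each edge of the 2-relation and checking that the resulting tuple satisfies the coherence equalities listed in Example~\ref{ex:pf-rel}.
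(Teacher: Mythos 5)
Your overall decomposition is reasonable, and you in fact treat the adjunction verification in more detail than the paper does: the paper's proof sketch is devoted almost entirely to the equivalence-on-terminal-fibers conditions of Definition~\ref{def:parametric-model}, leaving the $\lambda 2$-structure implicit. Your faithfulness/fullness discussion, while vaguer than the paper's, gestures at the same mechanism the paper makes precise: $\eta(2)$ is forced by the equation $\eta(2) \circ_{\rel(2)} \varepsilon^=_\F = \varepsilon^=_\G \circ_{\rel(2)} \Eq_=(\eta(1))$, and conversely this formula always yields a legitimate morphism because the level-2 predicates are $\Prop$-valued, so only face-map preservation needs checking.

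There is, however, a concrete error in your essential surjectivity step. You propose to lift a pair $(X,R)$ over $0$ directly to the triple $(X,R,\Eq_=(R))$. This triple is \emph{not} an object of $\mathsf{F}_{\mathit{2D}}$ over the terminal object: objects there must commute with all four face maps $\fm_{0\top},\fm_{0\bot},\fm_{1\top},\fm_{1\bot}$ \emph{on the nose}, which over $n=0$ forces all four edges of the 2-relation at level 2 to equal the relation at level 1. But by the listed equality $\fm_{1\star}\circ\Eq_= = \Eq\circ\fm_\star$, the vertical edges of $\Eq_=(R)$ are $\Eq(R_\mathsf{d})$ and $\Eq(R_\mathsf{c})$, which are not equal to $R$ unless $R$ is itself (on the nose) an equality relation. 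The missing idea is an intermediate isomorphism: first observe that any $(\F(0),\F(1))$ over $0$ is isomorphic to $(\F(0),\Eq(\F(0)))$ via $(\mathsf{id},\varepsilon_\F)$ --- where the coherence condition $\rel(\fm_\star)\circ\varepsilon_\F = \id$ is exactly what makes this transformation face-map preserving --- and only then lift to $\big(\F(0),\Eq(\F(0)),\Eq_=(\Eq(\F(0)))\big)$, all four of whose faces are $\Eq(\F(0))$. Without this detour your candidate lift simply fails the strict face-map conditions, so essential surjectivity is not established as written.
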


\begin{proof}[Proof sketch]
Faithfulness follows because having $\eta(0),\eta(1)$ fixed, there is a unique way to define $\eta(2)$: since $\eta$ has to respect the degeneracy $\Eq_=$, we must have
\[ \eta(2) \circ_{\rel(2)} \varepsilon^=_\F = \varepsilon^=_\G \circ_{\rel(2)} \Eq_=(\eta(1)) \]
where $\varepsilon^=_\F$, $\varepsilon^=_\G$ are the natural isomorphisms witnessing the fact that $\F$, $\G$ by assumption preserve $\Eq_=$ (we could have used any of the other functors $\Eq_\parallel,\cm_\top,\cm_\bot$ as well). This gives at most one possible value for $\eta(2)$. Fullness follows since the triple $\{\eta(l)\}_{l \in \{0,1,2\}}$ with $\eta(2)$ as given above indeed respects all face maps, degeneracies, and connections (in fact it is only necessary to check the respecting of face maps since the predicates at level 2 are proof-irrelevant). Finally, essential surjectivity follows from the fact that the reflexive graph functor $(\F(0),\F(1))$ is isomorphic to the reflexive graph functor $(\F(0),\Eq(\F(0)))$ via the reflexive graph natural transformation $(\mathsf{id},\varepsilon_\F)$ (the fact that this transformation is face-map preserving again uses the coherence $\rel(\fm_\star) \circ \varepsilon_\F = \id$). But $(\F(0),\Eq(\F(0)))$ clearly belongs to the image since it can be extended \emph{e.g.}, to the triple $\big(\F(0),\Eq(\F(0)),\Eq_=(\Eq(\F(0)))\big)$.         
\end{proof}

\section{Discussion}
We can now be more specific about how our approach compares to the external approaches in \cite{gfs16,jac99,mr92,rr94}, all of which are based on a reflexive graph of $\lambda 2$-fibrations. The definition in \cite{gfs16} appears to be too restrictive: it requires a comprehension structure that, \emph{e.g.}, the $\lambda 2$-fibration corresponding to Reynolds' model does not admit. In addition, none of these frameworks seem to recognize the $\lambda$-fibration corresponding to the proof-relevant model as parametric, for the following reason: it is unclear how to define the family of adjoints for
the second fibration (called $r$ in \cite{jac99}) of ``heterogeneous'' reflexive graph functors in a way that is compatible with the adjoint structure on the original $\lambda 2$-fibration. This is because unlike in the proof-irrelevant case, the definition of $\forall_n \, \F(1)$ now has conditions such as the one witnessed by $\phi_=$ which are only meaningful for ``homogeneous'' reflexive graph functors, \emph{i.e.}, those where the domain and codomain of $\F(1)(\overline{R})$ are given by the \emph{same} functor $\F(1)$, albeit applied to different arguments ($\overline{R_\mathsf{d}}$ vs. $\overline{R}_\mathsf{c}$). Our definition does not rely on or require two compatible adjoint structures, which is why we are indeed able to recognize the proof-relevant model as parametric.

We indicate three directions for future work. Readers interested in \emph{applications of parametricity} will notice that we do not require conditions such as (op)cartesianness or fullness of certain
maps or well-pointedness of certain categories. This follows the spirit of \cite{jac99}, where the notion of \emph{parametricity} pertains to the suitable interaction with (what we call) face maps and
degeneracies. Specific \emph{applications} such as establishing the Graph Lemma and the existence of initial algebras are left for another occasion. Readers fond of \emph{type theory} might wonder about
possible models expressed in the \emph{intensional} version of dependent type theory. Although currently there are no well-known models for which the latter would be the right choice of meta-theory, that might change with more research into higher notions of parametricity. Finally, readers familiar with \emph{cubical sets} no doubt recognized the structure of sets, relations, and 2-relations with face maps,
degeneracies, and connections from the last section as the first few levels of the cubical hierarchy, and wonder whether one can formulate the analogous notion of $2,3, \ldots$-parametricity using this hierarchy. We conjecture the answer to be a \emph{YES!} and plan to pursue this question in future work.

\vspace*{0.05in}

\noindent
{\bf Acknowledgments}
This research is supported by NSF awards 1420175 and 1545197. We thank Steve Awodey and Peter Dybjer for helpful discussions. We also thank the anonymous referee who independently suggested the formulation of Reynolds' model in Example~\ref{ex:rey1}, which appeared in our earlier preprint \cite{preprint}.

\bibliographystyle{plain}
\bibliography{references}

\end{document}